\def\phi{\varphi}
\def\hat{\widehat}
\def\tilde{\widetilde}
\def\bar{\overline}
\def\cal{\mathcal}
\def\kappa{\varkappa}
\def\frak{\mathfrak}
\newtheorem{theorem}{\bf \indent Theorem}[section]
\newtheorem{proposition}{\bf \indent Proposition}[section]
\newtheorem{lemma}{\bf \indent Lemma}[section]
\newtheorem{corollary}{\bf \indent Corollary}[section]
\theoremstyle{remark}
\newtheorem{definition}{\bf \indent Definition}[section]
\numberwithin{equation}{section}
\def\con{\mathop{\mathstrut\rm Con}\nolimits}
\def\Con{\mathop{\mathstrut\rm CON}\nolimits}
\def\consc{\mathop{\mathstrut\textsc{Con}}\nolimits}
\newcommand{\low}{\mathopen\downarrow\,}
\newcommand{\dda}{\mathord{\mbox{\makebox[0pt][l]{\raisebox{-.4ex}
                           {$\downarrow$}}$\downarrow$\,}}}
\newcommand{\appmap}[3]{\mbox{$#1 \colon #2 \trianglelefteq #3$}}
\newcommand{\cfrel}[3]{\mbox{$#1 \colon #2 \bowtie #3$}}
\newcommand{\fun}[3]{\mbox{$#1 \colon #2 \rightarrow #3$}}
\newcommand{\set}[2]{\mbox{$\{\,#1 \mid #2 \,\}$}}
\newcommand{\fsubset}{\subseteq_\mathrm{fin}}
\newcommand{\pow}[1]{\mathcal{P}(#1)}
\newcommand{\powf}[1]{\mathcal{P}_{\mathrm{fin}}(#1)}
\newcommand{\bA}{\mathbb{A}}
\newcommand{\DD}{\mathbb{D}}
\newcommand{\EE}{\mathbb{E}}
\newcommand{\UU}{\mathbb{U}}
\newcommand{\CCC}{\cal{C}}
\newcommand{\DDD}{\cal{D}}
\newcommand{\EEE}{\cal{E}}
\newcommand{\FFF}{\cal{F}}
\newcommand{\FF}{\frak{F}}
\newcommand{\HF}{\frak{H}}
\newcommand{\LF}{\frak{L}}
\newcommand{\UF}{\frak{U}}
\newcommand{\VF}{\frak{V}}
\newcommand{\XF}{\frak{X}}
\newcommand{\YF}{\frak{Y}}
\newcommand{\ZF}{\frak{Z}}
\newcommand{\bt}{\mathbf{t}}
\newcommand{\bM}{\mathbf{M}}
\newcommand{\bT}{\mathbf{T}}
\newcommand{\bC}{\mathbf{C}}
\newcommand{\INF}{\mathbf{INF}}
\newcommand{\aINF}{\mathbf{aINF}}
\newcommand{\sINF}{\mathbf{sINF}}
\newcommand{\asINF}{\mathbf{asINF}}
\newcommand{\DOM}{\mathbf{DOM}}
\newcommand{\aDOM}{\mathbf{aDOM}}
\newcommand{\CFA}{\mathbf{CFA}}
\newcommand{\tCFA}{\mathbf{tCFA}}
\newcommand{\CFAM}{\mathbf{CFA}_{\bM}}
\newcommand{\tCFAM}{\mathbf{tCFA}_{\bM}}
\def\Id{\mathop{\mathstrut\rm Id}\nolimits}
\def\ID{\mathop{\mathstrut\mathcal{I}}}
\def\sp{\mathop{\mathstrut\rm sp}}
\def\st{\mathop{\mathstrut\rm st}}
\def\mmodels{\mathrel {||}\joinrel \Relbar}
\begin{document}

\title{Domains, Information Frames, Rough Sets:\\ An Equivalence of Categories
\thanks{This research has partially been supported by DFG grant no. 549144494.}}

\author{Dieter Spreen\\
 Department of Mathematics, University of Siegen\\
 {\small
spreen@math.uni-siegen.de}
}

\date{}
\maketitle

\begin{abstract}
A generalization of Scott's information systems~\cite{sco82} is presented that captures exactly all continuous domains. The global consistency predicate in Scott's definition is relativized. Now, for every atomic statement, there is a consistency predicate that states which finite sets of statements express information that is consistent with the given statement.
The category of information frames is shown to be equivalent to the category of domains. Moreover, the relationship with CF-approximation spaces introduced by Wu and Xu~\cite{wx23} is studied. The corresponding category is also shown to be equivalent with the category of information frames. This research achieves a refinement of the equivalence result of Wu and Xu of the category of CF-approximation spaces with the category of domains. 
\end{abstract}

\section{Introduction}\label{sect-intro}

Research in domain theory started with the work of Dana Scott~\cite{sc72,sc73,sc76,sc80,sc82} and, independently, Yuri L.\ Ershov~\cite{er71,er73,er75,er76}.  D.\ Scott
was interested in constructing a natural mathematical model for the type-free $\lambda$-calculus, whereas Y. L.\ Ershov wanted to develop a theory of partial computable functionals of finite type. Today, the field is a thriving branch of mathematics with applications in theoretical computer science, logic, and topology. In theoretical computer science it is particularly used for providing functional programming languages with a semantics that is independent of implementations. For this reason, classes of domains have been studied that are closed under the function space construction, that is, the space of domain operation respecting maps. As was shown by Jung~\cite[Corollary 10]{ju89,ju90} there are two maximal such classes.

Ershov's approach to domain theory is based on topology, whereas Scott introduced domains by means of partial orders. Both approaches are equivalent. Here, we follow the latter one: A domain is a partial order $(D, \sqsubseteq)$ so that least upper bounds of directed subsets exist. In addition, a domain is required to contain a basis, that is, a subset coming with an interpolative transitive binary relation $\ll$ compatible with the partial order such that every domain element $x$ can be obtained as least upper bound of the base elements $b \ll x$. The interpolation condition is a strengthening of density. Sets $B$ coming with an interpolative transitive binary relation are called abstract bases.

A basis  turns out to be a skeleton of a domain: every domain is isomorphic to the round ideal completion of an abstract basis~\cite[Proposition 2.2.25]{aj94}. In  other words: Domains can be represented by abstract bases. Here, by representation of domains any class of structures is meant so that every domain is isomorphic to a family of sets generated by a structure of the given class and ordered by set inclusion.

In \cite{wx23}, G. Wu and L. Xu provide a representation of domains which is based on concepts from rough set theory. It generalizes the round ideal construction. They introduce generalized approximation spaces with consistent family of finite subsets, as well as the associated morphisms: CF-approximable relations.  The central result says that every such space generates a domain and conversely, up to isomorphism, each domain can be obtained in this way. 

As mentioned already, there are two maximal classes of domains closed under the function space construction. One of them is the class of L-domains. In order to present an easy to understand representation of these domains,  the present author~\cite{sp21}  introduced L-information frames (called information frames in \cite{sp21}). An  L-information frame consists of a set $P$ of  atomic propositions, a family $(\con_{p})_{p \in P}$ of finite subsets of atomic propositions, a family $(\vdash_{p})_{p \in P}$ of relations with $\vdash_{p} \subseteq  \con_{p} \times P$, and a dense transitive binary relation $R$ on $P$. Each set $\con_{p}$ can be thought of as being the collection of those finite sets of atomic propositions that are consistent with proposition $p$, and the relation $\vdash_{p}$ is an entailment relation for the (rudimentary) logic determined by proposition $p$. These relations are required to be sound, and closed under the rules Weakening, Interpolation and Cut. Moreover, if $pRq$, then the logic determined by $q$ is a conservative extension of the logic determined by $p$.

Information frames introduced in the present paper are defined like L-information frames, but the conservativity requirement  is omitted. As we shall see, they represent exactly all domains: states of the logic, that is, finitely consistent and entailment-closed sets of atomic propositions, form a domain with respect to set inclusion, and, conversely, up to isomorphism every such domain can be obtained in this way. Approximable mappings will be introduced as appropriate morphisms, and it will be shown that the category of information frames with approximable mappings is equivalent to the category of domains with Scott continuous functions, that is, maps respecting directed least upper bounds.  

Furthermore, this paper examines the connection between information frames and the spaces introduced by Wu and Xu. We show that from any given information frame a generalized approximation space with a consistent family of finite subsets can be constructed and that conversely, any such space generates an information frame. The constructions will allow us to derive the equivalence of the category of information frames with approximable mappings and the category of generalized approximation spaces with a consistent family of finite subsets and  CF-approximable relations.

The constructions of Wu and Xu, of domains from generalized approximate spaces with consistent families of finite subsets and vice versa, are thus each split into two constructions with information frames as an intermediate structure. Each of the new constructions is very natural and interesting in its own right.

Natural requirements for information frames are presented so that the domains constructed from them are algebraic or pointed. The conditions are reproduced by the constructions in the opposite direction. In the case of generalized approximation spaces with a consistent set of finite subsets, an analogous procedure is followed.

Wu and Xu's theorem is then a consequence of the results in this paper. Due to the way we proceed with the special cases just mentioned, analogous equivalence results arise for these cases.

The paper is organized as follows: In Section~\ref{sect-dom} some notations are established and definitions and basic results of domain theory are reviewed. Section~\ref{sect-info} deals with information frames: the definition is given and the constructions of a domain from a given information frame and vice versa are presented. 

In Section~\ref{sect-appmap}, approximable mappings are introduced. They are the morphisms between information frames. As usual, the morphisms between domains are Scott-continuous functions. The constructions given in the previous section are extended to these morphisms and it is shown that the functors thus obtained between the category of information frames and approximable mappings and the category of domains and Scott-continuous functions establish an equivalence between these categories.

Section~\ref{sect-rset} contains basic definitions from rough set theory and the definition of generalized approximation spaces with consistent families of finite subsets, in short CF-approximation spaces. It is shown that each of these spaces leads to an information frame and vice versa. 

Finally, in Section~\ref{sect-CFapp} CF-approximable relations are introduced, the morphisms between CF-approximation spaces. The constructions of the previous section are extended to the morphisms and it is shown that the functors thus obtained establish an equivalence between the category of CF-approximation spaces and CF-approximable relations and the category of information frames with approximable mappings.

\section{Domains: basic definitions and results}\label{sect-dom}

For any set $A$, we write $X \fsubset A$ to mean that $X$ is finite subset of $A$. The collection of all subsets of $A$ will be denoted by $\pow{A}$ and that of all finite subsets by $\powf{A}$.

Let $\DD = (D, \sqsubseteq)$ be a poset. $\DD$ is \emph{pointed} if it contains a least element $\bot$. For an element $x \in D$, $\low x$ denotes the principal ideal generated by $x$, i.e., $\low x = \set{y \in D}{y \sqsubseteq x}$. A subset $S$ of $D$ is \emph{directed}, if it is nonempty and every pair of elements in $S$ has an upper bound in $S$. $\DD$ is a \emph{directed-complete partial order} (\emph{dcpo}), if every directed subset $S$ of $D$ has a least upper bound $\bigsqcup S$ in $D$.

Assume that $x, y$ are elements of a poset $D$. Then $x$ is said to \emph{approximate} $y$, written $x \ll y$, if for any directed subset $S$ of $D$ the least upper bound of which exists in $D$, the relation $y \sqsubseteq \bigsqcup S$ always implies the existence of some $u \in S$ with $x \sqsubseteq u$. Moreover, $x$ is \emph{compact} if $x \ll x$.  A subset $B$ of $D$ is a \emph{basis} of $\DD$, if for each $x \in D$ the set $\dda\!_B x = \set{u \in B}{u \ll x}$ contains a directed subset with least upper bound $x$. Note that the set of compact elements of $D$ is included in every basis of $\DD$.  A directed-complete partial order $\DD$ is said to be \emph{continuous} (or a \emph{domain}) if it has a basis and it is called \emph{algebraic} (or an \emph{algebraic domain}) if its compact elements form a basis. Standard references for domain theory and its applications are \cite{aj94,sh94,ac98,gh03}.

\begin{lemma}\label{lem-preordprop}
In a poset $\DD$ the following statements hold for all $u, x, y, z \in D$: \begin{enumerate}
\item\label{lem-preordprop-0} The approximation relation $\ll$ is transitive.
\item\label{lem-preordprop-1} $x \ll y \Rightarrow x \sqsubseteq y$.
\item\label{lem-preordprop-2} $u \sqsubseteq x \ll y \sqsubseteq z \Rightarrow u \ll z$.
\item\label{lem-preordprop-k} If $x \sqsubseteq y$, and  one of $x, y$ is compact, then  $x \ll y$.
\item\label{lem-preordprop-4} If $\DD$ has a least element $\bot$, then $\bot $ is compact.
\item\label{lem_preordprop-5} If $\DD$ is a domain with basis $B$, and $M \fsubset D$, then
\begin{equation*}
M \ll x \Rightarrow (\exists v \in B) M \ll v \ll x,
\end{equation*}
where $M \ll x$ means that $m \ll x$, for any $m \in M$.

\end{enumerate}
\end{lemma}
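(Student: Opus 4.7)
The plan is to dispatch items 1--5 by direct manipulation of the definition of $\ll$ and to devote the bulk of the work to the finitary interpolation statement in item 6. For transitivity, if $x \ll y \ll z$ and $S$ is directed with existing lub satisfying $z \sqsubseteq \bigsqcup S$, apply $y \ll z$ to produce $s \in S$ with $y \sqsubseteq s$, so that $y \sqsubseteq \bigsqcup S$, and then apply $x \ll y$ to produce $s' \in S$ with $x \sqsubseteq s'$. The implication $x \ll y \Rightarrow x \sqsubseteq y$ is obtained by evaluating the definition on the directed singleton $\{y\}$. The sandwich property is proved by feeding any directed $S$ with $z \sqsubseteq \bigsqcup S$ into the definition of $x \ll y$ via $y \sqsubseteq z \sqsubseteq \bigsqcup S$, and propagating the resulting inequality $x \sqsubseteq s$ down to $u$. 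The compactness-induced approximation of item 4 follows mechanically by chaining $x \sqsubseteq x \ll x \sqsubseteq y$ or $x \sqsubseteq y \ll y \sqsubseteq y$ through the sandwich property; and $\bot$ is compact because any directed set is nonempty and $\bot$ lies below each of its elements.

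Item 6 is the main obstacle; I bootstrap it from a singleton interpolation step. Fix a directed $S \subseteq \dda\!_B x$ with $\bigsqcup S = x$, and for each $s \in S$ fix a directed $R_s \subseteq \dda\!_B s$ with $\bigsqcup R_s = s$. Set
\[
 M' = \set{a \in B}{(\exists s \in S)\; a \ll s}.
\]
I then plan to show that $M'$ is directed with $\bigsqcup M' = x$: given $a_1, a_2 \in M'$ witnessed by $s_1, s_2 \in S$, take $s \in S$ above both $s_i$ (directedness of $S$), infer $a_i \ll s$ via the sandwich property, use $a_i \ll s = \bigsqcup R_s$ to find $r_i \in R_s$ with $a_i \sqsubseteq r_i$, and take an upper bound $r \in R_s$; this $r$ lies in $M'$ (via $s$) and dominates both $a_i$. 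The equality $\bigsqcup M' = x$ is immediate from $R_s \subseteq M'$ for every $s \in S$ together with $\bigsqcup S = x$. Then $m \ll x = \bigsqcup M'$ yields $a \in M'$ with $m \sqsubseteq a$ and some $s \in S$ with $a \ll s$; the sandwich property upgrades this to $m \ll s$, giving the required singleton interpolant $s \in B$ with $m \ll s \ll x$.

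To obtain the full finitary case, apply the singleton interpolation to each $m \in M$ to produce $v_m \in B$ with $m \ll v_m \ll x$. Since $v_m \ll x = \bigsqcup S$, pick $s_m \in S$ with $v_m \sqsubseteq s_m$; the directedness of $S$ applied to the finitely many $s_m$ yields a single $v \in S$ above them all, which lies in $B$ and satisfies $v \ll x$ by virtue of $v \in \dda\!_B x$. The sandwich property applied to $m \ll v_m \sqsubseteq v$ delivers $m \ll v$ for each $m \in M$, while the empty case $M = \emptyset$ is handled by picking $v$ to be any element of $S$. The expected difficulty is concentrated in the directedness of $M'$ and the equality $\bigsqcup M' = x$; everything else is formal manipulation of the definition of $\ll$.
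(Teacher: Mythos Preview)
Your argument is correct in all parts; the treatment of items 1--5 is routine, and your bootstrap for item~6 via the auxiliary directed set $M' = \set{a \in B}{(\exists s \in S)\; a \ll s}$ is the standard proof of the interpolation law. The paper does not supply its own proof of this lemma but simply cites \cite[Lemma~2.2.5]{aj94}, whose argument is essentially the one you give.
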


Property~\ref{lem_preordprop-5} is known as the \emph{interpolation law}. For a proof see \cite[Lemma 2.2.5]{aj94}.

\begin{definition}
Let $\DD$ and $\DD'$ be posets. A function $\fun{f}{D}{D'}$ is \emph{Scott continuous} if for any directed subset $S$ of $D$ with existing least upper bound, 
\[
\bigsqcup f[S] = f(\bigsqcup S).
\]
\end{definition}

With respect to the pointwise order the set $[D \to D']$ of all Scott continuous functions between two dcpo's $\DD$ and $\DD'$ is a dcpo again. Observe that it need not be continuous even if $\DD$ and $\DD'$ are. This is the case, however, if $\DD'$ is an L-domain \cite{aj94}.

\begin{definition}
A pointed domain $\DD$ is an \emph{L-domain}, if each pair $x, y \in D$ bounded above by $z \in D$ has a least upper bound $x \sqcup^z y$ in $\low z$.
\end{definition}
Note that in \cite{gh03} pointedness is not required.

As has been shown by Jung \cite{ju89}, the category $\mathbf{L}$ of L-domains is one of the two maximal Cartesian closed full subcategories of the category $\DOM_{\perp}$ of pointed domains and Scott continuous functions. The one-point domain is the terminal object in these categories and the categorical product $\DD \times \EE$  of two domains $\DD$ and $\EE$ is the Cartesian product of the underlying sets  ordered coordinatewise.

\section{Information frames}\label{sect-info}

In this section we introduce information frames and study their relationship with domains. 

An information frame consists of a Kripke frame $(A, R)$, the nodes of which are also called tokens. Associated with each node $i \in A$ is a consistency predicate $\con_{i}$ classifying the finite sets of tokens which are consistent with respect to node $i$, and an entailment relation $\vdash_{i}$ between $i$-consistent sets and tokens.

The conditions that have to be satisfied are grouped. There are requirements which consistency predicate and entailment relation of each single node have to meet, and conditions that specify their interplay for nodes related to each other by the accessibility relation. 

\begin{definition}\label{dn-infofr}
Let $A$ be a set,  $(\con_i)_{i \in A}$ be a family of subsets of $\powf{A}$, and $(\vdash_i)_{i \in A}$ be a family of relations $\vdash_i  \subseteq \con_i \times A$. For $i, j \in A$
set
\begin{equation*}
iRj \Leftrightarrow \{ i \} \in \con_{j}.
\end{equation*}
Then $\bA = (A, (\con_i)_{i \in A}, (\vdash_i)_{i \in A})$ is an \emph{information frame} if the following conditions hold, for all $a \in A$ and all finite subsets $X, Y$ of $A$:
\begin{enumerate}
\item Local conditions, for every $i \in A$:

\begin{enumerate}
\item\label{dn-infofr-1}
$\{i\} \in \con_i$ \hfill (self consistency),

\item\label{dn-infofr-2}
$Y \subseteq X \wedge X \in \con_i \Rightarrow Y \in \con_i$ \hfill (consistency preservation),

\hspace{-\leftmargin}
and, defining $X \vdash_i Y$ to mean that $X \vdash_i b$, for all $b \in Y$,

\item\label{dn-infofr-4}
$X \in \con_{i}\mbox{} \wedge X \vdash_i Y \Rightarrow Y \in \con_i$ \hfill (soundness),

\item\label{dn-infofr-5}
$X, Y \in \con_i\mbox{} \wedge Y \supseteq X \wedge X \vdash_i a \Rightarrow Y \vdash_i a$ \hfill (weakening),

\item\label{dn-infofr-6}
$X \in \con_i\mbox{} \wedge X \vdash_i Y \wedge Y \vdash_i a \Rightarrow X \vdash_i a$ \hfill (cut),

\end{enumerate}

\item Global conditions, for all $i, j \in A$

\begin{enumerate}

\item\label{dn-infofr-7}
$i R j \Rightarrow \con_i \subseteq \con_j$ \hfill (consistency transfer),

\item\label{dn-infofr-9}
$i R j \wedge X \in \con_i \mbox{} \wedge X \vdash_i a \Rightarrow X \vdash_j a$ \hfill (entailment transfer),

\item\label{dn-infofr-11}
$X \vdash_i Y \Rightarrow (\exists e \in A) (\exists Z \in \con_{e}) X \vdash_i (\{ e \} \cup Z) \wedge Z \vdash_{e} Y$ \hfill (interpolation).

\end{enumerate}

\end{enumerate}
\end{definition}

All requirements are very natural. Note that from Condition~\ref{dn-infofr}\eqref{dn-infofr-4}, that is, soundness, it particularly follows that for $i, j \in A$ and $X \in \con_{j}$,
\begin{equation}\label{eq-entcon}
X \vdash_{j} i \Rightarrow i R j.
\end{equation}
By Condition~\eqref{dn-infofr-2}, $\emptyset \in \con_{i}$, for all $i \in A$. Hence, Requirement~\eqref{dn-infofr-4} holds if $Y = \emptyset$. Moreover, in this case, also \eqref{dn-infofr-5} is true and \eqref{dn-infofr-6} is a consequence of \eqref{dn-infofr-5}. 

Sometimes a stronger version of Cut is needed which reverses the Interpolation Axiom.

\begin{lemma}\label{lem-strong6}
Let $\bA$ be an information frame. Then the following rule holds, for all $a, i, j \in A$, $X \in \con_{i}$ and $Y \in \con_{j}$,
\[
X \vdash_{i} (\{j\} \cup Y)  \wedge Y \vdash_{j} a \Rightarrow X \vdash_{i} a.
\]
\end{lemma}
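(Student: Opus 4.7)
The plan is to transfer the entailment $Y \vdash_j a$ from node $j$ to node $i$ using entailment transfer, and then apply Cut at node $i$. The bridge between the two nodes will be provided by the hypothesis $X \vdash_i j$, which forces $jRi$.

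Concretely, I would proceed as follows. First, note that $X \vdash_i (\{j\} \cup Y)$ means in particular $X \vdash_i j$, so by Soundness (\ref{dn-infofr}\eqref{dn-infofr-4}) applied to the singleton $\{j\}$, we get $\{j\} \in \con_i$. By the definition of $R$ given in Definition \ref{dn-infofr}, this is precisely the statement $jRi$. Now the global conditions become available: Consistency transfer (\ref{dn-infofr}\eqref{dn-infofr-7}) gives $\con_j \subseteq \con_i$, and Entailment transfer (\ref{dn-infofr}\eqref{dn-infofr-9}) applied to the hypotheses $Y \in \con_j$ and $Y \vdash_j a$ yields
\[
Y \vdash_i a.
\]

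Finally, from $X \vdash_i (\{j\} \cup Y)$ we also obtain $X \vdash_i Y$ (this is just the restriction of the definition $X \vdash_i Z$ meaning $X \vdash_i b$ for every $b \in Z$ to the subset $Y$). Combined with $X \in \con_i$ and the derived $Y \vdash_i a$, Cut (\ref{dn-infofr}\eqref{dn-infofr-6}) delivers $X \vdash_i a$, as required.

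There is no real obstacle here; the only mildly delicate point is recognising that the hypothesis $X \vdash_i j$ is exactly what is needed to unlock the accessibility relation $jRi$ via the definition of $R$, which is what lets the global axioms do their work. The rest is a routine application of the local axioms Soundness and Cut together with the global transfer axioms.
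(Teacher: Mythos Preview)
Your proof is correct and follows essentially the same route as the paper's: from $X \vdash_i j$ obtain $\{j\} \in \con_i$ by soundness, hence $jRi$; then use entailment transfer to get $Y \vdash_i a$, and finish with Cut at node $i$. The only difference is that you spell out the auxiliary facts (consistency transfer, and the restriction $X \vdash_i Y$) a bit more explicitly than the paper does.
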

\begin{proof}
Since $X\vdash_{i} j$, it follows from Axiom~\ref{dn-infofr}\eqref{dn-infofr-4} that $\{j\} \in \con_{i}$, whence $j R i$ by \eqref{eq-entcon}. As a consequence of Axioms~\ref{dn-infofr}\eqref{dn-infofr-9} we therefore have that $Y \vdash_{i} a$. Now, we can apply Axiom~\ref{dn-infofr}\eqref{dn-infofr-6} to obtain that $X \vdash_{i} a$.
\end{proof}

Note next that from the global interpolation property \ref{dn-infofr}\eqref{dn-infofr-11} we in particular obtain that every entailment relation $\vdash_{i}$ is interpolative.
\begin{lemma}\label{lem-locint} 
Let $\bA$ be an information frame. Then the following two statements hold for all $i \in A$ and $X, Y \fsubset A$ with $X \in \con_{i}$ and $X \vdash_{i} Y$:
\begin{enumerate}
\item\label{lem-locint-1}
$(\exists Z \in \con_{i}) X \vdash_{i} Z  \wedge Z \vdash_{i} Y$.

\item\label{lem-locint-2}
$(\exists e \in A) X \vdash_{i} e \wedge Y \in \con_{e}$.

\end{enumerate}
\end{lemma}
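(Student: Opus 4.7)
The plan is to apply the global Interpolation axiom \ref{dn-infofr}\eqref{dn-infofr-11} directly to $X\vdash_{i}Y$ and then recover both claims from the data it produces. Interpolation yields some $e\in A$ and some $Z\in\con_{e}$ with
\[
X\vdash_{i}(\{e\}\cup Z)\quad\text{and}\quad Z\vdash_{e}Y.
\]
I will read off part~\eqref{lem-locint-2} almost for free, and then do a little extra work with soundness, consistency preservation, and entailment transfer to get part~\eqref{lem-locint-1}.

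For part~\eqref{lem-locint-2}, note that $X\vdash_{i}(\{e\}\cup Z)$ in particular gives $X\vdash_{i}e$, which is the first half. For the second half, from $Z\in\con_{e}$ and $Z\vdash_{e}Y$ the local soundness axiom \ref{dn-infofr}\eqref{dn-infofr-4} applied at the node $e$ yields $Y\in\con_{e}$.

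For part~\eqref{lem-locint-1}, I would take the witness to be $Z$ itself (the same $Z$ supplied by Interpolation). Three things have to be checked: $Z\in\con_{i}$, $X\vdash_{i}Z$, and $Z\vdash_{i}Y$. The first follows by applying soundness at $i$ to $X\in\con_{i}$ and $X\vdash_{i}(\{e\}\cup Z)$, giving $\{e\}\cup Z\in\con_{i}$, and then invoking consistency preservation \ref{dn-infofr}\eqref{dn-infofr-2}. The second is immediate from $X\vdash_{i}(\{e\}\cup Z)$ since $Z\subseteq\{e\}\cup Z$ and $\vdash_{i}$ applied to a set means applied to each member. For the third, observe that $\{e\}\in\con_{i}$ (again by soundness from $X\vdash_{i}e$), so $eRi$ by \eqref{eq-entcon}; now entailment transfer \ref{dn-infofr}\eqref{dn-infofr-9}, applied to $Z\in\con_{e}$ and $Z\vdash_{e}Y$ with the direction $e\to i$, delivers $Z\vdash_{i}Y$.

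There is no real obstacle here: the proof is essentially a bookkeeping exercise distributing the output of Interpolation across the soundness/preservation axioms and one use of entailment transfer. The only subtle point worth flagging is the direction of the accessibility relation needed for entailment transfer — one must verify $eRi$ rather than $iRe$, and this is precisely what \eqref{eq-entcon} provides from the fact that $X$ entails $e$ at node $i$.
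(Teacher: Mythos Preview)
Your proof is correct and follows essentially the same route as the paper: apply global interpolation to obtain $e$ and $Z$, then read off both parts using soundness and entailment transfer via $eRi$. The only cosmetic difference is that to see $Z\in\con_{i}$ the paper cites consistency transfer \ref{dn-infofr}\eqref{dn-infofr-7} (from $eRi$ and $Z\in\con_{e}$), whereas you go through soundness and consistency preservation; both are fine.
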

\begin{proof}
Suppose that $i \in A$ and $X, Y \fsubset A$ with $X \in \con_{i}$ and $X \vdash_{i} Y$. Then it follows with the global interpolation condition that there are $e \in A$ and $Z \in \con_{e}$ with $X \vdash_{i} \{e \} \cup Z$ and $Z \vdash_{e} Y$. By \eqref{eq-entcon} it follows from $X \vdash_{i} e$ that $e  R i$, with which we obtain by Conditions~\ref{dn-infofr}\eqref{dn-infofr-7} and  \eqref{dn-infofr-9} that $Z \vdash_{i} Y$. Since $Z \vdash_{e} Y$ we moreover have that $Y \in \con_{e}$, by soundness.
\end{proof}

As follows from \ref{dn-infofr}\eqref{dn-infofr-9}, entailment is preserved when moving from node $i$ to $j$. However, it might be that at level $j$, $X$ entails more tokens than at $i$. This, however, is not the case if the information frame is conservative.

\begin{definition}\label{dn-cst}
Let $\bA$ be an information frame.
\begin{enumerate}
\item\label{dn-cst-c}
$\bA$ is said to be \emph{conservative} if the following Condition~\eqref{cn-C} holds for all $a, i, j \in A$ and $X \in \con_{i}$:
\begin{equation}\label{cn-C}
iRj \wedge X \vdash_{j} a \Rightarrow X \vdash_{i} a. \tag{C}
\end{equation}

\item\label{dn--cst-al}
$\bA$  is said to be \emph{algebraic} if the following Condition~\eqref{cn-A} holds for all $i \in A$ and $X \in \con_{i}$:
\begin{equation}\label{cn-A}\tag{AL}
X \vdash_{i} \{i \} \cup X.
\end{equation}

\item\label{dn-cst-s}
$\bA$ is said to be \emph{strong} if the following Condition~\eqref{cn-S} holds for all $i \in A$ and $X \fsubset A$:
\begin{equation}\label{cn-S}
X \in \con_{i}\mbox{} \wedge X \neq \{i\} \Rightarrow \{ i \} \vdash_{i} X. \tag{S}
\end{equation}

\item\label{dn-cst-t}
A token $\bt \in A$ is called \emph{truth element} if the following Condition~\eqref{cn-T} holds for all $i \in A$:
\begin{equation}\label{cn-T}
\emptyset \vdash_{i} \bt. \tag{T}
\end{equation}
\end{enumerate}
\end{definition}

In the sequel we will show that the global states of an information frame form a domain.
\begin{definition}\label{dn-st}
Let $\bA$ be an information frame. A subset $x$ of $A$ is a \emph{(global) state} of $\bA$ if the following three conditions hold:
\begin{enumerate}
\item\label{dn-st-1}
$(\forall F \fsubset x) (\exists i \in x) F \in \con_{i}$ \hfill (finite consistency),

\item\label{dn-st-2}
$(\forall i \in x)(\forall X \fsubset x) (\forall a \in A) [X \in \con_{i}\mbox{} \wedge X \vdash_{i} a \Rightarrow a \in x]$ \hfill (closure under entailment),

\item\label{dn-st-3}
$(\forall a \in x) (\exists i \in x) (\exists X \fsubset x) X \in \con_{i}\mbox{} \wedge X \vdash_{i} a$ \hfill (completeness). 

\end{enumerate}
\end{definition}

By Condition~\ref{dn-st}\eqref{dn-st-1} global states $x$ are never empty: Choose $F = \emptyset$. Then $x$ contains some $i \in A$ with $\emptyset \in \con_{i}$.

Note that Conditions~\eqref{dn-st-1} and \eqref{dn-st-3} in Definition~\ref{dn-st} can be replaced by a single requirement.

\begin{proposition}\label{pn-stsing}
Let $\bA$ be an information frame  and $x$ be a subset of $A$. Then Conditions~\ref{dn-st}\eqref{dn-st-1} and \eqref{dn-st-3} are equivalent to the following statement:
\begin{equation}\tag{ST}\label{st}
(\forall F \fsubset x) (\exists i \in x) (\exists X \fsubset x) X \in \con_{i}\mbox{} \wedge X \vdash_{i} F.
\end{equation}
\end{proposition}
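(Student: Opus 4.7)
The statement to be proved is an equivalence between a conjunction of two axioms of Definition~\ref{dn-st} and the single condition~\eqref{st}. My plan is to handle the two implications separately, with the reverse direction ($(1) \wedge (3) \Rightarrow \eqref{st}$) being the substantial one.

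For the forward direction, assume \eqref{st}. Finite consistency (Condition~\ref{dn-st}\eqref{dn-st-1}) drops out immediately: given $F \fsubset x$, the axiom supplies $i \in x$ and $X \in \con_{i}$ with $X \vdash_{i} F$, and then soundness (Axiom~\ref{dn-infofr}\eqref{dn-infofr-4}) yields $F \in \con_{i}$. Completeness (Condition~\ref{dn-st}\eqref{dn-st-3}) is even easier: for any $a \in x$, apply \eqref{st} to the singleton $F = \{a\}$.

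For the reverse direction, assume finite consistency and completeness, and fix $F \fsubset x$. For each $b \in F$, completeness provides some $i_{b} \in x$ and $X_{b} \fsubset x$ with $X_{b} \in \con_{i_{b}}$ and $X_{b} \vdash_{i_{b}} b$. The goal is to consolidate all these local witnesses under a single node. I would form the finite set $I' = \{i_{b} \mid b \in F\} \cup \bigcup_{b \in F} X_{b} \fsubset x$ and apply finite consistency to it, obtaining $j \in x$ with $I' \in \con_{j}$.

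The key step — and the only point at which the global axioms of Definition~\ref{dn-infofr} are genuinely used — is to transfer each local entailment $X_{b} \vdash_{i_{b}} b$ to the common node $j$. Since $\{i_{b}\} \subseteq I' \in \con_{j}$, consistency preservation (Axiom~\ref{dn-infofr}\eqref{dn-infofr-2}) gives $\{i_{b}\} \in \con_{j}$, i.e., $i_{b}Rj$. Then entailment transfer (Axiom~\ref{dn-infofr}\eqref{dn-infofr-9}) yields $X_{b} \vdash_{j} b$. Setting $X = \bigcup_{b \in F} X_{b} \subseteq I'$, we have $X \in \con_{j}$ by consistency preservation, and also $X_{b} \in \con_{j}$ for each $b$; weakening (Axiom~\ref{dn-infofr}\eqref{dn-infofr-5}) then lifts $X_{b} \vdash_{j} b$ to $X \vdash_{j} b$, so $X \vdash_{j} F$. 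Taking $i = j$ witnesses \eqref{st}. The main obstacle is thus purely bookkeeping: one must choose $I'$ large enough to absorb both the $X_{b}$'s (so that weakening applies uniformly) and the nodes $i_{b}$'s (so that the accessibility relations $i_{b}Rj$ become available), which is exactly what the definition of $I'$ accomplishes.
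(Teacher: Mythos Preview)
Your proof is correct. The paper itself does not spell out the argument but defers to \cite[Lemma~3]{sp21}; your approach---deriving \ref{dn-st}\eqref{dn-st-1} from \eqref{st} via soundness, deriving \ref{dn-st}\eqref{dn-st-3} by specializing to singletons, and for the converse collecting all witnesses into a single finite set, finding a common consistency level via \ref{dn-st}\eqref{dn-st-1}, and then using $R$, entailment transfer, and weakening to merge the entailments---is exactly the natural argument and matches what is done there.
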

The proofs of this and the following results are as in \cite[Lemmas 3, 6, 7, 12]{sp21}.

With respect to set inclusion the global states of $\bA$ form a partially ordered set, denoted by $|\bA|$.

\begin{lemma}\label{lem-infosysdom}
$|\bA|$ is directed-complete.
\end{lemma}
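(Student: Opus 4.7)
The plan is to show that the directed union of global states is again a global state, which will automatically be the least upper bound in $|\bA|$. So let $(x_\lambda)_{\lambda \in \Lambda}$ be a directed family in $|\bA|$ and set $x = \bigcup_{\lambda \in \Lambda} x_\lambda$. I need to verify that $x$ satisfies the three conditions of Definition~\ref{dn-st}.

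First I would establish a convenient pulling‑back lemma: any finite subset $F \fsubset x$ is contained in some single $x_{\lambda_0}$. This is immediate by induction on $|F|$ using directedness of $(x_\lambda)_\lambda$. With this in hand, finite consistency for $x$ (Condition~\ref{dn-st}\eqref{dn-st-1}) follows at once: given $F \fsubset x$, choose $\lambda_0$ with $F \subseteq x_{\lambda_0}$ and apply finite consistency of $x_{\lambda_0}$ to obtain the required $i \in x_{\lambda_0} \subseteq x$ with $F \in \con_i$.

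For closure under entailment (Condition~\ref{dn-st}\eqref{dn-st-2}), given $i \in x$, $X \fsubset x$ with $X \in \con_i$ and $X \vdash_i a$, apply the pulling‑back lemma to $\{i\} \cup X \fsubset x$ to get a single $\lambda_0$ with $\{i\} \cup X \subseteq x_{\lambda_0}$. Then closure under entailment inside $x_{\lambda_0}$ yields $a \in x_{\lambda_0} \subseteq x$. Completeness (Condition~\ref{dn-st}\eqref{dn-st-3}) is even easier: if $a \in x$, pick $\lambda$ with $a \in x_\lambda$ and apply completeness of $x_\lambda$ to produce the witnesses $i$ and $X$ inside $x_\lambda \subseteq x$. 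Thus $x \in |\bA|$, and by construction $x$ is the least upper bound of $(x_\lambda)_\lambda$ under set inclusion.

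I do not see any serious obstacle: the only point requiring a little care is the finite consistency clause, since $\con_i$ is a \emph{local} predicate depending on $i$ and one must produce a single token $i$ witnessing consistency of $F$. This is exactly why the reduction to one $x_{\lambda_0}$ is necessary; once that reduction is made, the three conditions lift from the $x_\lambda$'s to their union without any appeal to the global axioms of Definition~\ref{dn-infofr}.
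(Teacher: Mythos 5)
Your proof is correct and is essentially the argument the paper relies on: the paper defers this proof to \cite{sp21}, where one likewise shows that the union of a directed family of states is again a state by pulling each finite configuration back into a single member of the family via directedness. Nothing further is needed.
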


As we will see next, the consistent subsets of $\bA$ generate a canonical basis of $|\bA|$. For $i \in A$ and $X \in \con_{i}$ let
\begin{equation*}
[X]_i = \set{a \in A}{X \vdash_{i} a}.
\end{equation*}
\begin{lemma}\label{lem-canba}
\begin{enumerate}
\item\label{lem-canba-1} $[X]_i$ is a global state of $\bA$, for each $i \in A$ and $X \in \con_{i}$.
\item\label{lem-canba-2} For every $z \in |\bA|$, the set of all $[X]_i$ with  $\{i\} \cup X \subseteq z$ is directed and $z$ is its union.
\end{enumerate}
\end{lemma}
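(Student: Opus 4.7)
My plan is to verify the three closure conditions of Definition~\ref{dn-st} for part~\eqref{lem-canba-1}, then use part~\eqref{lem-canba-1} together with finite consistency of $z$ (via Proposition~\ref{pn-stsing}) to establish part~\eqref{lem-canba-2}. The proof should roughly mirror the conservative case treated in~\cite{sp21}, but I need to be careful that Cut is now only available at a fixed node, so whenever I chain entailments across different nodes I will reach for Lemma~\ref{lem-strong6} (the strong Cut that combines an $i$-entailment with a $j$-entailment) rather than plain Cut.

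For part~\eqref{lem-canba-1}, fix $i \in A$ and $X \in \con_i$. For \emph{finite consistency}, take any $F \fsubset [X]_i$, so $X \vdash_i F$. Lemma~\ref{lem-locint}\eqref{lem-locint-2} yields $e \in A$ with $X \vdash_i e$ and $F \in \con_e$; then $e \in [X]_i$ witnesses the condition. For \emph{closure under entailment}, given $k \in [X]_i$, $Y \fsubset [X]_i$, $Y \in \con_k$ and $Y \vdash_k a$, note that $X \vdash_i k$ and $X \vdash_i Y$ give $X \vdash_i (\{k\} \cup Y)$, whence Lemma~\ref{lem-strong6} delivers $X \vdash_i a$, i.e.\ $a \in [X]_i$. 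For \emph{completeness}, given $a \in [X]_i$ use the global interpolation axiom~\ref{dn-infofr}\eqref{dn-infofr-11}: there exist $e \in A$ and $Z \in \con_e$ with $X \vdash_i (\{e\}\cup Z)$ and $Z \vdash_e a$, so $e \in [X]_i$, $Z \subseteq [X]_i$, and the pair $(e, Z)$ witnesses completeness for $a$.

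For part~\eqref{lem-canba-2}, let $\mathcal{F} = \{[X]_i : i \in A,\ X \in \con_i,\ \{i\}\cup X \subseteq z\}$. I will first observe that every $[X]_i \in \mathcal{F}$ is a subset of $z$: for $a \in [X]_i$ we have $X \vdash_i a$ with $i \in z$ and $X \fsubset z$, so Condition~\ref{dn-st}\eqref{dn-st-2} forces $a \in z$. For the reverse inclusion $z \subseteq \bigcup \mathcal F$, I invoke Condition~\ref{dn-st}\eqref{dn-st-3} directly: each $a \in z$ sits inside some $[X]_i \in \mathcal F$. Finally, for directedness, given $[X_1]_{i_1}, [X_2]_{i_2} \in \mathcal F$, apply Proposition~\ref{pn-stsing} to the finite set $F = \{i_1, i_2\} \cup X_1 \cup X_2 \fsubset z$ to obtain $k \in z$ and $W \fsubset z$ with $W \in \con_k$ and $W \vdash_k F$. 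Then $[W]_k \in \mathcal F$, and for each $\ell \in \{1,2\}$ the relation $W \vdash_k (\{i_\ell\} \cup X_\ell)$ combined with $X_\ell \vdash_{i_\ell} a$ (for $a \in [X_\ell]_{i_\ell}$) yields $W \vdash_k a$ by Lemma~\ref{lem-strong6}; hence $[X_\ell]_{i_\ell} \subseteq [W]_k$. Non-emptiness of $\mathcal F$ follows from $z \neq \emptyset$ together with the completeness clause of Definition~\ref{dn-st}.

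The main obstacle I anticipate is managing the bookkeeping of which relation lives at which node, because every consequence relation $\vdash_i$ has its own consistency predicate and the natural temptation to apply Cut across nodes is illegal. The key tool that resolves this at every single step above is Lemma~\ref{lem-strong6}; identifying it as the right \emph{cross-node Cut} (and checking that its preconditions $X \in \con_i$ and $Y \in \con_j$ are met at each invocation) is where the argument really needs care. Once that lemma is recognised as the workhorse, both parts reduce to straightforward unfolding of the definitions.
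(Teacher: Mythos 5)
Your proof is correct and follows the intended argument (the paper itself gives no proof here, deferring to \cite{sp21}); in particular, your systematic use of Lemma~\ref{lem-strong6} as the cross-node Cut --- with its preconditions $X \in \con_i$, $Y \in \con_j$ checked at each invocation --- is exactly the adaptation required in the non-conservative setting, and the appeals to Lemma~\ref{lem-locint}\eqref{lem-locint-2}, Axiom~\ref{dn-infofr}\eqref{dn-infofr-11} and Proposition~\ref{pn-stsing} are all legitimate. I see no gaps.
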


This result allows characterizing the approximation relation on $|\bA|$ in terms of the entailment relation. The characterization nicely reflects the intuition that $x \ll y$ if $x$ is covered by a ``finite part'' of $y$.

\begin{proposition}\label{pn-app}
For $x, y \in |\bA|$,
\begin{equation*}
x \ll y \Leftrightarrow (\exists i \in A)(\exists V \in \con_{i}) \{i\} \cup V \subseteq y \wedge V \vdash_{i} x.
\end{equation*}
\end{proposition}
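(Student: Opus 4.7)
The plan is to prove the two implications separately, relying on Lemma~\ref{lem-canba}\eqref{lem-canba-2} in the forward direction and on the definition of states together with the (standard) fact that directed least upper bounds in $|\bA|$ are just unions in the converse. This latter fact is implicit in Lemma~\ref{lem-infosysdom}: one checks directly from Definition~\ref{dn-st} that the union of a directed family of states is again a state (finite consistency, closure under entailment, and completeness all transfer because every witnessing finite set lies in some member of the family), so the directed lub is the union.

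For the $(\Leftarrow)$ direction, suppose we have $i \in A$ and $V \in \con_{i}$ with $\{i\} \cup V \subseteq y$ and $V \vdash_{i} a$ for every $a \in x$. Let $S \subseteq |\bA|$ be directed with $y \subseteq \bigsqcup S = \bigcup S$. Since $\{i\} \cup V$ is a finite subset of $\bigcup S$ and $S$ is directed, some $u \in S$ contains $\{i\} \cup V$. As $u$ is a state and therefore closed under $\vdash_{i}$ by Condition~\ref{dn-st}\eqref{dn-st-2}, from $V \in \con_{i}$ and $V \vdash_{i} a$ we conclude $a \in u$ for every $a \in x$, i.e. $x \subseteq u$. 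This is exactly what the definition of $x \ll y$ requires.

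For the $(\Rightarrow)$ direction, assume $x \ll y$. By Lemma~\ref{lem-canba}\eqref{lem-canba-2}, the family $\mathcal{F}_{y} = \{\, [V]_{i} \mid i \in A,\ V \in \con_{i},\ \{i\}\cup V \subseteq y \,\}$ is directed and has $y$ as its union, hence as its least upper bound in $|\bA|$. Applying the approximation property to this directed set yields some $[V]_{i} \in \mathcal{F}_{y}$ with $x \subseteq [V]_{i}$. Unwinding the definition of $[V]_{i}$, this says $V \vdash_{i} a$ for every $a \in x$, which is precisely $V \vdash_{i} x$. Together with $\{i\} \cup V \subseteq y$ from membership in $\mathcal{F}_{y}$, this gives the required witnesses.

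The only point requiring care is the identification of directed lubs in $|\bA|$ with unions; everything else is an immediate unpacking of the relevant definitions. Once that identification is in hand, neither direction uses the global axioms \ref{dn-infofr}\eqref{dn-infofr-7}--\eqref{dn-infofr-11} beyond what already went into Lemmas~\ref{lem-infosysdom} and~\ref{lem-canba}.
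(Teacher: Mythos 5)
Your proof is correct and follows exactly the route the paper intends: the paper defers this proof to \cite[Lemma 12]{sp21}, but the preceding Lemmas~\ref{lem-infosysdom} and \ref{lem-canba} are set up precisely so that the forward direction follows by applying $x \ll y$ to the directed family $\set{[V]_{i}}{\{i\}\cup V \subseteq y}$ and the converse by locating the finite set $\{i\}\cup V$ inside a single member of a directed family of states and invoking closure under entailment. Your explicit justification that directed suprema in $|\bA|$ are unions is the right point to flag, and the argument is complete.
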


Note that by Axioms~\ref{dn-infofr}\eqref{dn-infofr-1} and \eqref{dn-infofr-2} $\emptyset \in \con_{i}$, for all $i \in A$. Suppose now that $\bA$ has a truth element $\bt$. Then it follows with Condition~\eqref{cn-T} and Axiom~\ref{dn-infofr}\eqref{dn-infofr-4} that $\{ \bt \} \in \con_{j}$, for all $j \in A$.

\begin{lemma}\label{lem-point}
Let $\bA$ have a truth element $\bt$. Then $[\emptyset]_{\bt}\subseteq x$, for all $x \in |\bA|$.
\end{lemma}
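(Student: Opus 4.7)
The plan is to proceed in two steps: first show that the truth element $\bt$ itself belongs to every state $x$, and then apply closure under entailment at node $\bt$ to pull every element of $[\emptyset]_{\bt}$ into $x$.

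For the first step, I would invoke finite consistency (Definition~\ref{dn-st}\eqref{dn-st-1}) with $F = \emptyset$ to obtain some $i \in x$ with $\emptyset \in \con_{i}$ (this is exactly the observation made just after Definition~\ref{dn-st} that states are nonempty). Since $\bt$ is a truth element, Condition~\eqref{cn-T} yields $\emptyset \vdash_{i} \bt$. Closure under entailment (Definition~\ref{dn-st}\eqref{dn-st-2}) applied with $X = \emptyset$ and $a = \bt$ then gives $\bt \in x$.

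For the second step, suppose $a \in [\emptyset]_{\bt}$, that is, $\emptyset \vdash_{\bt} a$. To apply closure under entailment with $i = \bt$, I need $\bt \in x$ (established in step one), $\emptyset \fsubset x$, and $\emptyset \in \con_{\bt}$. The last fact comes from self consistency (Axiom~\ref{dn-infofr}\eqref{dn-infofr-1}), which gives $\{\bt\} \in \con_{\bt}$, combined with consistency preservation (Axiom~\ref{dn-infofr}\eqref{dn-infofr-2}). Closure under entailment then yields $a \in x$, which completes the argument.

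There is no serious obstacle here; the only subtlety is recognizing that the closure axiom must be instantiated at $i = \bt$, which forces us to first establish that $\bt$ lies in every state. Once that preliminary fact is in hand, the truth axiom~\eqref{cn-T} makes the rest immediate.
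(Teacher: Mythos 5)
Your proof is correct and follows the same route as the paper: first establish $\bt \in x$ via nonemptiness of states, Condition~\eqref{cn-T}, and closure under entailment, then apply closure under entailment again at $i = \bt$ to conclude $[\emptyset]_{\bt} \subseteq x$. You merely spell out the side conditions ($\emptyset \in \con_{i}$ and $\emptyset \in \con_{\bt}$) that the paper leaves implicit.
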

\begin{proof}
As states are nonempty, there is some $i \in x$. Moreover, by Condition~\eqref{cn-T}, $\emptyset \vdash_{i} \bt$. Thus, $\bt \in x$, because of \ref{dn-st}\eqref{dn-st-2}. Hence, by applying the same rule again, we obtain that $[\emptyset]_{\bt} \subseteq x$.
\end{proof} 

Let us now sum up what we have reached so far.
\begin{theorem}\label{thm-infdom}
Let $\bA$ be an information frame. Then $\DDD(\bA) = (|\bA|, \subseteq)$ is a domain with basis $\hat{\Con} = \set{[X]_{i}}{i \in A \wedge X \in \con_{i}}$. Morevover,
\begin{enumerate}
\item\label{thm-infdom-L}\
$\DDD(\bA)$ is an L-domain, if $\bA$ is conservative.

\item\label{thm-infdom-AL}
$\DDD(\bA)$ is algebraic, if $\bA$ is algebraic.

\item\label{thm-infdom-T}
$\DDD(\bA)$ is pointed with least element $[\emptyset]_{\bt}$, if $\bA$ has a truth element $\bt$.

\end{enumerate}
\end{theorem}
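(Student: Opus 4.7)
The plan is to prove the four assertions in sequence, reducing each to the results already assembled.

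For the main claim that $\DDD(\bA)$ is a domain with basis $\hat{\Con}$, directed-completeness of $(|\bA|, \subseteq)$ is exactly Lemma~\ref{lem-infosysdom}; since the order is set inclusion, suprema of directed families of states are their unions. To verify the basis property I fix a state $z$ and consider the family $\mathcal{F}_{z} = \set{[X]_{i}}{i \in A,\ X \in \con_{i},\ \{i\} \cup X \subseteq z}$. By Lemma~\ref{lem-canba}\eqref{lem-canba-2}, $\mathcal{F}_{z}$ is directed with $\bigcup \mathcal{F}_{z} = z$, and by Proposition~\ref{pn-app} each $[X]_{i} \in \mathcal{F}_{z}$ satisfies $[X]_{i} \ll z$. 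Hence $\mathcal{F}_{z} \subseteq \dda\!_{\hat{\Con}} z$ is a directed subset with supremum $z$, which is precisely the basis condition.

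For \eqref{thm-infdom-AL}, axiom~\eqref{cn-A} gives $X \vdash_{i} \{i\} \cup X$, so that $\{i\} \cup X \subseteq [X]_{i}$. Taking $V := X$ in the characterization of Proposition~\ref{pn-app} this yields $[X]_{i} \ll [X]_{i}$; so every basis element is compact and $\DDD(\bA)$ is algebraic. For \eqref{thm-infdom-T}, if $\bt$ is a truth element then $\{\bt\} \in \con_{\bt}$ by self-consistency~\ref{dn-infofr}\eqref{dn-infofr-1} and hence $\emptyset \in \con_{\bt}$ by consistency preservation~\ref{dn-infofr}\eqref{dn-infofr-2}; thus $[\emptyset]_{\bt}$ is a well-defined state by Lemma~\ref{lem-canba}\eqref{lem-canba-1}, and Lemma~\ref{lem-point} identifies it as the least element of $|\bA|$.

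The main obstacle is \eqref{thm-infdom-L}. Given states $x, y$ both contained in a state $z$, with $\bA$ conservative, I propose to take as their join in $\low z$ the set
\[
w := \set{a \in A}{(\exists i \in x \cup y)(\exists X \fsubset x \cup y)\ X \in \con_{i} \wedge X \vdash_{i} a}.
\]
Entailment-closure of $z$ (Condition~\ref{dn-st}\eqref{dn-st-2}) gives $w \subseteq z$, while applying the completeness condition~\ref{dn-st}\eqref{dn-st-3} to $x$ and to $y$ individually yields $x \cup y \subseteq w$. Minimality, and hence the least upper bound property in $\low z$, is automatic: any state $u$ with $x \cup y \subseteq u \subseteq z$ is entailment-closed and must therefore contain every such $a$.

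The delicate step, and the reason conservativity is needed, is to verify that $w$ itself is a state, since both finite consistency~\ref{dn-st}\eqref{dn-st-1} and completeness~\ref{dn-st}\eqref{dn-st-3} demand witnessing tokens in $w$ rather than merely in $z$. I would exploit conservativity~\eqref{cn-C} together with the transfer axioms~\ref{dn-infofr}\eqref{dn-infofr-7} and \ref{dn-infofr}\eqref{dn-infofr-9} and the interpolation Lemma~\ref{lem-locint}: a witness $j \in z$ produced by finite consistency of $z$ applied to the finite union of the relevant $i_{a}$'s and $X_{a}$'s need not itself lie in $w$, but conservativity permits replacing $j$ by a token already in $x \cup y$ at which the required entailment still holds, with an analogous argument for completeness.
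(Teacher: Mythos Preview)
Your treatment of the main assertion and of items~\eqref{thm-infdom-AL} and \eqref{thm-infdom-T} is correct and coincides with the paper's own proof: the paper simply appeals to Lemma~\ref{lem-infosysdom}, Lemma~\ref{lem-canba}, Proposition~\ref{pn-app}, and Lemma~\ref{lem-point} in exactly the way you do (the only explicit sentence in the paper's proof is that compactness of each $[X]_i$ follows from Proposition~\ref{pn-app}).

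For item~\eqref{thm-infdom-L} the paper gives no argument of its own; it cites \cite[Theorem~11]{sp21}. Your outline therefore goes beyond what the paper does, but it has a gap. First, a minor misdiagnosis: you list completeness~\ref{dn-st}\eqref{dn-st-3} among the delicate steps, but it is automatic, since by the very definition of $w$ every $a \in w$ already comes with a witness $(i,X)$ satisfying $i \in x \cup y \subseteq w$ and $X \fsubset x \cup y \subseteq w$. The genuinely delicate property you do not address is entailment closure~\ref{dn-st}\eqref{dn-st-2}: if $j \in w$, $Y \fsubset w$, $Y \in \con_j$, and $Y \vdash_j a$, then neither $j$ nor the elements of $Y$ need lie in $x \cup y$, so the defining clause of $w$ does not apply directly and one must manufacture a witness back in $x \cup y$. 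Second, your proposed use of conservativity to ``replace $j \in z$ by a token already in $x \cup y$'' is not yet justified: Condition~\eqref{cn-C} lets you descend from level $j$ to level $i$ only when the relevant set is already in $\con_i$, and for a mixed finite set $B \fsubset x \cup y$ there is no a priori reason why $B \in \con_i$ holds for any $i \in x \cup y$ (finite consistency of $x$ and of $y$ separately does not combine, and finite consistency of $z$ only yields a witness in $z$). A complete argument requires a more careful interplay of interpolation, the transfer axioms, and conservativity, as carried out in \cite{sp21}.
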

\begin{proof}
\eqref{thm-infdom-L} is shown in \cite[Theorem 11]{sp21}. For \eqref{thm-infdom-AL} note that by Proposition~\ref{pn-app} all states in $\hat{\Con}$ are compact.
\end{proof}

Now, conversely, let $\DD$ be a domain with basis $B$. Set 
\begin{equation*}
\FFF(\DD) = (B, (\Con_{i})_{i \in B}, (\vDash_{i})_{i \in B})
\end{equation*}
 with
\begin{align*}
&\Con_{i} =\{ \{ i \}\} \cup \powf{\dda_{B} i},  \\
&X \vDash_{i} a \Leftrightarrow (\exists b \in X \cup \{i\}) a \ll b.
\end{align*}
\begin{theorem}\label{thm-dominf}
Let $\DD$ be a domain. Then $\FFF(\DD)$ is a strong information frame such that 
\begin{enumerate}
\item\label{thm-dominf-1}
$\FFF(\DD)$ is algebraic, if $\DD$ is algebraic and all $z \in B$ are compact.

\item\label{thm-dominf-2}
$\FFF(\DD)$  has a truth element, if $\DD$ is pointed.
\end{enumerate}
\end{theorem}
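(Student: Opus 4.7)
The plan is to verify, in order, the local axioms of Definition~\ref{dn-infofr}\eqref{dn-infofr-1}--\eqref{dn-infofr-6}, the global axioms \eqref{dn-infofr-7}, \eqref{dn-infofr-9} and \eqref{dn-infofr-11}, the strong property \eqref{cn-S}, and finally items \eqref{thm-dominf-1} and \eqref{thm-dominf-2}. A preliminary computation is that $iRj \Leftrightarrow i = j \vee i \ll j$, since $\{i\} \in \Con_j$ precisely when $\{i\} = \{j\}$ or $\{i\} \in \powf{\dda_B j}$. The central observation used throughout is that $X \in \Con_i$ forces $X \cup \{i\} \subseteq \dda_B i \cup \{i\} \subseteq \low i$, so any witness $b \in X \cup \{i\}$ to an entailment $X \vDash_i a$ (namely $a \ll b$) automatically satisfies $a \ll b \sqsubseteq i$, whence $a \ll i$ by Lemma~\ref{lem-preordprop}\eqref{lem-preordprop-2}.

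With this observation the local axioms are routine. Self-consistency and consistency preservation are immediate from the definition of $\Con_i$. Soundness follows because $a \ll i$ places $Y$ inside $\powf{\dda_B i} \subseteq \Con_i$. Weakening is direct. Cut uses transitivity of $\ll$ (Lemma~\ref{lem-preordprop}\eqref{lem-preordprop-0}): if $Y \vDash_i a$ is witnessed by $c \in Y$, the witness $b \in X \cup \{i\}$ for $X \vDash_i c$ gives $a \ll c \ll b$, hence $a \ll b$. Consistency and entailment transfer reduce, after unfolding $R$, to the trivial case $i = j$ or, in the case $i \ll j$, to transitivity of $\ll$ (combined with $a \ll i \ll j$ when the entailment witness is $b = i$).

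The main obstacle is the global interpolation axiom \eqref{dn-infofr-11}, which demands a \emph{single} token $e$ factoring the entailment $X \vDash_i Y$. The crucial reduction is that $Y \ll i$: applying the preliminary observation to each $y \in Y$ yields $y \ll i$. Lemma~\ref{lem-preordprop}\eqref{lem_preordprop-5} then supplies some $v \in B$ with $Y \ll v \ll i$; the degenerate case $Y = \emptyset$ reduces to the existence of some $v \in B$ with $v \ll i$, which is guaranteed because $\dda_B i$ contains a directed, hence nonempty, subset. Setting $e := v$ and $Z := \emptyset \in \Con_v$ succeeds: $X \vDash_i v$ is witnessed by $i \in X \cup \{i\}$ (using $v \ll i$), and for every $y \in Y$ the relation $\emptyset \vDash_v y$ is witnessed by $v$ (using $y \ll v$).

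For the remaining claims, \eqref{cn-S} is immediate: $X \in \Con_i$ with $X \neq \{i\}$ means $X \subseteq \dda_B i$, so $\{i\} \vDash_i x$ via $i$ for each $x \in X$. For \eqref{thm-dominf-1}, compactness of every $z \in B$ gives $z \ll z$, whence $X \vDash_i \{i\} \cup X$ with each element as its own witness. For \eqref{thm-dominf-2}, pointedness of $\DD$ makes $\bot$ compact by Lemma~\ref{lem-preordprop}\eqref{lem-preordprop-4}, hence $\bot \in B$ (every basis contains the compact elements); combining $\bot \sqsubseteq i$ with compactness of $\bot$ yields $\bot \ll i$, that is $\emptyset \vDash_i \bot$ for every $i \in B$, so $\bot$ serves as a truth element.
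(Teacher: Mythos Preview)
Your proof is correct and follows essentially the same approach as the paper, which likewise declares the local and global transfer axioms obvious and focuses on the interpolation axiom~\ref{dn-infofr}\eqref{dn-infofr-11} via $Y \ll i$ and the interpolation law. The only cosmetic difference is that the paper takes $Z = \{e\}$ rather than your $Z = \emptyset$; both choices work for the same reason, namely that the witness for $Z \vDash_e y$ is $e$ itself.
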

\begin{proof}
We only verify Condition~\eqref{dn-infofr-11} in Definition~\ref{dn-infofr}; the other ones being obvious. Assume that $X \vDash_{i} Y$, for $i \in B$, $Y \fsubset B$ and $X \in  \Con_{i}$. Then $Y \ll i$. By the interpolation law there is thus some $e \in B$ with $Y \ll e \ll i$. Set $Z = \{e\}$. Then $Z \in \Con_{e}$. Moreover, $X \vDash_{i} \{e\} \cup Z$ and $Z \vDash_{e} Y$.

 Condition~\eqref{cn-S} is obvious as well. Therefore, $\FFF(\DD)$ is strong. Statement~\eqref{thm-dominf-1} is also obvious as all $z \in B$ are compact. For~\eqref{thm-dominf-2} let $\bot$ be the least element of $\DD$. Then $\bot \in B$ and $\bot \ll i$, for every $i \in B$, again by Lemma~\ref{lem-preordprop}. Thus, $\bot$ is a truth element of $\FFF(\DD)$.
\end{proof}

\section{Approximable mappings}\label{sect-appmap}

In the next step we want to turn the collection of information frames into a category.  The appropriate morphisms are  families of relations similar to entailment relations. 

 \begin{definition}\label{dn-am}
 Let  $\bA$ and $\bA'$ be information frames. 
 \begin{enumerate}
 \item
 An \emph{approximable mapping}  between $\bA$ and $\bA'$,  written $\appmap{H = (H_{i})_{i \in A}}{\bA}{\bA'}$, is a family of relations $H_{i} \subseteq \con_{i} \times A'$ satisfying for all $X, X' \in \con_{i}$, $Y \in \bigcup_{p \in A'} \con'_{p}$ and $b, k \in A'$, as well as all  $F \fsubset A'$ the following five conditions, where $XH_{i}Y$ means that $XH_{i}c$, for all $c \in Y$:
 \begin{enumerate}
 \item \label{dn-am-1} $X H_{i} (\{k\} \cup Y) \wedge Y \vdash'_{k} b \Rightarrow X H_{i} b$,
 
 \item\label{dn-am-2} $X \subseteq X' \wedge X H_{i} b \Rightarrow X' H_{i} b$,
 
 \item\label{dn-am-3} $X \vdash_{i} X' \wedge X' H_{i} b \Rightarrow X H_{i} b$,
 
 \item \label{dn-am-4}$i R j \wedge X H_{i} b \Rightarrow X H_{j} b$,
 
 \item\label{dn-am-5} $X H_{i} F \Rightarrow \\
 \mbox{}\hfill(\exists c \in A) (\exists e \in  A') (\exists U \in \con_{c}) (\exists V \in \con'_{e})  [X \vdash_{i} (\{c\} \cup U) \wedge U H_{c} (\{e \} \cup V) \wedge V \vdash'_{e} F]$.
 
 \end{enumerate}
 
 \item Let $\bA$ and $\bA'$, respectively, have truth elements $\bt$ and $\bt'$. Then $\appmap{H}{\bA}{\bA`}$ \emph{respects truth elements}, if 
 
 \begin{enumerate}
 
 \item\label{dn-am-6}  $\emptyset H_{\bt} \bt'$.
 
 \end{enumerate}
 \end{enumerate}
 \end{definition}

In applications it is sometimes preferable to have Condition~\eqref{dn-am-5} split up into two conditions which state interpolation for the domain and the range of the approximable mapping, separately.

\begin{lemma}\label{pn-amint}
Let $\bA$ and $\bA'$ be information frames. Then, for any family $(H_{i})_{i \in A}$ with $H_{i} \subseteq \con_{i} \times A'$, $X \in \con_{i}$, and $F \fsubset A'$, Condition~\ref{dn-am}\eqref{dn-am-5} is equivalent to the following Conditions~\eqref{pn-amint-1} and \eqref{pn-amint-2}:
\begin{enumerate}
\item\label{pn-amint-1}
$X H_{i} F \Rightarrow (\exists c \in A)(\exists U \in \con_{c}) X \vdash_{i} (\{c\} \cup U) \wedge U H_{c} F$

\item\label{pn-amint-2}
$X H_{i} F \Rightarrow (\exists e \in A')(\exists V \in \con'_{e}) X H_{i} (\{e\} \cup V) \wedge V \vdash'_{e} F$.
\end{enumerate}
\end{lemma}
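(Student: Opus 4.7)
The plan is to prove the two implications separately. The reverse direction is a straightforward cascading application of the two split conditions, whereas the forward direction requires using the remaining approximable-mapping axioms to collapse one side of the full decomposition at a time.

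For the forward direction, suppose $XH_{i}F$ and apply Condition~\ref{dn-am}\eqref{dn-am-5} to obtain $c \in A$, $e \in A'$, $U \in \con_{c}$, and $V \in \con'_{e}$ with $X \vdash_{i} (\{c\} \cup U)$, $UH_{c}(\{e\} \cup V)$, and $V \vdash'_{e} F$. To derive \eqref{pn-amint-1}, I would eliminate the trailing block $V \vdash'_{e} F$: applying Condition~\ref{dn-am}\eqref{dn-am-1} to $UH_{c}(\{e\} \cup V)$ together with $V \vdash'_{e} b$ for each $b \in F$ gives $UH_{c}F$, and combining this with $X \vdash_{i} (\{c\} \cup U)$ yields \eqref{pn-amint-1}. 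To derive \eqref{pn-amint-2}, I would instead eliminate the leading block by pushing the $c$-indexed relation down to an $i$-indexed one. From $X \vdash_{i} c$ and $X \in \con_{i}$, equation~\eqref{eq-entcon} gives $cRi$, so Condition~\ref{dn-am}\eqref{dn-am-4} upgrades $UH_{c}(\{e\}\cup V)$ to $UH_{i}(\{e\}\cup V)$. Soundness (Axiom~\ref{dn-infofr}\eqref{dn-infofr-4}) applied to $X \vdash_{i} (\{c\}\cup U)$ plus consistency preservation (Axiom~\ref{dn-infofr}\eqref{dn-infofr-2}) show that $U \in \con_{i}$, and since $X \vdash_{i} U$, Condition~\ref{dn-am}\eqref{dn-am-3} yields $XH_{i}(\{e\}\cup V)$; together with $V \vdash'_{e} F$ this is \eqref{pn-amint-2}.

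For the reverse direction, assume $XH_{i}F$. First I would invoke \eqref{pn-amint-1} to obtain $c \in A$ and $U \in \con_{c}$ with $X \vdash_{i} (\{c\}\cup U)$ and $UH_{c}F$, and then feed the conclusion $UH_{c}F$ into \eqref{pn-amint-2} (applied at the new index $c$) to obtain $e \in A'$ and $V \in \con'_{e}$ with $UH_{c}(\{e\}\cup V)$ and $V \vdash'_{e} F$. The three resulting conjuncts assemble directly into the existential conclusion of Condition~\ref{dn-am}\eqref{dn-am-5}.

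The main obstacle is the forward direction for \eqref{pn-amint-2}. Unlike the derivation of \eqref{pn-amint-1}, which uses only the single collapsing rule \ref{dn-am}\eqref{dn-am-1}, here one must combine the consistency-transfer axiom \ref{dn-am}\eqref{dn-am-4}, the entailment-closure axiom \ref{dn-am}\eqref{dn-am-3}, and a small amount of bookkeeping to verify that the relevant finite sets lie in the correct consistency predicates. None of this is deep, so the whole lemma is essentially a short diagram chase among the five approximable-mapping axioms.
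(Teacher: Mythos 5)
Your proof is correct and takes essentially the route the paper intends: the paper defers the argument to \cite[Lemmas 25, 26]{sp21}, and your derivation --- collapsing the trailing block with Condition~\ref{dn-am}\eqref{dn-am-1} to get \eqref{pn-amint-1}, transferring the middle relation down to index $i$ via \eqref{eq-entcon}, Conditions~\ref{dn-am}\eqref{dn-am-4} and \eqref{dn-am-3} to get \eqref{pn-amint-2}, and chaining \eqref{pn-amint-1} then \eqref{pn-amint-2} at the interpolated index $c$ for the converse --- is exactly that argument. The only caveat, which concerns the lemma's wording rather than your proof, is that the forward direction genuinely needs Conditions~\ref{dn-am}\eqref{dn-am-1}, \eqref{dn-am-3}, \eqref{dn-am-4}, so the stated equivalence should be read as holding for families satisfying the remaining approximable-mapping axioms, as your use of those axioms correctly presupposes.
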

Again, The proofs of this and the next result are as in \cite[Lemmas 25, 26]{sp21}.

Moreover, a strengthening of Condition~\ref{dn-am}\eqref{dn-am-3} can be derived which reverses the implication in the first statement of the preceding lemma.
\begin{lemma}\label{lem-amstrong3}
Let $H$ be an approximable mapping between information frames  $\bA$ and $\bA'$ . Then for all $i, j \in A$, $X \in \con_{i}$, $Y \in \con_{j}$ and $b \in A'$,
\[
 X \vdash_{i} (\{j\} \cup Y) \wedge Y H_{j} b \Rightarrow X H_{i} b.
\]
\end{lemma}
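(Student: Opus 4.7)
The lemma mirrors Lemma \ref{lem-strong6} almost exactly, only with the internal entailment $\vdash_{j}$ replaced by the ``external'' relation $H_{j}$, so my plan is to transplant that proof and replace the appeal to entailment transfer by the analogous clause for approximable mappings.

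First I would extract the crucial accessibility fact: from $X \vdash_{i} (\{j\} \cup Y)$ we in particular get $X \vdash_{i} j$, and since $X \in \con_{i}$, soundness (Axiom \ref{dn-infofr}\eqref{dn-infofr-4}) gives $\{j\} \in \con_{i}$. Applying \eqref{eq-entcon} then yields $jRi$. In the same breath I would record that $X \vdash_{i} Y$ together with $X \in \con_{i}$ forces $Y \in \con_{i}$ by soundness, so that statements of the shape ``$Y H_{i} \cdot$'' are well-typed (recall $H_{i} \subseteq \con_{i} \times A'$).

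Next I would push $Y H_{j} b$ down to level $i$ along the accessibility we just established: $jRi$ and $Y H_{j} b$ give $Y H_{i} b$ by Condition \ref{dn-am}\eqref{dn-am-4}. Finally, from $X \vdash_{i} Y$ and $Y H_{i} b$, with both $X$ and $Y$ in $\con_{i}$, Condition \ref{dn-am}\eqref{dn-am-3} (instantiated with $X' = Y$) delivers the desired $X H_{i} b$.

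There is no real obstacle here; the only point that needs care is keeping track of which consistency predicate each finite set belongs to, so that every invocation of a clause of Definition \ref{dn-am} is legal. The pattern ``use soundness to establish membership in the correct $\con$, then translate via $R$, then apply the strengthening axiom'' is exactly the one used for Lemma \ref{lem-strong6}, which is why the author flags this result as a companion to that one.
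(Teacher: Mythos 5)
Your proof is correct and follows exactly the route the paper intends: the paper defers to \cite[Lemma 26]{sp21}, and the argument there is precisely the transplant of the proof of Lemma~\ref{lem-strong6} that you carry out — establish $jRi$ via soundness and \eqref{eq-entcon}, transfer $Y H_{j} b$ to $Y H_{i} b$ by Condition~\ref{dn-am}\eqref{dn-am-4}, and conclude with Condition~\ref{dn-am}\eqref{dn-am-3}. Your bookkeeping of which consistency predicate each set belongs to (e.g.\ $Y \in \con_{i}$ by soundness, or alternatively by consistency transfer from $jRi$) is exactly the care the argument requires.
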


For $\nu = 1,2,3$, let $\bA^{(\nu)}$ be information frames, and $\appmap{G}{\bA^{(1)}}{\bA^{(2)}}$ and $\appmap{H}{\bA^{(2)}}{\bA^{(3)}}$. Define $G \circ H = ((G \circ H)_{i})_{i \in A^{(1)}}$ with
\begin{equation*}
X (G \circ H)_{i} a \Leftrightarrow (\exists e \in A^{(2)})(\exists V \in \con^{(2)}_{e}) X G_{i} (\{e\} \cup V) \wedge  V H_{e} a,
\end{equation*}
for $i \in A^{(1)}$, $X \in \con^{(1)}_{i}$ and $a \in A^{(3)}$. 

\begin{lemma}\label{lem-amprop}
For $\nu = 1, 2, 3$, let $\bA^{(\nu)}$ be information frames and $\appmap{G}{\bA^{(1)}}{\bA^{(2)}}$ as well as $\appmap{H}{\bA^{(2)}}{\bA^{(3)}}$. Then the following four statements hold:
\begin{enumerate}

\item\label{lem-amprop-1}
 $\appmap{(\vdash_{i})_{i \in A}}{\bA}{\bA}$.

\item\label{lem-amprop-2}
$\appmap{G \circ H}{\bA^{(1)}}{\bA^{(3)}}$. 

\item\label{lem-amprop-3}
If $G$ and $H$ respect existing truth elements, the same does $G \circ H$.

\item\label{lem-amprop-4}
$(\vdash^{(1)}_{i})_{i \in A^{(1)}} \circ G = G \circ (\vdash^{(2)}_{j})_{j \in A^{(2)}} = G$.

\end{enumerate}
\end{lemma}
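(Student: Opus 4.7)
Proof plan. The lemma contains four assertions. Items \eqref{lem-amprop-1}, \eqref{lem-amprop-3}, and \eqref{lem-amprop-4} reduce straightforwardly to axioms of Definitions~\ref{dn-infofr} and \ref{dn-am} together with the auxiliary Lemmas~\ref{lem-strong6}, \ref{pn-amint}, and \ref{lem-amstrong3}. The substantive work lies in \eqref{lem-amprop-2}, and within it the interpolation clause~\ref{dn-am}\eqref{dn-am-5} will be the main obstacle: it demands producing a \emph{single} interpolating pair $(e,V)$ from finitely many individual witnesses coming out of the definition of composition.

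For \eqref{lem-amprop-1}, I would match the five clauses of Definition~\ref{dn-am} for the family $(\vdash_i)_{i \in A}$ against the axioms of Definition~\ref{dn-infofr}: clauses \eqref{dn-am-2}, \eqref{dn-am-3}, \eqref{dn-am-4}, \eqref{dn-am-5} are \eqref{dn-infofr-5}, \eqref{dn-infofr-6}, \eqref{dn-infofr-9}, \eqref{dn-infofr-11} verbatim, while \eqref{dn-am-1} is precisely Lemma~\ref{lem-strong6}. For \eqref{lem-amprop-3}, the witness $(e,V) = (\bt',\emptyset)$ suffices: $\emptyset \in \con^{(2)}_{\bt'}$ by \ref{dn-infofr}\eqref{dn-infofr-2}, $\emptyset G_{\bt}\{\bt'\}$ holds because $G$ respects truth elements, and $\emptyset H_{\bt'}\bt''$ holds because $H$ does. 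For \eqref{lem-amprop-4}, the inclusion $(\vdash^{(1)}) \circ G \subseteq G$ follows from \eqref{eq-entcon} (giving $eRi$ from $X \vdash^{(1)}_i e$), clause~\ref{dn-am}\eqref{dn-am-4} of $G$ (transferring $V G_e c$ to $V G_i c$, legitimate after consistency transfer), and clause~\ref{dn-am}\eqref{dn-am-3} of $G$ (applied with $X \vdash^{(1)}_i V$); the reverse inclusion is an immediate instance of Lemma~\ref{pn-amint}\eqref{pn-amint-1}. The identity $G \circ (\vdash^{(2)}) = G$ is handled symmetrically, using clause~\ref{dn-am}\eqref{dn-am-1} for $G$ in one direction and Lemma~\ref{pn-amint}\eqref{pn-amint-2} in the other.

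For \eqref{lem-amprop-2}, clauses~\ref{dn-am}\eqref{dn-am-2}, \ref{dn-am}\eqref{dn-am-3}, \ref{dn-am}\eqref{dn-am-4} for $G \circ H$ are immediate by applying the corresponding property of $G$ to the witness $X G_i (\{e\}\cup V)$. For clause~\ref{dn-am}\eqref{dn-am-1}, given $X(G\circ H)_i(\{k\}\cup Y)$ and $Y \vdash^{(3)}_k b$, I would select witnesses $(e_c,V_c)$ for each $c \in \{k\}\cup Y$, form the union $W = \bigcup_c(\{e_c\}\cup V_c)$ so that $X G_i W$, apply Lemma~\ref{pn-amint}\eqref{pn-amint-2} to $G$ to consolidate into one pair $(e',V')$ with $X G_i (\{e'\}\cup V')$ and $V' \vdash^{(2)}_{e'} W$, invoke Lemma~\ref{lem-amstrong3} for each $c$ to deduce $V' H_{e'}(\{k\}\cup Y)$, and conclude with clause~\ref{dn-am}\eqref{dn-am-1} for $H$. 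For clause~\ref{dn-am}\eqref{dn-am-5}, I would use Lemma~\ref{pn-amint} and argue its two halves separately: the domain half is immediate from Lemma~\ref{pn-amint}\eqref{pn-amint-1} for $G$ applied to the union $W$ assembled from witnesses for each $d \in F$; the range half repeats the consolidation technique just described and then applies Lemma~\ref{pn-amint}\eqref{pn-amint-2} to $H$ to interpolate in the codomain. The crux in both \ref{dn-am}\eqref{dn-am-1} and \ref{dn-am}\eqref{dn-am-5} is precisely this consolidation step — turning finitely many witnesses into one — which Lemma~\ref{pn-amint}\eqref{pn-amint-2} for $G$ provides, and which Lemma~\ref{lem-amstrong3} then lets us push back into $H$-relations without information loss.
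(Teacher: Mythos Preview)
The paper does not supply a proof of this lemma; it is stated without argument (presumably relying on the analogous Lemmas~25, 26 of~\cite{sp21}, which are cited for the surrounding results). Your proposal is therefore not in conflict with anything in the paper, and the argument you outline is correct in every part.

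Two small remarks. First, in \eqref{lem-amprop-4} your derivation of the inclusion $(\vdash^{(1)})\circ G \subseteq G$ via \eqref{eq-entcon}, clause~\ref{dn-am}\eqref{dn-am-4}, and clause~\ref{dn-am}\eqref{dn-am-3} is exactly the content of Lemma~\ref{lem-amstrong3}, so you could simply invoke that lemma directly. Second, your treatment of the interpolation clause~\ref{dn-am}\eqref{dn-am-5} for $G\circ H$ via the two halves of Lemma~\ref{pn-amint} is the natural route, and the consolidation step you identify --- gathering finitely many witnesses $(e_c,V_c)$ into a single $(e',V')$ through Lemma~\ref{pn-amint}\eqref{pn-amint-2} for $G$, then pushing forward with Lemma~\ref{lem-amstrong3} --- is indeed the key device that makes the whole argument go through.
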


In the sequel let $\INF$ be the category of information frames and approximable mappings, and \textbf{aINF},  \textbf{sINF} as well as \textbf{asINF}, respectively, be the full subcategories of algebraic, of strong, and of algebraic strong information frames. Moreover, let $\INF_{\bt}$ be the subcategory of information frames with truth elements as objects and truth element respecting approximable mappings as morphisms, and $\aINF_{\bt}$, $\sINF_{\bt}$ as well as $\asINF_{\bt}$, respectively, be the full subcategories of algebraic, of strong and of algebraic strong information frames with truth element.

Similarly, let $\DOM$ denote the category of domains and Scott continuous functions and \textbf{aDOM}, $\DOM_{\bot}$, and  $\aDOM_{\bot}$, respectively, be the full subcategories of algebraic domains, pointed domains and algebraic pointed domains. 

By the preceding lemma  $\Id_{\bA} = (\vdash_{i})_{i \in A}$ is the identity morphism on information frame $\bA$. We have already shown that information frames lead to domains and vice versa. We will now do the same for approximable mappings and Scott continuous functions.

Let  $\bA$ and $\bA'$ be information frames and $\appmap{H}{\bA}{\bA'}$.

\begin{lemma}\label{lem-apsc}
For  $x \in |\bA|$, 
\[
\set{a \in A'}{(\exists i \in A)(\exists X \in \con_{i}) \{ i \} \cup X \subseteq x \wedge X H_{i} a} \in |\bA'|.
\]
\end{lemma}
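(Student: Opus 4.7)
The plan is to verify the three clauses of Definition~\ref{dn-st} for
\[
y := \{\, a \in A' \mid (\exists i \in A)(\exists X \in \con_i)\; \{i\} \cup X \subseteq x \wedge X H_i a \,\},
\]
combining finite consistency and completeness via the criterion~\eqref{st} of Proposition~\ref{pn-stsing}.

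To establish~\eqref{st}, given $F \fsubset y$ I pick for each $a \in F$ a witness $(i_a, X_a)$ with $\{i_a\} \cup X_a \subseteq x$, $X_a \in \con_{i_a}$, and $X_a H_{i_a} a$. The union $W := \bigcup_{a\in F}(\{i_a\} \cup X_a)$ is a finite subset of $x$, so the finite consistency of $x$ yields some $i \in x$ with $W \in \con_i$. From $\{i_a\} \subseteq W \in \con_i$ together with Axiom~\ref{dn-infofr}\eqref{dn-infofr-2} and the definition of $R$ one obtains $i_a R i$. Axiom~\ref{dn-am}\eqref{dn-am-4} then transfers each $X_a H_{i_a} a$ to $X_a H_i a$, and Axiom~\ref{dn-am}\eqref{dn-am-2} promotes this to $W H_i a$ for every $a \in F$, i.e.\ $W H_i F$. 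Applying the range-interpolation half of approximability (Lemma~\ref{pn-amint}\eqref{pn-amint-2}) I obtain $e \in A'$ and $V \in \con'_e$ with $W H_i (\{e\} \cup V)$ and $V \vdash'_e F$. Since $\{i\} \cup W \subseteq x$, $W \in \con_i$, and both $W H_i e$ and $W H_i v$ hold for every $v \in V$, I conclude $e \in y$, $V \fsubset y$, $V \in \con'_e$, and $V \vdash'_e F$, which is exactly~\eqref{st} for $y$.

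For closure under entailment, suppose $p \in y$, $Y \fsubset y$, $Y \in \con'_p$, and $Y \vdash'_p b$. I choose witnesses for $p$ and for every $c \in Y$, form their union $W \fsubset x$, select $i \in x$ with $W \in \con_i$ by finite consistency of $x$, and repeat the node-transfer argument using~\ref{dn-am}\eqref{dn-am-4} and~\eqref{dn-am-2} to arrive at $W H_i (\{p\} \cup Y)$. Axiom~\ref{dn-am}\eqref{dn-am-1} applied to this relation and to $Y \vdash'_p b$ then yields $W H_i b$, so $b \in y$.

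The only real friction is bookkeeping: each membership in $y$ is witnessed at its own node $i_a$, and the relations $H_{i_a}$ are a priori unrelated. The finite consistency of $x$ together with the transfer clause~\ref{dn-am}\eqref{dn-am-4} is precisely the mechanism that merges these local witnesses into a single node $i \in x$; once this merger is done, interpolation (Lemma~\ref{pn-amint}) and the defining clauses \ref{dn-am}\eqref{dn-am-1}--\eqref{dn-am-3} of an approximable mapping handle everything else routinely.
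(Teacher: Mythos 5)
Your proof is correct: the merging of the per-token witnesses into a single node $i\in x$ via finite consistency of $x$, followed by the transfer axioms \ref{dn-am}\eqref{dn-am-4} and \eqref{dn-am-2} and then Lemma~\ref{pn-amint}\eqref{pn-amint-2} (resp.\ Axiom~\ref{dn-am}\eqref{dn-am-1}) to verify \eqref{st} (resp.\ closure under entailment), is exactly the standard argument; the paper itself only cites \cite[Lemma 30]{sp21} for this, and your route matches that proof in all essentials. The edge case $F=\emptyset$ in \eqref{st}, which shows $y\neq\emptyset$, is also covered by your argument since Lemma~\ref{pn-amint}\eqref{pn-amint-2} applies to the empty set.
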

The proof is as in \cite[Lemma 30]{sp21}.
The result allows us to define a function $\DDD(H) : \DDD(\bA) \to \DDD(\bA')$ by
\[
\DDD(H)(x) =  \{\, a \in A' \mid (\exists i  \in A) (\exists X \in \con_{i}) \{ i \} \cup X \subseteq x \wedge X H_{i} a \,\}.
\]

\begin{lemma}\label{lem-apmapst}
$\DDD(H)$ is Scott continuous.
\end{lemma}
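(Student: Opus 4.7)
The plan is to show directly that $\DDD(H)$ preserves directed unions, exploiting the fact that both $\DDD(\bA)$ and $\DDD(\bA')$ are ordered by set inclusion, so directed suprema are just unions of the underlying sets of tokens.

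First I would observe that $\DDD(H)$ is monotone. If $x \subseteq y$ are in $|\bA|$ and $a \in \DDD(H)(x)$, then there exist $i \in A$ and $X \in \con_{i}$ witnessing membership, namely $\{i\} \cup X \subseteq x$ and $XH_{i}a$. The same witnesses work for $y$, so $a \in \DDD(H)(y)$. Consequently, for every directed $S \subseteq |\bA|$ the image $\DDD(H)[S]$ is directed in $|\bA'|$, and one immediately gets the easy inclusion $\bigcup \DDD(H)[S] \subseteq \DDD(H)(\bigcup S)$.

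For the reverse inclusion, suppose $a \in \DDD(H)(\bigcup S)$. Unpacking the definition, there are $i \in A$ and $X \in \con_{i}$ with $\{i\} \cup X \subseteq \bigcup S$ and $XH_{i}a$. The crucial point is that $\{i\} \cup X$ is \emph{finite}: each of its (finitely many) elements lies in some member of $S$, and by directedness of $S$ there exists a single $x \in S$ containing all of them. Then $\{i\} \cup X \subseteq x$, still $XH_{i}a$, so $a \in \DDD(H)(x) \subseteq \bigcup \DDD(H)[S]$. Combining both inclusions gives $\DDD(H)(\bigcup S) = \bigcup \DDD(H)[S]$, which, since directed suprema in $|\bA|$ and $|\bA'|$ are unions (by Lemma~\ref{lem-infosysdom} applied to $\bA$ and $\bA'$, together with Lemma~\ref{lem-apsc} ensuring $\DDD(H)(\bigcup S)$ is a genuine state), is exactly Scott continuity.

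I do not anticipate a real obstacle here: once the image is known to land in $|\bA'|$ (Lemma~\ref{lem-apsc}) and the order is inclusion, the argument is the standard ``finite witnesses travel through directedness'' argument, and none of the approximable-mapping axioms \ref{dn-am}\eqref{dn-am-1}--\eqref{dn-am-5} need to be re-invoked in the continuity proof itself, because they were already absorbed into showing $\DDD(H)$ is well-defined.
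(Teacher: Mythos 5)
Your argument is correct and is essentially the standard one the paper relies on (it defers to \cite[Lemma 31]{sp21}, where the same ``finite witness plus directedness'' argument is used): monotonicity gives one inclusion, and the finiteness of $\{i\}\cup X$ together with directedness of $S$ gives the other, with Lemma~\ref{lem-apsc} guaranteeing the values are states and Lemma~\ref{lem-infosysdom} identifying directed suprema with unions.
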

The proof of this and the next lemma is as in \cite[Lemmas 31, 32]{sp21}.

\begin{lemma}\label{lem-funcl}
$\fun{\DDD}{\INF}{\DOM}$ is a functor such that $\DDD[\aINF] \subseteq \aDOM$, $\DDD[\INF_{\bt}] \subseteq \DOM_{\bot}$, and $\DDD[\aINF_{\bt}] \subseteq \aDOM_{\bot}$
\end{lemma}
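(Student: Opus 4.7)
My plan is to combine preservation of identity and composition with the object-level statements of Theorem~\ref{thm-infdom}. Since Lemmas~\ref{lem-apsc} and \ref{lem-apmapst} already establish that $\DDD$ sends approximable mappings between information frames to Scott continuous functions between the associated domains, the only genuinely new work is to verify the two functor axioms, and then to observe that the three subcategory containments follow from what is already available.

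For identity preservation, I would unfold the definition of $\DDD((\vdash_{i})_{i \in A})$, obtaining
\[
\DDD(\Id_{\bA})(x) = \set{a \in A}{(\exists i \in A)(\exists X \in \con_{i})\, \{i\} \cup X \subseteq x \wedge X \vdash_{i} a}.
\]
The containment $\DDD(\Id_{\bA})(x) \subseteq x$ is exactly closure under entailment, Definition~\ref{dn-st}\eqref{dn-st-2}, and the reverse containment is completeness, Definition~\ref{dn-st}\eqref{dn-st-3}. Hence $\DDD(\Id_{\bA}) = \id_{\DDD(\bA)}$.

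For composition, I would show $\DDD(G \circ H) = \DDD(H) \circ \DDD(G)$ as set-valued functions. The direction $\DDD(G \circ H)(x) \subseteq \DDD(H)(\DDD(G)(x))$ is a direct unpacking of definitions: a witness $(i,X)$ for $a \in \DDD(G \circ H)(x)$ supplies $e \in A^{(2)}$ and $V \in \con^{(2)}_{e}$ with $X G_i (\{e\}\cup V)$ and $V H_e a$, and then $(i,X)$ itself witnesses each $b \in \{e\} \cup V$ lying in $\DDD(G)(x)$, whence $(e,V)$ witnesses $a \in \DDD(H)(\DDD(G)(x))$. The reverse direction is the main obstacle, because from $a \in \DDD(H)(\DDD(G)(x))$ one only obtains, for each $b \in \{e\} \cup V$, an individual witness $i_{b}, X_{b} \subseteq x$ with $X_{b} G_{i_{b}} b$, and these must be amalgamated into a single pair $(i,X)$ working uniformly for all of $\{e\} \cup V$. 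Here I would invoke the combined state condition (ST) of Proposition~\ref{pn-stsing}: applied to the finite set $F = \bigcup_{b \in \{e\} \cup V} (\{i_{b}\} \cup X_{b}) \fsubset x$ it yields $i \in x$ and $X \fsubset x$ with $X \in \con^{(1)}_{i}$ and $X \vdash^{(1)}_{i} F$. Since $X \vdash^{(1)}_{i} (\{i_{b}\} \cup X_{b})$ and $X_{b} G_{i_{b}} b$ for every $b$, Lemma~\ref{lem-amstrong3} delivers $X G_{i} b$, so $X G_{i} (\{e\} \cup V)$, and consequently $X (G \circ H)_{i} a$ with $\{i\} \cup X \subseteq x$, proving $a \in \DDD(G \circ H)(x)$.

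Finally, the three subcategory containments reduce to the object-level statements of Theorem~\ref{thm-infdom}, since $\aDOM$, $\DOM_{\bot}$, and $\aDOM_{\bot}$ are full subcategories of $\DOM$ and therefore impose no further condition on morphisms. Theorem~\ref{thm-infdom}\eqref{thm-infdom-AL} gives $\DDD[\aINF] \subseteq \aDOM$, Theorem~\ref{thm-infdom}\eqref{thm-infdom-T} gives $\DDD[\INF_{\bt}] \subseteq \DOM_{\bot}$, and combining both yields $\DDD[\aINF_{\bt}] \subseteq \aDOM_{\bot}$.
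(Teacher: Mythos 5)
Your proof is correct, and it supplies exactly the details that the paper outsources to \cite[Lemmas 31, 32]{sp21}: identity preservation via closure under entailment together with completeness, composition via amalgamating the finitely many witnesses with Proposition~\ref{pn-stsing} and then applying Lemma~\ref{lem-amstrong3}, and the subcategory containments via Theorem~\ref{thm-infdom} plus fullness of $\aDOM$, $\DOM_{\bot}$, $\aDOM_{\bot}$ in $\DOM$. This is the same route the paper intends, so nothing further is needed.
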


Next, let us consider the reverse situation in which we move from domains to information frames. As we will see, every Scott continuous function $\fun{f}{D}{D'}$ between domains $\DD$ and $\DD'$ defines an approximable mapping $\appmap{\FFF(f)}{\FFF(\DD)}{\FFF(\DD')}$. 

Let $\DD$ and $\DD'$, respectively, have bases $B$ and $B'$. Then, for $i \in B$, $X \in \{\{i\}\} \cup \powf{\dda_{B} i}$, and $a \in B'$, set
\[
X \FFF(f)_{i} a \Leftrightarrow (\exists c \in X \cup \{i\}) a \ll' f(c).
\]

\begin{lemma}\label{lem-scam}
$\appmap{\FFF(f)}{\FFF(\DD)}{\FFF(\DD')}$. Moreover, $\FFF(f)$ respects truth elements in case $\DD$ and $\DD'$ are pointed.
\end{lemma}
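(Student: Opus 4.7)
The plan is to verify each of the five conditions in Definition~\ref{dn-am} for the family $(\FFF(f)_{i})_{i \in B}$, and then check the truth-element preservation clause \eqref{dn-am-6}. Most of the work reduces to elementary manipulations of the defining equivalence $X \FFF(f)_{i} a \Leftrightarrow (\exists c \in X \cup \{i\}) \, a \ll' f(c)$, combined with monotonicity of the Scott-continuous $f$ and Lemma~\ref{lem-preordprop}. Specifically, clause \eqref{dn-am-1} is handled by unfolding $\vDash'_{k}$: the witness $b'$ with $b \ll' b'$ either equals $k$ or lies in $Y$, and in either case chaining $\ll'$-steps via Lemma~\ref{lem-preordprop}\eqref{lem-preordprop-2} delivers $b \ll' f(c)$ for some $c \in X \cup \{i\}$. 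Clause \eqref{dn-am-2} is immediate, since the witness set $X \cup \{i\}$ only grows when $X$ is replaced by $X' \supseteq X$. Clause \eqref{dn-am-3} uses monotonicity of $f$ to transport a witness $c \in X' \cup \{i\}$ back through the entailment $X \vDash_{i} X'$ (if $c = i$ there is nothing to do; if $c \in X'$, pick $c'' \in X \cup \{i\}$ with $c \ll c''$, so $f(c) \sqsubseteq' f(c'')$). For clause \eqref{dn-am-4}, since $iRj$ forces $i = j$ or $i \ll j$, one first verifies (using transitivity of $\ll$) that $X \in \Con_{j}$ in the nontrivial case, and then substitutes $j$ for $i$ on the right via monotonicity.

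The main obstacle is clause \eqref{dn-am-5}. The starting point is the observation that $i$ is a $\sqsubseteq$-upper bound of $X \cup \{i\}$: if $X = \{i\}$, trivially; if $X \subseteq \dda_{B} i$, each element is $\ll i$ hence $\sqsubseteq i$. Consequently, by monotonicity and Lemma~\ref{lem-preordprop}\eqref{lem-preordprop-2}, the hypothesis $X \FFF(f)_{i} F$ upgrades to $F \ll' f(i)$. To pull this back into the basis on the domain side, pick a directed $S \subseteq \dda_{B} i$ with $\bigsqcup S = i$; by Scott continuity, $f(i) = \bigsqcup f[S]$ with $f[S]$ directed. For each $a \in F$, interpolating in $\DD'$ (Lemma~\ref{lem-preordprop}\eqref{lem_preordprop-5}) gives $a \ll' a' \ll' f(i)$, and the definition of $\ll'$ applied to the directed set $f[S]$ then yields $s_{a} \in S$ with $a' \sqsubseteq' f(s_{a})$, whence $a \ll' f(s_{a})$ by Lemma~\ref{lem-preordprop}\eqref{lem-preordprop-2}. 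Directedness of $S$ merges the finitely many $s_{a}$ into a single $s \in S \subseteq \dda_{B} i$ with $F \ll' f(s)$. A final application of the finite interpolation law in $\DD'$ produces $e \in B'$ with $F \ll' e \ll' f(s)$. Setting $c := s$ and $U := V := \emptyset$ then satisfies all three conjuncts in \eqref{dn-am-5} simultaneously: $X \vDash_{i} \{s\}$ via $s \ll i$; $\emptyset \FFF(f)_{s} \{e\}$ via $e \ll' f(s)$; and $\emptyset \vDash'_{e} F$ via $a \ll' e$ for each $a \in F$.

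For the final assertion, suppose $\DD$ and $\DD'$ are pointed with least elements $\bot, \bot'$. Theorem~\ref{thm-dominf}\eqref{thm-dominf-2} identifies $\bot$ and $\bot'$ as the truth elements of $\FFF(\DD)$ and $\FFF(\DD')$, so the required condition unfolds to $\bot' \ll' f(\bot)$. This is immediate: $\bot' \sqsubseteq' f(\bot)$ because $\bot'$ is least, and $\bot'$ is compact by Lemma~\ref{lem-preordprop}\eqref{lem-preordprop-4}, so Lemma~\ref{lem-preordprop}\eqref{lem-preordprop-k} finishes the job. As anticipated, clause \eqref{dn-am-5} is the only spot where Scott continuity of $f$ is genuinely needed; the coordination of a domain-side approximant $s \ll i$ with a codomain-side interpolant $e \ll' f(s)$, bridged by the interpolation law on each side, is the one nontrivial move.
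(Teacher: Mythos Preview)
Your proof is correct and follows essentially the same approach as the paper's. The only noteworthy differences are cosmetic: in clause~\eqref{dn-am-5} the paper interpolates in $\DD'$ first (obtaining $e$ with $S \ll' e \ll' f(i)$) and then invokes Scott continuity to find $d \ll i$, whereas you reverse this order; and the paper takes $U = \{d\}$, $V = \{e\}$ as witnesses while you use $U = V = \emptyset$, which works just as well since $\emptyset \in \Con_{c}$ and $\emptyset \in \Con'_{e}$.
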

\begin{proof}
Let $i \in B$, $X \in \{\{i\}\} \cup \powf{\dda_{B} i}$, $k, d \in B'$, and $Y, S \fsubset A'$. We have to verify Conditions~\ref{dn-am}\eqref{dn-am-1}-\eqref{dn-am-6}.

\eqref{dn-am-1}  Assume that $X \FFF(f) _{i }(\{k\} \cup Y)$ and $Y \vDash'_{k} b$. Then we have that for all $d \in Y \cup \{k\}$ there is some $c_{d} \in X \cup \{i\}$ with $d \ll' f(c_{d})$. Moreover, there is some $\hat{d}  \in Y \cup \{k\}$ with $b \ll' \hat{d}$. Therefore, $b \ll' f(c_{\hat{d}})$. That is, $X \FFF(f)_{i} b$.

Condition~\eqref{dn-am-3} follows in a similar way, and Conditions~\eqref{dn-am-2} and \eqref{dn-am-4} are obvious, as is Condition~\eqref{dn-am-6} in case that $\DD$ and $\DD'$ are pointed. We consider only Condition~\eqref{dn-am-5}.

Assume that $X \FFF(f)_{i} S$. Then there is some $c_{s} \in X \cup \{i\}$ with  $s \ll' f(c_{s})$, for each $s \in S$. By the monotonicity of $f$ and Lemma~\ref{lem-preordprop} we have  in particular  that $S \ll' f(i)$.  Because of the interpolation law there is thus some $e \in B'$ with $S \ll' e \ll' f(i)$.  As $i = \bigsqcup \dda_{B} i$ and $f$ is Scott continuous, we have that $f(i) = \bigsqcup f[\dda_{B} i]$. It follows that there is some $d \in B$ with $d \ll i$ so that $e \ll' f(d)$. Set $U = \{d\}$ and $V = \{e\}$. Then we have that $X \vDash_{i} \{d\} 
\cup U$, $U \mathcal{F}(f)_{d} (\{e\}\cup V)$ and $V \vDash'_{e} S$.
\end{proof}

\begin{lemma}\label{lem-funci}
$\fun{\FFF}{\DOM}{\sINF}$ is a functor such that 
\begin{enumerate}
\item
$\FFF[\aDOM] \subseteq \asINF$, 

\item
$\FFF[\DOM_{\bot}] \subseteq \sINF_{\bt}$ and 

\item
$\FFF[\aDOM_{\bot}] \subseteq \asINF_{\bt}$.

\end{enumerate}
\end{lemma}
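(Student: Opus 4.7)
The plan is to verify functoriality of $\FFF$ and then the three subcategory inclusions. Well-definedness on objects is handled by Theorem~\ref{thm-dominf} and well-definedness on morphisms by Lemma~\ref{lem-scam}, so what remains is identity preservation, composition preservation, and the subcategory conditions.

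Identity preservation is immediate: for a basis $B$ of $\DD$, $i \in B$, $X \in \Con_{i}$ and $a \in B$,
\[
X\,\FFF(\id_{\DD})_{i}\,a \iff (\exists c \in X \cup \{i\})\, a \ll c \iff X \vDash_{i} a,
\]
so $\FFF(\id_{\DD}) = (\vDash_{i})_{i \in B} = \Id_{\FFF(\DD)}$ by Lemma~\ref{lem-amprop}\eqref{lem-amprop-1}.

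The substantive step is composition preservation for Scott continuous $\fun{f}{D}{D'}$ and $\fun{g}{D'}{D''}$; the aim is $\FFF(g \circ f) = \FFF(f) \circ \FFF(g)$, which I would establish by double inclusion. For the forward direction, if $X\,\FFF(g \circ f)_{i}\,a$ holds via some $c_{0} \in X \cup \{i\}$ with $a \ll'' g(f(c_{0}))$, I would invoke interpolation in $\DD''$ to obtain $e' \in B''$ with $a \ll'' e' \ll'' g(f(c_{0}))$, then use Scott continuity of $g$ together with $f(c_{0}) = \bigsqcup \dda_{B'} f(c_{0})$ to extract $e \in B'$ satisfying $e \ll' f(c_{0})$ and $e' \sqsubseteq'' g(e)$. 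Taking $V = \emptyset \in \Con'_{e}$, the composition witness conditions then read $X\,\FFF(f)_{i}\,\{e\}$ (discharged by $c_{0}$) and $\emptyset\,\FFF(g)_{e}\,a$, the latter following from $a \ll'' e' \sqsubseteq'' g(e)$ via Lemma~\ref{lem-preordprop}\eqref{lem-preordprop-2}. The reverse direction rests on monotonicity of $g$: from $X\,\FFF(f)_{i}\,(\{e\} \cup V)$ and $V\,\FFF(g)_{e}\,a$ I would pick $c_{1} \in V \cup \{e\}$ with $a \ll'' g(c_{1})$ and then $c_{2} \in X \cup \{i\}$ with $c_{1} \ll' f(c_{2})$; monotonicity gives $g(c_{1}) \sqsubseteq'' g(f(c_{2}))$, and Lemma~\ref{lem-preordprop}\eqref{lem-preordprop-2} delivers $a \ll'' g(f(c_{2}))$.

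For the three inclusions: in the algebraic case I would choose $B$ to be the set of compact elements of $\DD$, which satisfies the hypothesis of Theorem~\ref{thm-dominf}\eqref{thm-dominf-1} and yields $\FFF[\aDOM] \subseteq \asINF$; for pointed $\DD, \DD'$, Theorem~\ref{thm-dominf}\eqref{thm-dominf-2} supplies the truth element while Lemma~\ref{lem-scam} ensures $\FFF(f)$ respects it, giving $\FFF[\DOM_{\bot}] \subseteq \sINF_{\bt}$; combining these yields $\FFF[\aDOM_{\bot}] \subseteq \asINF_{\bt}$. The main obstacle is the composition argument, and in particular spotting that $V = \emptyset$ suffices—this sidesteps the need to manufacture auxiliary elements strictly below $e$ in $B'$ and keeps both witness conditions satisfiable from data already in hand.
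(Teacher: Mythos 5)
Your proposal is correct and fills in exactly the argument the paper leaves as a sketch ("functoriality follows from continuity, similarly as in the last step of the preceding proof"): your composition argument uses the same interpolation-plus-Scott-continuity device as the verification of Condition~\ref{dn-am}\eqref{dn-am-5} in Lemma~\ref{lem-scam}, and the identity and subcategory checks match the paper's intent. The choice $V=\emptyset\in\Con'_{e}$ is a clean way to discharge the composite's witness conditions.
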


Functoriality follows from continuity, similarly as in the last step of the preceding proof. The other property is obvious.

As we will show next, the functors $\fun{\DDD}{\INF}{\DOM}$ and $\fun{\FFF}{\DOM}{\INF}$ establish an equivalence between the categories $\INF$ and $\DOM$.

For a category $\bC$ let $\ID_{\bC}$ be the identity functor on $\bC$. We first construct a natural isomorphism $\fun{\eta}{\ID_{\INF}}{\FFF \circ \DDD}$. Let to this end $\bA$ be an information frame. Then 
\begin{equation*}
\FFF( \DDD(\bA)) = (\hat{\con}, (\Con_{u})_{u \in \hat{\con}}, (\vDash_{u})_{u \in \hat{\con}}).
\end{equation*} 

For $i, a \in A, X \in \con_{i}, u \in \hat{\con}$ and  $\XF \in \Con_{u}$ define  
\begin{align*}
&X S^{A}_{i} u \Leftrightarrow u \ll [X]_{i}, \\
&\XF T^{A}_{u} a \Leftrightarrow (\exists v \in \hat{\con}) \XF \vDash_{u} v \wedge a \in v
\end{align*}
and set $S_{A} = (S^{A}_{i})_{i \in A}$ and $T_{A} = (T^{A}_{u})_{u \in \hat{\con}}$.

\begin{lemma}\label{lem-IF}
\begin{enumerate}

\item\label{lem-IF-1}
$\appmap{S_{\bA}}{\bA}{\FFF(\DDD(\bA))}$ such that truth elements are respected, if $\bA$ has a truth element.

\item\label{lem-IF-2}
$\appmap{T_{\bA}}{\FFF(\DDD(\bA))}{\bA}$ such that truth elements are respected, if $\bA$ has a truth element.

\item\label{lem-IF-3}
$S_{\bA} \circ T_{\bA} = \Id_{\bA}$.

\item\label{lem-IF-4}
$T_{\bA} \circ S_{\bA} = \Id_{\FFF(\DDD(\bA))}$.

\end{enumerate}
\end{lemma}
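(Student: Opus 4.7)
First, I would establish the key characterization $\XF T^A_u a \Leftrightarrow a \in u$, which underpins parts (2)--(4). The forward direction uses Lemma~\ref{lem-preordprop}\eqref{lem-preordprop-1} together with the observation that $\XF \in \Con_u = \{\{u\}\} \cup \powf{\dda u}$ forces every $d \in \XF \cup \{u\}$ to satisfy $d \sqsubseteq u$; hence any $v$ with $v \ll d$ and $a \in v$ has $a \in v \subseteq u$. The converse uses completeness of $u$ (Def.~\ref{dn-st}\eqref{dn-st-3}) to pick $i \in u, X \fsubset u$ with $X \in \con_i$ and $X \vdash_i a$; Proposition~\ref{pn-app} then gives $v = [X]_i \ll u$ with $a \in v$, witnessing $\XF T^A_u a$.

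For part (1), each clause of Definition~\ref{dn-am} for $S_\bA$ translates into a property of $\ll$ on $\DDD(\bA)$: \eqref{dn-am-1} is transitivity; \eqref{dn-am-2} combines Weakening (which gives $[X]_i \subseteq [X']_i$ when $X \subseteq X'$) with Lemma~\ref{lem-preordprop}\eqref{lem-preordprop-2}; \eqref{dn-am-3} uses Cut (giving $[X']_i \subseteq [X]_i$ when $X \vdash_i X'$); \eqref{dn-am-4} uses consistency and entailment transfer (giving $[X]_i \subseteq [X]_j$ when $iRj$). Clause~\eqref{dn-am-5} is the main step: given $F \ll [X]_i$, two successive applications of the interpolation law Lemma~\ref{lem-preordprop}\eqref{lem_preordprop-5} yield $w_1, w_2 \in \hat{\con}$ with $F \ll w_1 \ll w_2 \ll [X]_i$; unpacking $w_2 \ll [X]_i$ via Proposition~\ref{pn-app} produces $c \in A$ and $U \in \con_c$ with $\{c\} \cup U \subseteq [X]_i$ and $U \vdash_c w_2$, so $w_2 \subseteq [U]_c$ and hence $w_1 \ll [U]_c$. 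The choice $e = w_1, V = \emptyset$ then satisfies all of \eqref{dn-am-5}.

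For part (2), with the $a \in u$ characterization in hand, clauses~\eqref{dn-am-1}--\eqref{dn-am-4} reduce to routine closure facts for the state $u$ (Def.~\ref{dn-st}\eqref{dn-st-2}), the inclusion $d \subseteq u$ for $d \in \XF \cup \{u\}$, and consistency transfer in $\FFF(\DDD(\bA))$. Clause~\eqref{dn-am-5} is the heart of the argument: given $F \subseteq u$, I aggregate the per-element completeness witnesses for $f \in F$ using finite consistency of $u$ together with the transfer axioms to obtain a single $i \in u$ and $G \in \con_i$ with $G \subseteq u$ and $G \vdash_i F$; then the global interpolation axiom~\ref{dn-infofr}\eqref{dn-infofr-11} applied to $G \vdash_i F$ yields $e \in A, V \in \con_e$ with $G \vdash_i (\{e\} \cup V)$ and $V \vdash_e F$. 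Setting $c = [G]_i$ (which satisfies $c \ll u$ by Proposition~\ref{pn-app}) and $U = \emptyset$ supplies the remaining data.

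For (3) and (4) I would unfold the compositions. $X(S_\bA \circ T_\bA)_i a$ unfolds to the existence of $u \in \hat{\con}, \XF \in \Con_u$ with $\{u\} \cup \XF \ll [X]_i$ componentwise and $a \in u$; one direction follows from $u \subseteq [X]_i$, and for the converse I apply axiom~\ref{dn-infofr}\eqref{dn-infofr-11} to $X \vdash_i a$ to get $e \in A, Z \in \con_e$ with $\{e\} \cup Z \subseteq [X]_i$ and $Z \vdash_e a$, then take $u = [Z]_e, \XF = \emptyset$. Likewise, $\XF(T_\bA \circ S_\bA)_u v$ unfolds to $\exists e \in A, Z \in \con_e$ with $\{e\} \cup Z \subseteq u$ and $v \ll [Z]_e$; one direction uses $[Z]_e \ll u$ (Proposition~\ref{pn-app}) and transitivity, while the converse reduces to $v \ll u$ (coming from $v \ll d \sqsubseteq u$), then interpolates $w$ with $v \ll w \ll u$ via Lemma~\ref{lem-preordprop}\eqref{lem_preordprop-5} and uses Proposition~\ref{pn-app} to extract the finite witnesses. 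Truth-element preservation is immediate from compactness of the least element $[\emptyset]_\bt$ of $\DDD(\bA)$ and the identity $\bt \in [\emptyset]_\bt$. The main obstacle is clause~\eqref{dn-am-5} for $T_\bA$, which demands synthesising the per-element completeness witnesses into a single consistent set $G$ before the frame's interpolation axiom can do its work; after that, everything else is bookkeeping with Proposition~\ref{pn-app} and Lemma~\ref{lem-preordprop}.
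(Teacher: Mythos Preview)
Your proposal is correct and, in outline, follows the same strategy as the paper: verify the clauses of Definition~\ref{dn-am} for $S_\bA$ and $T_\bA$ by translating them into statements about $\ll$ on $\DDD(\bA)$ via Proposition~\ref{pn-app} and Lemma~\ref{lem-preordprop}, and then unfold the two compositions.

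The one genuine difference is your preliminary characterization $\XF\,T^A_u\,a \Leftrightarrow a \in u$. The paper never isolates this; it works directly with the defining clause $(\exists v)\,\XF \vDash_u v \wedge a \in v$ at each point where $T_\bA$ appears. Your observation that the truth of $\XF\,T^A_u\,a$ is independent of $\XF$ (because every $d \in \XF \cup \{u\}$ lies below $u$, and conversely $u$ itself is always available as a witness) is a real simplification: it collapses clause~\ref{dn-am}\eqref{dn-am-1} for $T_\bA$ to ordinary entailment closure of the state $u$, and reduces \eqref{dn-am-5} to the (ST) condition of Proposition~\ref{pn-stsing} plus one application of the interpolation axiom. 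The paper's proof of \eqref{lem-IF-2}\eqref{dn-am-1} instead has to aggregate the individual witnesses $v_k$ by hand using the interpolation law before it can invoke Lemma~\ref{lem-strong6}, and similarly in parts \eqref{lem-IF-3} and \eqref{lem-IF-4} the paper interpolates one extra step to manufacture explicit intermediate basis elements where you simply take $\XF = \emptyset$ or $V = \emptyset$. Both routes are valid; yours is shorter and makes the independence-of-$\XF$ phenomenon explicit, while the paper's stays closer to the raw definitions throughout.
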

\begin{proof}
\eqref{lem-IF-1}
We have to verify the conditions in Definition~\ref{dn-am}. Let to this end $i \in A$ and $X \in \con_{i}$. 

For Condition~\eqref{dn-am-1} assume further that $u, v \in  \hat{\con}$ and $\XF \in \Con_{u}$ with $X S^{A}_{i} (\{u\} \cup \XF)$ and $\XF \vDash_{u} v$. By the first assumption we have that for all $w \in \{u\} \cup \XF$, $w \ll [X]_{i}$, and by the second one that for some $w' \in \{u\} \cup \XF$, $v \ll  w'$. It follows that $v \ll [X]_{i}$, that is, $X S^{A}_{i} v$,

Next, for Condition~\eqref{dn-am-2} let $X' \in \con_{i}$ with $X \subseteq X'$. It follows that $[X]_{i} \subseteq [X']_{i}$. Now, assume in addition that  for $u \in \hat{\con}$, $X S^{A}_{i} u$. Then $u \ll [X]_{i} \subseteq [X']_{i}$. Hence, $u \ll [X']_{i}$, by Lemma~\ref{lem-preordprop}\eqref{lem-preordprop-2}. Thus, $X' S^{A}_{i} u$.

Conditions~\eqref{dn-am-3} and \eqref{dn-am-4} follow similarly as $[X']_{i} \subseteq [X]_{i}$ if $X \vdash_{i} X'$, and $[X]_{i} \subseteq [X]_{j}$ if $i R j$. 

For Condition~\eqref{dn-am-5} let $\YF \subseteq \hat{\con}$ with $X S^{A}_{i} \YF$. Then $\YF \ll [X]_{i}$. With the interpolation law it follows that there are $w, w' \in \hat{\con}$ so that $\YF \ll w \ll w' \ll [X]_{i}$. Hence, by Proposition~\ref{pn-app}, there is some $[V]_{k} \in \hat{\con}$ with $\{k\} \cup V \subseteq [X]_{i}$ and $V \vdash_{k} w'$. The latter implies that $w' \subseteq [V]_{k}$. Thus, $w \ll [V]_{k}$, by Lemma~\ref{lem-preordprop}\eqref{lem-preordprop-2}. So, we have that $X \vdash_{i} (\{k\} \cup V)$ and $V S^{A}_{k} w$. Since, $\YF \ll w$, we also have that $\{w\} \vDash_{w} \YF$.

Finally, for Condition~\eqref{dn-am-6}, assume that $\bA$ has truth element $\bt$. Then $\FFF(\DDD(\bA))$ has truth element $[\emptyset]_{\bt}$. As we have seen in Lemma~\ref{lem-point}, $[\emptyset]_{\bt}$ is the least element of $\DDD(\bA)$ and therefore compact by Lemma~\ref{lem-preordprop}\eqref{lem-preordprop-4}. Thus, $\emptyset S^{A}_{\bt} [\emptyset]_{\bt}$.

\eqref{lem-IF-2}
We only verify Condition~\ref{dn-am}\eqref{dn-am-1}. The other conditions are obvious.

Assume that $u \in \hat{\con}$, $\XF \in \Con_{u}$,  $e \in A$ and $U \in \con_{e}$ so that $\XF T^{\bA}_{u} (\{e\} \cup U)$ and $U \vdash_{e} a$. Then there are  $v_{k} \in \hat{\con}$ with $\XF \vDash_{u} v_{k}$ and $k \in v_{k}$, for all $k \in \{e\} \cup U$.  Set $\ZF = \set{v_{k}}{k \in \{e\} \cup U}$. Then $\ZF \ll u$, by the definition of $\vDash_{u}$. Since $\ZF$ is finite, there is some $w \in \hat{\con}$ with $\ZF \ll w \ll u$, by the interpolation law. With Lemma~\ref{lem-preordprop} it follows that $\{e\} \cup U \subseteq \bigcup \ZF \subseteq w$.  As $w \ll u$, we have that $\XF \vDash_{u} w$. Because $w \in \hat{\con}$, there are $j \in A$ and $V \in \con_{j}$ so that $w = [V]_{j}$. Thus, we have that $V \vdash_{j} \{e\} \cup U$. By assumption, $U \vdash_{e} a$. With Lemma~\ref{lem-strong6} we therefore have that $V \vdash_{j} a$, that is, $a \in w$. This shows that $\XF T^{\bA}_{u} a$.

\eqref{lem-IF-3}
Let $i, a \in A$ and $X \in \con_{i}$ with $(S_{\bA} \circ T_{\bA})_{i} a$. Then there is some $u \in \hat{\con}$ and $\XF \in \Con_{u}$ so that $X S^{\bA}_{i} (\{u\} \cup \XF)$ and $\XF T^{\bA}_{u} a$. That is, $(\{u\} \cup  \XF) \ll [X]_{i}$ and for some $v \in \hat{\con}$ with $a \in v$, $\XF \vDash_{u} v$. By the definition of $\vDash_{u}$ there is then some $w \in \XF \cup \{u\}$ with $v \ll w$. Hence, $v \ll [X]_{i}$ and thus $v \subseteq [X]_{i}$. It follows that $X \vdash_{i} a$, that is, $X \Id_{\bA} a$. 

For the converse inclusion, assume that $X \Id_{\bA} a$, that is, $X \vdash_{i} a$. By the interpolation property there exists $e, k \in A$, $U \in \con_{e}$ and $V \in \con_{k}$ so that $X \vdash_{i} \{e\} \cup U$, $U \vdash_{e} \{k\} \cup V$  and $V \vdash_{k} a$. Then $a \in [V]_{k}$. Moreover, it follows with Proposition~\ref{pn-app} that $[V]_{k} \ll [U]_{e} \ll [X]_{i}$. Set $u = [U]_{e}$ and $\XF = \{u\}$. Then $\XF \vDash_{u} [V]_{k}$ and hence $\XF T^{\bA}_{u} a$. In addition, $X S^{\bA}_{i} (\{u\} \cup \XF)$. Therefore, $X (S_{\bA} \circ T_{\bA})_{i} a$.

\eqref{lem-IF-4}
Let $u, v \in \hat{\con}$ and $\XF \in \Con_{u}$ with $\XF (T_{\bA} \circ S_{\bA})_{u} v$. Then there are $i \in A$ and $X \in \con_{i}$ so that $\XF T^{\bA}_{u} (\{i\} \cup X)$ and $X S^{\bA}_{i} v$. It follows for some $w \in \hat{\con}$ that $\XF \vDash_{u} w$ and $\{i\} \cup X \subseteq w$. Hence, $[X]_{i} \ll w$ and $v \ll [X]_{i}$. So, $v \ll w$, that is, $\{w\} \vDash_{w} v$. With Lemma~\ref{lem-strong6} it follows that $\XF \vDash_{u} v$.

Conversely, let $\XF \vDash_{u} v$. Then there is some $w \in \XF \cup \{u\}$ with $v \ll w$. Furthermore, by the interpolation law, there are $x, z \in \hat{\con}$ so that $v \ll  x \ll z \ll w$, which means we have that $\XF \vDash_{u} z$ and $v \ll x \ll z$. By Proposition~\ref{pn-app} it follows from the latter that are $e \in A$ and $V \in \con_{e}$ with $\{e\} \cup V \subseteq z$ and $V \vdash_{e} x$. Hence, $v \ll x \subseteq [V]_{e}$, which implies that $v \ll [V]_{e}$. Thus, we have that $\XF T^{\bA}_{u} (\{e\} \cup V)$ and $V S^{\bA}_{e} v$, that is, $\XF (T_{\bA} \circ S_{\bA})_{u} v$. 
\end{proof}

Set $\eta_{\bA} = S_{\bA}$. We want to show that $\eta$ is a natural transformation. 

\begin{lemma}\label{lem-etanat}
Let $\bA'$ be a further information frame and $\appmap{H}{\bA}{\bA'}$.
\begin{enumerate}

\item \label{lem-etanat-1}
Let $u \in \hat{\con}$, $\XF \in \Con_{u}$ and $v \in \hat{\con'}$. Then 
\begin{align*}
\XF \FFF(\DDD(H))_{u} v \Leftrightarrow &(\exists w \in \{u\} \cup \XF) (\exists j \in A) (\exists Z \in \con_{j}) \\
&(\exists k \in A') (\exists V \in \con'_{k}) \{j\} \cup Z \subseteq w \wedge Z H_{j} (\{k\} \cup V) \wedge V \vdash'_{k} v.
\end{align*}
 
\item\label{lem-etanat-2}
$\eta$ is a natural transformation.

\end{enumerate}
\end{lemma}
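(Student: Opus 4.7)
For part~\eqref{lem-etanat-1}, I would unfold $\XF \FFF(\DDD(H))_u v$ via the construction of $\FFF$ applied to the Scott continuous function $\DDD(H)$, obtaining equivalence with $v \ll' \DDD(H)(w)$ for some $w \in \{u\} \cup \XF$. Proposition~\ref{pn-app} rewrites this as the existence of $k \in A'$ and $V \in \con'_k$ with $\{k\} \cup V \subseteq \DDD(H)(w)$ and $V \vdash'_k v$, while the containment unfolds into local witnesses $(j_c, Z_c)$ with $\{j_c\} \cup Z_c \subseteq w$, $Z_c \in \con_{j_c}$ and $Z_c H_{j_c} c$, one for each $c \in \{k\} \cup V$. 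The nontrivial move is consolidating these finitely many pairs into one $(j,Z)$: set $F := \bigcup_c (\{j_c\} \cup Z_c) \subseteq w$, invoke finite consistency~\ref{dn-st}\eqref{dn-st-1} of the state $w$ to get some $j \in w$ with $F \in \con_j$, derive $j_c R j$ from $\{j_c\} \subseteq F \in \con_j$ together with consistency preservation, and then combine transfer~\ref{dn-am}\eqref{dn-am-4} with weakening~\ref{dn-am}\eqref{dn-am-2} to conclude $F H_j (\{k\} \cup V)$. The converse direction just reads this chain of equivalences backwards.

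For part~\eqref{lem-etanat-2}, the naturality square unpacks to $H \circ S_{\bA'} = S_{\bA} \circ \FFF(\DDD(H))$ as families of relations $\bA \to \FFF(\DDD(\bA'))$; I would fix $i \in A$, $X \in \con_i$, $v \in \hat{\con'}$ and prove both inclusions. For $\supseteq$, given a witness $(u, \XF, w, j, Z, k, V)$ from part~\eqref{lem-etanat-1}, we have $\{j\} \cup Z \subseteq w \subseteq [X]_i$ (the second inclusion from $w \ll [X]_i$ by Lemma~\ref{lem-preordprop}), so $X \vdash_i (\{j\} \cup Z)$; Lemma~\ref{lem-amstrong3} then promotes $Z H_j (\{k\} \cup V)$ to $X H_i (\{k\} \cup V)$. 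One use of codomain interpolation (Lemma~\ref{pn-amint}\eqref{pn-amint-2}) yields $e \in A'$ and $W \in \con'_e$ with $X H_i (\{e\} \cup W)$ and $W \vdash'_e (\{k\} \cup V)$; combined with $V \vdash'_k v$, Proposition~\ref{pn-app} gives $v \ll' [W]_e$, i.e.\ $W S^{\bA'}_e v$. For $\subseteq$, starting from $X H_i (\{b\} \cup W)$ with $v \ll' [W]_b$, apply domain interpolation (Lemma~\ref{pn-amint}\eqref{pn-amint-1}) to obtain $c \in A$ and $U \in \con_c$ with $X \vdash_i (\{c\} \cup U)$ and $U H_c (\{b\} \cup W)$, and Proposition~\ref{pn-app} on $v \ll' [W]_b$ to extract $k, V$ with $W \vdash'_b (\{k\} \cup V)$ and $V \vdash'_k v$. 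Condition~\ref{dn-am}\eqref{dn-am-1} then pushes $U H_c (\{b\} \cup W)$ through to $U H_c (\{k\} \cup V)$, and a second use of~\ref{pn-amint}\eqref{pn-amint-1} supplies $j, Z$ with $U \vdash_c (\{j\} \cup Z)$ and $Z H_j (\{k\} \cup V)$, so $\{j\} \cup Z \subseteq [U]_c$. Setting $u := [U]_c$ and $\XF := \{u\}$, the same computation as in Lemma~\ref{lem-IF}\eqref{lem-IF-1} shows $u \ll [X]_i$, whence $X S^{\bA}_i (\{u\} \cup \XF)$, and part~\eqref{lem-etanat-1} delivers the required $S_{\bA} \circ \FFF(\DDD(H))$-witness.

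I expect the forward direction of~\eqref{lem-etanat-2} to be the delicate step: the entailment $V \vdash'_k v$ extracted from part~\eqref{lem-etanat-1} only gives $v \subseteq [V]_k$, whereas $S_{\bA'}$ demands the strictly stronger $v \ll' [V]_k$. Interposing one codomain interpolation of $H$ to replace $(k, V)$ with a fresh pair $(e, W)$ satisfying $W \vdash'_e (\{k\} \cup V)$ creates exactly the ``gap'' between the new entailment witness and the data $\{k\} \cup V$ that Proposition~\ref{pn-app} converts into $v \ll' [W]_e$; every other manipulation in the proof is routine bookkeeping with the approximable mapping axioms.
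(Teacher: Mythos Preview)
Your proposal is correct and follows essentially the same route as the paper: part~\eqref{lem-etanat-1} is proved by unfolding $\FFF(\DDD(H))$, invoking Proposition~\ref{pn-app}, and consolidating the finitely many local witnesses via finite consistency of the state $w$ together with transfer and weakening; part~\eqref{lem-etanat-2} is obtained by chasing the naturality square with Lemma~\ref{pn-amint} and Proposition~\ref{pn-app}. The only noteworthy deviation is in the $\subseteq$-direction of~\eqref{lem-etanat-2}: the paper observes directly that $v \ll' [W]_b$ already yields $W \vdash'_b v$, so the original pair $(b,W)$ can serve as the $(k,V)$ in part~\eqref{lem-etanat-1}, and then uses one application of Lemma~\ref{pn-amint}\eqref{pn-amint-1} followed by one application of the entailment interpolation axiom~\ref{dn-infofr}\eqref{dn-infofr-11} to manufacture the state $w$; you instead re-extract a fresh $(k,V)$ via Proposition~\ref{pn-app} and apply Lemma~\ref{pn-amint}\eqref{pn-amint-1} twice. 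Both arguments are valid and of comparable length.
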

\begin{proof}
\eqref{lem-etanat-1} Assume that $\XF \FFF(\DDD(H))_{u} v$. Then there is some $w \in \{u\} \cup \XF$ so that $v \ll' \DDD(H)(w)$. With Proposition~\ref{pn-app} it follows that there are  $k \in A'$ and $V \in \con'_{k}$ with $v \subseteq [V]_{k}$ and $\{k\} \cup V \subseteq \DDD(H)(w)$. By the definition of $\DDD(H)(w)$ we now obtain that for every $a \in \{k\} \cup V$ there is some $i_{a} \in A$ and some $U_{a} \in \con_{i_{a}}$ such that $\{i_{a}\} \cup U_{a} \subseteq w$ and $U_{a} H_{i_{a}} a$. Set $Z= \bigcup\set{\{i_{a}\} \cup U_{a}}{a \in \{k\} \cup V}$. Then $Z \fsubset w$. Hence there is some $j \in w$ with $Z \in \con_{j}$. Since $i_{a} \in Z$, for all $a \in \{k\} \cup V$, we have that $U_{a} H_{j} a$ and furthermore that $Z H_{j} (\{k\} \cup V)$. As already seen, $V \vdash_{k} v$.  

For the converse implication let  $w \in \{u\} \cup \XF$, $ j \in A$,  $Z \in \con_{j}$, $ k \in A'$ and $V \in \con'_{k}$ with $\{j\} \cup Z \subseteq w$, $Z H_{j} (\{k\} \cup V)$ and $V \vdash'_{k} v$. Then $ \{k\} \cup V \subseteq \DDD(H)(w)$. Moreover, by Proposition~\ref{pn-app}, $v \ll' \DDD(H)(w)$. Thus,  $\XF \FFF(\DDD(H))_{u} v$.

\eqref{lem-etanat-2} We only have to show that
\begin{equation*}
\eta_{\bA} \circ \FFF(\DDD(H)) = H \circ \eta_{\bA'}.
\end{equation*}
Let $i \in A$, $X \in \con_{i}$ and $v \in \hat{\con'}$ with $X (S_{\bA} \circ \FFF(\DDD(H)))_{i} v$. Then there are  $u \in \hat{\con}$ and $\XF \in \Con_{u}$ with $X S^{\bA}_{i} (\{u\} \cup \XF)$ and $\XF (\FFF(\DDD(H)))_{u} v$. Then $\{u\} \cup \XF \ll [X]_{i}$. Moreover, with Statement~\eqref{lem-etanat-1} it follows that there are $w \in \{u\} \cup \XF$, $j \in A$,  $Z \in \con_{j}$, $k \in A'$ and $V \in \con'_{k}$ with $\{j\} \cup Z \subseteq w$, $Z H_{j} (\{k\} \cup V)$ and $V \vdash'_{k} v$. Hence, $w \ll [X]_{i}$.  As $\{j\} \cup Z \subseteq w \subseteq [X]_{i}$, we obtain that $X \vdash_{i} (\{j\} \cup Z)$. Therefore, $\{j\}\in \con_{i}$, which implies that $Z H_{i} (\{k\} \cup V)$.  Consequently, $X H_{i} (\{k\} \cup V)$. Because of Lemma~\ref{pn-amint}\eqref{pn-amint-2} there are $e \in A'$ and $Y \in \con'_{e}$ with $X H_{i} (\{e\} \cup Y)$ and $Y \vdash'_{e} \{k\} \cup V$. Since $V \vdash'_{k} v$,  we have that $v \ll' [Y]_{e}$, that is, $Y S^{A'}_{e} v$. Thus, $X (H \circ S_{\bA'})_{i} v$.

For the converse implication let again  $i \in A$, $X \in \con_{i}$ and $v \in \hat{\con'}$ with $X (H \circ S_{\bA'})_{i} v$. Then there is exist $k \in A'$ and $V \in \con'_{k}$ so that $X H_{i} (\{k\} \cup V)$ and $v \ll' [V]_{k}$. Then $v \subseteq [V]_{k}$ and hence $V \vdash'_{k} v$.  As a consequence of the first property we obtain with Lemma~\ref{pn-amint}\eqref{pn-amint-1} that there are $c \in A$ and $U \in \con_{c}$ so that $X \vdash_{i} (\{c\} \cup U)$ and $U H_{c} (\{k\} \cup V)$. Moreover, because of the interpolation property, there are $j \in A$ and $Z \in \con_{j}$ with $X \vdash_{i} (\{j\} \cup Z)$ and $Z \vdash_{j} (\{c\} \cup U)$. Set $w = [Z]_{j}$. Then $w \in \hat{\con}$, $\{c\} \cup U \subseteq w$ and $w \ll [X]_{i}$, the latter by Proposition~\ref{pn-app}. Define $\XF = \{w\}$. With Statement~\eqref{lem-etanat-1} it follows that $\XF \FFF(\DDD(H))_{w} v$. Since moreover $\{w\} \cup \XF \ll [X]_{i}$, we also have that $X S^{A}_{i} (\{w\} \cup \XF)$. This shows that $X (S_{\bA} \circ H)_{i} v$.
\end{proof}

Let us summarize what we have just shown.

\begin{proposition}\label{pn-etaiso}
$\fun{\eta}{\ID_{\INF}}{\FFF \circ \DDD}$ is a natural isomorphism.
\end{proposition}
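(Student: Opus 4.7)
The plan is to assemble the three preceding lemmas. Since $\eta_{\bA}$ is defined as $S_{\bA}$, and Lemma~\ref{lem-etanat}\eqref{lem-etanat-2} has already verified that the naturality square
\[
\eta_{\bA} \circ \FFF(\DDD(H)) \;=\; H \circ \eta_{\bA'}
\]
commutes for every approximable mapping $\appmap{H}{\bA}{\bA'}$, the only remaining task is to show that each component $\eta_{\bA}$ is an isomorphism in $\INF$.

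For this I would invoke Lemma~\ref{lem-IF}: part~\eqref{lem-IF-2} produces a candidate inverse $\appmap{T_{\bA}}{\FFF(\DDD(\bA))}{\bA}$, and parts~\eqref{lem-IF-3} and \eqref{lem-IF-4} supply the two identities $S_{\bA} \circ T_{\bA} = \Id_{\bA}$ and $T_{\bA} \circ S_{\bA} = \Id_{\FFF(\DDD(\bA))}$. Together these say exactly that $\eta_{\bA}$ is invertible in $\INF$ with inverse $T_{\bA}$. Combining pointwise invertibility with naturality yields that $\eta$ is a natural isomorphism.

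There is no genuine obstacle at the level of the proposition itself; it is the clean categorical packaging of the material already developed. The substantive work is carried out in the two lemmas cited: the verification of the interpolation axiom~\ref{dn-am}\eqref{dn-am-5} for both $S_{\bA}$ and $T_{\bA}$ (which uses the interpolation law for the domain $\DDD(\bA)$ and the characterization of $\ll$ from Proposition~\ref{pn-app}), the composition calculations based on Lemma~\ref{lem-strong6}, and the naturality computation, which relies on Lemma~\ref{pn-amint} to split interpolation of approximable mappings into separate domain-side and range-side statements. Once those are in hand, Proposition~\ref{pn-etaiso} follows by pure bookkeeping.
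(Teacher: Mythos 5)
Your proposal matches the paper exactly: Proposition~\ref{pn-etaiso} is stated there as a summary of what precedes it, with Lemma~\ref{lem-IF} supplying the componentwise invertibility of $\eta_{\bA} = S_{\bA}$ (with inverse $T_{\bA}$) and Lemma~\ref{lem-etanat}\eqref{lem-etanat-2} supplying naturality. Your assembly of these ingredients is correct and is the same bookkeeping the paper intends.
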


Next, we show that there is also a natural isomorphism $\fun{\tau}{\ID_{\DOM}}{\DDD \circ \FFF}$. Let to this end $\DD$ be a domain. Then $\DDD(\FFF(\DD))$ is the domain $(|\FFF(\DD)|, \subseteq)$ with basis $\hat{\con}$, where $\FFF(\DD)$ is the information frame $(B, (\Con_{i})_{i \in B}, (\vDash_{i})_{i \in B})$.

\begin{lemma}[Lemma 18 in \cite{sp21}]\label{lem-stdir}
Every state of $\FFF(\DD)$ is a directed subset of $\DD$.
\end{lemma}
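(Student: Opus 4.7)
The plan is to verify the definition of directedness directly against the state axioms \ref{dn-st}\eqref{dn-st-1}--\eqref{dn-st-3}, exploiting the very restrictive form of the consistency predicate $\Con_{i} = \{\{i\}\} \cup \powf{\dda_{B} i}$ of $\FFF(\DD)$. Non-emptiness of any state $x$ is already observed right after Definition~\ref{dn-st}: taking $F = \emptyset$ in the finite consistency condition forces some $i \in x$.

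For the upper-bound property, I would pick arbitrary $a_{1}, a_{2} \in x$ and apply finite consistency to the finite subset $F = \{a_{1}, a_{2}\} \fsubset x$. This yields some $i \in x$ with $\{a_{1}, a_{2}\} \in \Con_{i}$. By definition of $\Con_{i}$, there are only two possibilities: either $\{a_{1}, a_{2}\} = \{i\}$, in which case $a_{1} = a_{2} = i$ and $i$ is trivially a common upper bound lying in $x$; or $\{a_{1}, a_{2}\} \in \powf{\dda_{B} i}$, which means $a_{1} \ll i$ and $a_{2} \ll i$. In the latter case, Lemma~\ref{lem-preordprop}\eqref{lem-preordprop-1} gives $a_{1} \sqsubseteq i$ and $a_{2} \sqsubseteq i$, and again $i \in x$ is the required common upper bound.

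The whole argument is a matter of unpacking definitions, so there is essentially no obstacle; the work that would otherwise be needed to combine entailments into an upper bound is done for us by the very restricted shape of $\Con_{i}$, which only ever allows singletons $\{i\}$ or sets of elements that already lie below $i$. The only mild care needed is to remember that $i$ itself (obtained from finite consistency) is a member of $x$ — but this is directly part of Condition~\ref{dn-st}\eqref{dn-st-1}.
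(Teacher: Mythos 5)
Your proof is correct and is exactly the standard argument; the paper itself gives no proof here but defers to \cite[Lemma~18]{sp21}, where the same direct unpacking of finite consistency against the shape of $\Con_{i} = \{\{i\}\}\cup\powf{\dda_{B}\, i}$ is used. The one point worth being careful about — that the witness $i$ from Condition~\ref{dn-st}\eqref{dn-st-1} lies in $x$ itself — you have handled correctly.
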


It follows that $\bigsqcup x$ exists in $D$, for every $x \in |\FFF(\DD)|$. For $x \in |\FFF(\DD)|$ set 
\[
\sp\nolimits_\DD(x) = \bigsqcup x.
\]
Then $\fun{\sp_\DD}{|\FFF(\DD)|}{D}$ is Scott continuous.

\begin{lemma}[Lemma 19 in \cite{sp21}]\label{lem-domst}
For $x \in D$, $\set{a \in B}{a \ll x}$ is a state of $\FFF(\DD)$.  
\end{lemma}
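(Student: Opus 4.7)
The plan is to show that $x^{\ast} := \set{a \in B}{a \ll x}$ satisfies the three clauses of Definition~\ref{dn-st} for the information frame $\FFF(\DD) = (B,(\Con_{i})_{i \in B},(\vDash_{i})_{i \in B})$. The heavy lifting will be done entirely by the interpolation law (Lemma~\ref{lem-preordprop}\eqref{lem_preordprop-5}) and the transitivity of $\ll$ (Lemma~\ref{lem-preordprop}\eqref{lem-preordprop-0}), matched against the definitions $\Con_{i} = \{\{i\}\} \cup \powf{\dda_{B}i}$ and $X \vDash_{i} a \Leftrightarrow (\exists b \in X \cup \{i\})\, a \ll b$.

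For finite consistency, I would take $F \fsubset x^{\ast}$, so $F \ll x$ in the sense of Lemma~\ref{lem-preordprop}\eqref{lem_preordprop-5}, and apply interpolation to obtain $v \in B$ with $F \ll v \ll x$. Then $v \in x^{\ast}$ and $F \subseteq \dda_{B} v$, whence $F \in \powf{\dda_{B} v} \subseteq \Con_{v}$, as required.

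For closure under entailment, assume $i \in x^{\ast}$, $X \fsubset x^{\ast}$, $a \in B$, $X \in \Con_{i}$ and $X \vDash_{i} a$. Unpacking the definition of $\vDash_{i}$ yields some $b \in X \cup \{i\}$ with $a \ll b$. Since $b \in x^{\ast}$ we have $b \ll x$, and transitivity gives $a \ll x$, i.e., $a \in x^{\ast}$. For completeness, given $a \in x^{\ast}$, I would invoke interpolation once more to get $b \in B$ with $a \ll b \ll x$; then $b \in x^{\ast}$, the empty set lies in $\Con_{b}$, and $\emptyset \vDash_{b} a$ because $a \ll b$. This witnesses the existential in Definition~\ref{dn-st}\eqref{dn-st-3}.

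There is no serious obstacle here; the only subtlety is choosing the right token to serve as $i$ in the completeness clause, since one cannot in general use $a$ itself (that would require $a$ to be compact). Interpolation supplies a strictly larger $b \in x^{\ast}$ which breaks this circularity, and then the trivially consistent set $\emptyset$ suffices as the entailing $X$.
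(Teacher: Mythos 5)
Your proof is correct and is exactly the argument the paper intends: the paper only cites \cite[Lemma 19]{sp21} here, and the natural verification of the three state conditions for $\set{a \in B}{a \ll x}$ proceeds precisely as you do, via the interpolation law of Lemma~\ref{lem-preordprop}\eqref{lem_preordprop-5} for finite consistency and completeness, and transitivity of $\ll$ for closure under entailment. Your remark that completeness genuinely needs an interpolant $b$ with $a \ll b \ll x$ (rather than taking $i = a$, which would require compactness of $a$) correctly identifies the one non-trivial point.
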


Set 
\[
\st\nolimits_\DD(x) = \set{a \in B}{a \ll x},
\]
for $x \in D$. Then $\fun{\st_\DD}{D}{|\FFF(\DD)|}$ is Scott continuous as well. Since $B$ is a basis of $\DD$, and the functions $\sp_\DD$ and $\st_\DD$ are Scott continuous we obtain the following consequence:

\begin{lemma}\label{lem-invstsp}
For $x \in D$, $\sp_\DD(\st_\DD(x)) = x$.
\end{lemma}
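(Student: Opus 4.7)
The goal is to show $\bigsqcup \st_\DD(x) = x$. By definition, $\st_\DD(x) = \dda_{B} x = \set{a \in B}{a \ll x}$, so the statement is really the familiar fact that every basis element below $x$ joins back up to $x$. I would prove this by two opposite inclusions in the dcpo $\DD$.

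For $\bigsqcup \st_\DD(x) \sqsubseteq x$, first note that by Lemma~\ref{lem-stdir} the state $\st_\DD(x)$ is directed in $\DD$, and since $\DD$ is a dcpo the least upper bound $\bigsqcup \st_\DD(x)$ indeed exists (this is also implicit in the paragraph introducing $\sp_\DD$). Every element $a$ of $\st_\DD(x)$ satisfies $a \ll x$, hence $a \sqsubseteq x$ by Lemma~\ref{lem-preordprop}\eqref{lem-preordprop-1}. So $x$ is an upper bound of $\st_\DD(x)$, giving $\bigsqcup \st_\DD(x) \sqsubseteq x$.

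For the reverse inequality $x \sqsubseteq \bigsqcup \st_\DD(x)$, I invoke the defining property of a basis: since $B$ is a basis of the domain $\DD$, the set $\dda_{B} x$ contains some directed subset $S$ whose least upper bound in $\DD$ equals $x$. Since $S \subseteq \dda_{B} x = \st_\DD(x)$, the least upper bound of $S$ is dominated by the least upper bound of $\st_\DD(x)$, yielding $x = \bigsqcup S \sqsubseteq \bigsqcup \st_\DD(x)$. Combining the two inclusions via antisymmetry finishes the proof.

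There is essentially no obstacle here; the only subtlety is that the definition of basis in Section~\ref{sect-dom} only promises a directed \emph{subset} of $\dda_{B} x$ with supremum $x$, not that $\dda_{B} x$ itself is directed. But this is enough to push $x$ up underneath $\bigsqcup \st_\DD(x)$, and directedness of $\st_\DD(x)$ (needed to ensure that the right-hand supremum exists and agrees with $\sp_\DD \circ \st_\DD$) is already guaranteed by Lemma~\ref{lem-stdir}.
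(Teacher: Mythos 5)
Your proof is correct and is essentially the argument the paper intends: the lemma is stated there as an immediate consequence of $B$ being a basis, and your two-inclusion argument (upper bound via $\ll\,\subseteq\,\sqsubseteq$, and the directed subset $S \subseteq \dda_B x$ with $\bigsqcup S = x$ guaranteed by the basis definition) just makes that explicit. The attention to the subtlety that only a directed \emph{subset} of $\dda_B x$ is promised, with directedness of the whole set supplied by Lemma~\ref{lem-stdir}, is exactly right.
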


Thus, both functions are inverse to each other, which shows that $\DD$ is isomorphic to $|\FFF(\DD)|$.

Set $\tau_{\DD} = \st_{\DD}$.

\begin{lemma}\label{lem-tau}
Let $\DD'$ be a further domain with basis $B'$, and $\fun{f}{D}{D'}$ be Scott continuous. 
\begin{enumerate}
\item\label{lem-tau-1}
For $x \in D$, $\DDD(\FFF(f))(x) = \set{a \in B'}{(\exists i \in B) i \ll x \wedge a \ll' f(i)}$.

\item\label{lem-tau-2}
$\tau$ is a natural transformation.

\end{enumerate}
\end{lemma}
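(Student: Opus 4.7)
For (1), I will identify $x \in D$ with $\st_\DD(x) \in |\FFF(\DD)|$ via the isomorphism of Lemma~\ref{lem-invstsp}. By the definition of $\DDD(H)$ from Section~\ref{sect-appmap}, a token $a \in B'$ lies in $\DDD(\FFF(f))(\st_\DD(x))$ precisely when there exist $i \in B$ and $X \in \Con_i$ with $\{i\} \cup X \subseteq \st_\DD(x)$ and $X \FFF(f)_i a$. Unfolding $X \FFF(f)_i a$ as $(\exists c \in X \cup \{i\})\, a \ll' f(c)$ and observing that $\{i\} \cup X \subseteq \st_\DD(x) = \dda_B x$ forces $c \ll x$ for every such $c$, I obtain the inclusion $\subseteq$ after renaming $c$ as $i$. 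For the reverse inclusion, given $i \in B$ with $i \ll x$ and $a \ll' f(i)$, I take $X = \emptyset \in \powf{\dda_B i} \subseteq \Con_i$ and use the witness $c = i$.

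For (2), the task is to verify $\DDD(\FFF(f)) \circ \tau_\DD = \tau_{\DD'} \circ f$ pointwise. Using (1), the left-hand side evaluated at $x \in D$ equals $\set{a \in B'}{(\exists i \in B)\, i \ll x \wedge a \ll' f(i)}$, while the right-hand side equals $\st_{\DD'}(f(x)) = \set{a \in B'}{a \ll' f(x)}$. The inclusion $\subseteq$ is immediate: from $i \ll x$ and $a \ll' f(i)$, the monotonicity of $f$ (implied by Scott continuity) gives $f(i) \sqsubseteq f(x)$, and Lemma~\ref{lem-preordprop}\eqref{lem-preordprop-2} then yields $a \ll' f(x)$.

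The main (though modest) obstacle is the reverse inclusion, which requires upgrading $a \sqsubseteq' f(i)$ to $a \ll' f(i)$ for some basis element $i \ll x$. The plan is first to apply the interpolation law on $\DD'$ to obtain $b' \in B'$ with $a \ll' b' \ll' f(x)$, and then to use that $\dda_B x$ is directed with supremum $x$ (a consequence of the basis property together with Lemma~\ref{lem-preordprop}\eqref{lem_preordprop-5}) combined with the Scott continuity of $f$ to write $f(x) = \bigsqcup f[\dda_B x]$. From $b' \ll' f(x)$ one then extracts some $i \in \dda_B x$ with $b' \sqsubseteq f(i)$, and Lemma~\ref{lem-preordprop}\eqref{lem-preordprop-2} applied to $a \ll' b' \sqsubseteq f(i)$ delivers $a \ll' f(i)$ as required.
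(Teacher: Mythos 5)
Your proof is correct and takes essentially the same route as the paper: part (1) is the same unfolding of the definitions of $\DDD(\cdot)$ and $\FFF(f)_{i}$ (the paper uses the witness $X=\{i\}$ where you use $X=\emptyset$, an immaterial difference), and part (2) reduces the naturality square to the Scott continuity of $f$. The paper leaves the details of (2) implicit, whereas you correctly supply the interpolation/directed-supremum argument that upgrades $a \ll' f(x)$ to $a \ll' f(i)$ for some $i \ll x$.
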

\begin{proof}
\eqref{lem-tau-1}
By definition we have for $a \in B'$ that $a \in \DDD(\FFF(f))(x)$, exactly if there are  $i \in B$,  $X \in \Con_{i}$, and $c \in X \cup \{i\}$ so that $X \cup \{i\}  \subseteq \st_{D}(x)$ and $a \ll' f(c)$. As $c \sqsubseteq i$ by the definition of $\Con_{i}$, we have that $a \ll' f(c) \sqsubseteq' f(i)$ whence $a \ll' f(i)$. Conversely, if for some $i \ll x$, $a \ll' f(i)$, set $X = \{i\}$ and $c = i$. Then $a \in \DDD(\FFF(f))(x)$.

\eqref{lem-tau-2}
We have to show that
\begin{equation*}
\tau_{D} \circ \DDD(\FFF(f)) = f \circ \tau_{D'}, 
\end{equation*}
which is a consequence of the continuity of $f$.
\end{proof}

Let us again summarize what we have achieved in this step.

\begin{proposition}\label{pn-tauiso}
$\fun{\tau}{\ID_{\DOM}}{\DDD \circ \FFF}$ is a natural isomorphism.
\end{proposition}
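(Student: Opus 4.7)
The plan is to assemble the lemmas that immediately precede the proposition. Lemma~\ref{lem-tau}\eqref{lem-tau-2} already asserts naturality of $\tau$, so the work reduces to verifying that every component $\tau_{\DD} = \st_{\DD}$ is an isomorphism in $\DOM$. Both $\st_{\DD} \colon D \to |\FFF(\DD)|$ and $\sp_{\DD} \colon |\FFF(\DD)| \to D$ have been shown to be Scott continuous, and Lemma~\ref{lem-invstsp} supplies $\sp_{\DD} \circ \st_{\DD} = \id_{D}$. Hence all that is left is to establish
\[
(\st_{\DD} \circ \sp_{\DD})(x) = x \qquad \text{for every } x \in |\FFF(\DD)|,
\]
that is, $\{a \in B \mid a \ll \bigsqcup x\} = x$, where the supremum exists in $D$ by Lemma~\ref{lem-stdir}.

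For the inclusion $x \subseteq \{a \in B \mid a \ll \bigsqcup x\}$, I would take any $a \in x$ and invoke the completeness condition~\ref{dn-st}\eqref{dn-st-3} to obtain $i \in x$ and $X \in \Con_{i}$ with $X \subseteq x$ and $X \vDash_{i} a$. Unpacking the definition of $\vDash_{i}$ in $\FFF(\DD)$ yields some $b \in X \cup \{i\} \subseteq x$ with $a \ll b$. Since $b \sqsubseteq \bigsqcup x$, Lemma~\ref{lem-preordprop}\eqref{lem-preordprop-2} forces $a \ll \bigsqcup x$.

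For the reverse inclusion, suppose $a \in B$ with $a \ll \bigsqcup x$. By the interpolation law~\ref{lem-preordprop}\eqref{lem_preordprop-5} (applied to the singleton $\{a\}$) there exists $v \in B$ with $a \ll v \ll \bigsqcup x$. Because $x$ is a directed subset of $D$ whose supremum is $\bigsqcup x$, the definition of $\ll$ provides some $i \in x$ with $v \sqsubseteq i$; then Lemma~\ref{lem-preordprop}\eqref{lem-preordprop-2} yields $a \ll i$. Now $\emptyset \in \Con_{i}$, $\emptyset \subseteq x$, and $\emptyset \vDash_{i} a$ holds (since the unique element $i$ of $\emptyset \cup \{i\}$ satisfies $a \ll i$), so closure under entailment~\ref{dn-st}\eqref{dn-st-2} delivers $a \in x$.

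The only mildly subtle step is the reverse inclusion: directedness of $x$ alone would only yield some $i \in x$ with $a \sqsubseteq i$, which is too weak to trigger $\vDash_{i}$. Inserting the intermediate basis element $v$ via Lemma~\ref{lem-preordprop}\eqref{lem_preordprop-5} is the crucial move that upgrades $a \sqsubseteq i$ to the stronger $a \ll i$ needed to invoke the entailment relation of $\FFF(\DD)$. Apart from this, the argument is just bookkeeping on top of the Scott continuities and the identity $\sp_{\DD} \circ \st_{\DD} = \id_{D}$ already recorded.
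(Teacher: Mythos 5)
Your proposal is correct and follows the same route as the paper: naturality is delegated to Lemma~\ref{lem-tau}\eqref{lem-tau-2}, and each component $\tau_{\DD}=\st_{\DD}$ is shown to be an isomorphism via the Scott continuous pair $\st_{\DD}$, $\sp_{\DD}$. The paper asserts that the two maps are mutually inverse on the strength of Lemma~\ref{lem-invstsp} alone, whereas you explicitly (and correctly) verify the missing composition $\st_{\DD}\circ\sp_{\DD}=\id_{|\FFF(\DD)|}$, including the genuinely needed interpolation step that upgrades $a \sqsubseteq i$ to $a \ll i$; this is a welcome filling-in of a detail the paper leaves implicit.
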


Putting Propositions~\ref{pn-etaiso} and \ref{pn-tauiso} together, we obtain what we were aiming for in this section.

\begin{theorem}\label{thm-eqIFDOM}
 The category $\DOM$ of domains and Scott continuous functions is equivalent to the category $\INF$ of  information frames and approximable mappings.
 \end{theorem}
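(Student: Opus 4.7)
The plan is to derive the theorem as an immediate consequence of the natural isomorphisms already constructed in Propositions~\ref{pn-etaiso} and~\ref{pn-tauiso}. Recall that two categories $\bC$ and $\bD$ are equivalent precisely when there exist functors \fun{F}{\bC}{\bD} and \fun{G}{\bD}{\bC} together with natural isomorphisms $\ID_{\bC} \cong G \circ F$ and $\ID_{\bD} \cong F \circ G$, so the whole task reduces to exhibiting exactly this data.

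First I would fix the two functors. By Lemma~\ref{lem-funcl}, \fun{\DDD}{\INF}{\DOM} is a functor; by Lemma~\ref{lem-funci}, \fun{\FFF}{\DOM}{\sINF} is a functor, which I compose with the inclusion $\sINF \hookrightarrow \INF$ to obtain a functor \fun{\FFF}{\DOM}{\INF} going in the opposite direction. Then I would invoke Proposition~\ref{pn-etaiso} to supply the natural isomorphism \fun{\eta}{\ID_{\INF}}{\FFF \circ \DDD} with components $\eta_{\bA} = S_{\bA}$, whose componentwise inverses $T_{\bA}$ were exhibited in Lemma~\ref{lem-IF}, and Proposition~\ref{pn-tauiso} to supply the natural isomorphism \fun{\tau}{\ID_{\DOM}}{\DDD \circ \FFF} with components $\tau_{\DD} = \st_{\DD}$, whose componentwise inverses $\sp_{\DD}$ were given in Lemma~\ref{lem-invstsp}. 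Collecting these pieces yields exactly the data required by the definition of an equivalence of categories.

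There is no genuine obstacle at this stage, because all the substantive work has already been discharged in the preceding material: checking that $S_{\bA}$, $T_{\bA}$ are approximable mappings and are mutual inverses in $\INF$ (Lemma~\ref{lem-IF}), checking the corresponding statement on the domain side (Lemmas~\ref{lem-stdir}, \ref{lem-domst}, \ref{lem-invstsp}), and verifying naturality in each argument (Lemmas~\ref{lem-etanat} and \ref{lem-tau}). The hard conceptual part of the section was really Lemma~\ref{lem-IF}, where the behaviour of $S_{\bA}$ and $T_{\bA}$ on entailment had to be matched against the approximation relation on $\hat{\Con}$ via Proposition~\ref{pn-app} and the interpolation law. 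Once that was done and its counterpart on the domain side established, the proof of the present theorem amounts to a single sentence: \emph{The functors $\DDD$ and $\FFF$, together with the natural isomorphisms $\eta$ and $\tau$ of Propositions~\ref{pn-etaiso} and~\ref{pn-tauiso}, constitute an equivalence between $\INF$ and $\DOM$.}
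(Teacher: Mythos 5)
Your proposal is correct and is exactly the route the paper takes: the theorem is stated as an immediate consequence of Propositions~\ref{pn-etaiso} and~\ref{pn-tauiso}, with the functors supplied by Lemmas~\ref{lem-funcl} and~\ref{lem-funci}. Your additional remarks (composing $\FFF$ with the inclusion $\sINF \hookrightarrow \INF$ and tracking the componentwise inverses $T_{\bA}$ and $\sp_{\DD}$) merely make explicit what the paper leaves implicit.
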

 
 Because of Lemmas~\ref{lem-funcl} and \ref{lem-funci} we obtain further equivalence results between important subcategories of $\DOM$ and $\INF$, respectively.

\begin{corollary}\label{cor-eqIFDOM} The following categories are equivalent as well:
\begin{enumerate}
\item The category $\DOM$ of domains and Scott continuous functions and the full subcategory $\sINF$ of strong information frames.

\item  The full subcategory $\DOM_{\bot}$ of pointed domains and the subcategories $\INF_{\bt}$ and $\sINF_{\bt}$ of information frames and strong information frames   with truth element and approximable mappings that respect  truth elements, respectively.

\item The full subcategory $\aDOM$  of algebraic domains and the full subcategories $\aINF$ and $\asINF$ of  algebraic information frames  and algebraic strong information frames, respectively.

\item The full subcategory $\aDOM_{\bot}$ of pointed algebraic domains and the subcategories $\aINF_{\bt}$ and $\asINF_{\bt}$ of algebraic and algebraic strong information frames, each of which has truth elements and approximable mappings that respect truth elements.

\end{enumerate}
\end{corollary}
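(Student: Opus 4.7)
The plan is to derive all four equivalences by restricting the equivalence data $(\DDD,\FFF,\eta,\tau)$ of Theorem~\ref{thm-eqIFDOM} to the listed subcategories. No new constructions are required; one only verifies that both functors and both directions of the natural isomorphisms descend to the subcategories in question.

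For part (1) the restriction is transparent: $\FFF$ already has image in $\sINF$ by Theorem~\ref{thm-dominf} and Lemma~\ref{lem-funci}, while $\DDD$ restricted to $\sINF$ still lands in $\DOM$. For any $\bA \in \sINF$ we have $\FFF(\DDD(\bA)) \in \sINF$, so the component $\eta_{\bA} = S_{\bA}$ is a morphism of $\sINF$; the components of $\tau$ live in $\DOM$ without change. For parts (2)--(4), I would combine the object-level transfer results: Theorem~\ref{thm-infdom} and Lemma~\ref{lem-funcl} restrict $\DDD$ to the algebraic, pointed, and algebraic-pointed subcategories, while Theorem~\ref{thm-dominf} and Lemma~\ref{lem-funci} restrict $\FFF$ to the matching strong subcategories. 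For the natural transformations, Lemma~\ref{lem-IF}\eqref{lem-IF-1}\eqref{lem-IF-2} explicitly states that both $S_{\bA}$ and its inverse $T_{\bA}$ respect truth elements whenever $\bA$ has one, so $\eta$ and $\eta^{-1}$ restrict to natural isomorphisms on $\INF_{\bt}$ and $\sINF_{\bt}$. For $\tau$ on pointed domains, one checks that $\st_{\DD}(\bot) = [\emptyset]_{\bot} = \{\bot\}$ (using that $\bot \in B$ is the truth element of $\FFF(\DD)$, as in the proof of Theorem~\ref{thm-dominf}) and $\sp_{\DD}(\{\bot\}) = \bot$, so both $\tau_{\DD} = \st_{\DD}$ and $\tau_{\DD}^{-1} = \sp_{\DD}$ are strict, i.e.\ morphisms of $\DOM_{\bot}$. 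Algebraicity is preserved by $\tau$ and $\tau^{-1}$ automatically, since the inclusion $\aDOM \hookrightarrow \DOM$ is full.

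The only obstacle is bookkeeping: tracking, for each of the eight subcategory pairs spanned by the four items, that both functors and both directions of the natural transformation stay inside the nominated subcategory. In each case the required closure is contained in one of the object-level lemmas cited above, so the proof reduces to assembling these restrictions and invoking Theorem~\ref{thm-eqIFDOM}.
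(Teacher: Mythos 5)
Your proposal is correct and follows essentially the same route as the paper, which justifies the corollary in one line by appealing to Lemmas~\ref{lem-funcl} and \ref{lem-funci} for the restriction of the functors, with Lemma~\ref{lem-IF} supplying the needed fact that $S_{\bA}$ and $T_{\bA}$ respect truth elements in the non-full subcategories $\INF_{\bt}$, $\sINF_{\bt}$, etc. Your extra check that $\tau_{\DD}$ is strict is harmless but not needed, since $\DOM_{\bot}$ and $\aDOM_{\bot}$ are \emph{full} subcategories of $\DOM$, so their morphisms are arbitrary Scott continuous functions.
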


\section{Rough sets}\label{sect-rset}

In this section we study the relationship between information frames and rough sets. We start with introducing the necessary concepts of rough set theory. Rough set theory has been proposed by Z. I. Pawlak~\cite{pa91} as a tool for dealing with the vagueness and granularity in information systems. The core concepts of classical rough sets are lower and upper approximations based on equivalence relations. In Zhu~\cite{zh07} this approach has been generalized by using arbitrary binary relations.

A set $U$ with a binary relation $\Theta$ is called a \emph{generalized approximation space (GA-space, in short)}.

\begin{definition}\label{dn-rs}
Let $(U, \Theta)$ be a GA-space.
\begin{enumerate}

\item\label{dn-rs-1}
For $x \in U$ set $\Theta_{s}(x) = \set{y \in U}{x \Theta y}$.

\item\label{dn-rs-2}
For $X \subseteq U$ define
\begin{equation*}
 \underline{\Theta}(X) = \set{x \in U}{\Theta_{s}(x) \subseteq X} \quad \text{and} \quad \bar{\Theta}(X) = \set{x \in U}{\Theta_{s}(x) \cap X \neq \emptyset}.
 \end{equation*}
 
 \end{enumerate}
\end{definition}
The operators $\fun{\underline{\Theta}, \bar{\Theta}}{\pow{U}}{\pow{U}}$, respectively, are called the \emph{lower  and upper approximation operators} in $\UU$. Such operators are key notions in rough set theory.

\begin{lemma}\label{lem-rsprop}
Let $(U, \Theta)$ be a GA-space. Then  the following statements hold:   
\begin{enumerate}

\item\label{lem-rsprop-1}
$\bar{\Theta}$ is monotone and $\bar{\Theta}(\emptyset) = \emptyset$.

\item\label{lem-rsprop-2}
$\Theta$ is reflexive if, and only if, $X \subseteq \bar{\Theta}(X)$, for all $X \subseteq U$.

\item\label{lem-rsprop-3}
$\Theta$ is transitive if, and only if, $\bar{\Theta}(\bar{\Theta}(X)) \subseteq \bar{\Theta}(X)$, for all $X \subseteq U$.

\item\label{lem-rsprop-4}
Let $\Theta$ be transitive and $X, Y \subseteq U$. Then $\bar{\Theta}(Y) \subseteq \bar{\Theta}(X)$, if $Y \subseteq \bar{\Theta}(X)$. 

\end{enumerate}
\end{lemma}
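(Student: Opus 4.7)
The plan is to prove each of the four parts directly from the definitions of $\bar{\Theta}$ and $\Theta_{s}$, using the earlier parts of the lemma to establish the later ones.

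For part \eqref{lem-rsprop-1}, monotonicity is immediate: if $X \subseteq Y$ and $\Theta_{s}(x) \cap X \neq \emptyset$, then certainly $\Theta_{s}(x) \cap Y \neq \emptyset$. The equation $\bar{\Theta}(\emptyset) = \emptyset$ follows because $\Theta_{s}(x) \cap \emptyset = \emptyset$ for every $x \in U$.

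For part \eqref{lem-rsprop-2}, for the forward direction I assume $\Theta$ is reflexive; then for any $x \in X$, reflexivity gives $x \in \Theta_{s}(x) \cap X$, so $x \in \bar{\Theta}(X)$. For the reverse direction, apply the hypothesis to the singleton $X = \{x\}$: from $x \in \{x\} \subseteq \bar{\Theta}(\{x\})$, there must be some element in $\Theta_{s}(x) \cap \{x\}$, forcing $x \Theta x$. Part \eqref{lem-rsprop-3} proceeds analogously. Forward: if $\Theta$ is transitive and $x \in \bar{\Theta}(\bar{\Theta}(X))$, pick $y$ with $x \Theta y$ and $y \in \bar{\Theta}(X)$, then pick $z$ with $y \Theta z$ and $z \in X$; transitivity yields $x \Theta z$, so $z \in \Theta_{s}(x) \cap X$, giving $x \in \bar{\Theta}(X)$. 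Reverse: given $x \Theta y$ and $y \Theta z$, consider $X = \{z\}$; then $y \in \bar{\Theta}(\{z\})$, hence $x \in \bar{\Theta}(\bar{\Theta}(\{z\}))$, which by the hypothesis lies in $\bar{\Theta}(\{z\})$, forcing $x \Theta z$.

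Part \eqref{lem-rsprop-4} is a one-line consequence of the preceding parts: applying monotonicity from \eqref{lem-rsprop-1} to the assumption $Y \subseteq \bar{\Theta}(X)$ yields $\bar{\Theta}(Y) \subseteq \bar{\Theta}(\bar{\Theta}(X))$, and by transitivity together with \eqref{lem-rsprop-3} the right-hand side is contained in $\bar{\Theta}(X)$.

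None of the steps presents a real obstacle; the proof is entirely a matter of unwinding definitions. The only mildly nontrivial choice is the use of singleton test sets $\{x\}$ and $\{z\}$ in the reverse directions of \eqref{lem-rsprop-2} and \eqref{lem-rsprop-3} to extract pointwise information from the set-theoretic hypothesis.
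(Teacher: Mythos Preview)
Your proof is correct and follows exactly the same approach as the paper's own proof, which simply declares part~\eqref{lem-rsprop-1} and the forward direction of~\eqref{lem-rsprop-2} obvious, uses the singleton $X=\{x\}$ for the converse of~\eqref{lem-rsprop-2}, says~\eqref{lem-rsprop-3} follows similarly, and derives~\eqref{lem-rsprop-4} from~\eqref{lem-rsprop-1} and~\eqref{lem-rsprop-3}. You have merely spelled out the details the paper left implicit.
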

\begin{proof}
\eqref{lem-rsprop-1} is obvious, as is the left-to-right implication of \eqref{lem-rsprop-2}. For the converse implication set $X = \{x\}$ with $x \in U$. \eqref{lem-rsprop-3} follows similarly and \eqref{lem-rsprop-4} is a consequence of \eqref{lem-rsprop-1} and \eqref{lem-rsprop-3}.
\end{proof}

In \cite{wx23} Wu and Xu introduced a class of GA-spaces that come equipped with a kind of interpolation property.

\begin{definition}\label{dn-cf}
Let $(U, \Theta)$ be a GA-space so that $\Theta$ is transitive. Moreover, let $\FF \subseteq \powf{U}$. 
\begin{enumerate}

\item\label{dn-cf-1}
$\UU = (U, \Theta, \FF)$ is a \emph{generalized approximation space with consistent family of finite subsets}, or in short a \emph{CF-approximation space}, if
\begin{equation}\label{cn-CF}\tag{CF}
(\forall F \in \FF) (\forall K \fsubset \bar{\Theta}(F)) (\exists G \in \FF) K \subseteq \bar{\Theta}(G) \wedge G \subseteq \bar{\Theta}(F).
\end{equation}

\item\label{dn-cf-2}
$\UU$ is a \emph{topological} CF-approximation space if, in addition, $\Theta$ is also reflexive.

\end{enumerate}
\end{definition}

As we will see next, every CF-approximation space generates an information frame. For a CF-approximation space $\UU$ set
\begin{equation*}
\CCC(\UU) = (\FF, (\consc_{F})_{F \in \FF}, (\mmodels_{F})_{F \in \FF})
\end{equation*}
with
\begin{align*}
&\consc_{F} = \{F\} \cup \powf{\set{G \in \FF}{G \subseteq \bar{\Theta}(F)}}, \\
&\XF \mmodels_{F} G \Leftrightarrow (\exists E \in \XF \cup \{F\}) G \subseteq \bar{\Theta}(E).
\end{align*}

\begin{theorem}\label{thm-cfa-inf}
Let $\UU$ be a CF-approximation space. Then $\CCC(\UU)$ is a strong information frame such that
\begin{enumerate}

\item\label{thm-cfa-inf-1}
$\CCC(\UU)$ is algebraic, if $\UU$ is topological.

\item\label{thm-cfa-inf-2}
$\CCC(\UU)$ has a truth element, if the following Condition~\eqref{cn-M} is satified:
\begin{equation}\label{cn-M}\tag{M}
(\exists \bT \in \FF) (\forall F \in \FF) \bT \subseteq \bar{\Theta}(F).
\end{equation}

\end{enumerate}
\end{theorem}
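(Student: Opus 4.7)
The plan is to verify each axiom in Definition~\ref{dn-infofr} directly from the constructions of $\consc_F$ and $\mmodels_F$, isolating the roles of the transitivity of $\bar{\Theta}$ (Lemma~\ref{lem-rsprop}\eqref{lem-rsprop-3}), its monotonicity along $\bar{\Theta}$-containments (Lemma~\ref{lem-rsprop}\eqref{lem-rsprop-4}), and the (CF) property. The local axioms~\eqref{dn-infofr-1}--\eqref{dn-infofr-6} should all follow easily: self-consistency and consistency preservation are built into the definition of $\consc_F$, since its second component is a full power set; soundness and cut each reduce to the observation that if $G \subseteq \bar{\Theta}(E)$ and $E \subseteq \bar{\Theta}(F)$ then $G \subseteq \bar{\Theta}(F)$, which is just transitivity; and weakening is immediate because the existential witness in $\XF \cup \{F\}$ remains a witness for any larger $\XF' \supseteq \XF$. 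The only care needed is to treat separately the two ways a set can lie in $\consc_F$, either as $\{F\}$ itself or as a finite subset of $\set{G \in \FF}{G \subseteq \bar{\Theta}(F)}$.

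For the global axioms, one first unpacks what $F R G$ means here: $\{F\} \in \consc_G$ amounts to $F = G$ or $F \subseteq \bar{\Theta}(G)$. Both consistency transfer~\eqref{dn-infofr-7} and entailment transfer~\eqref{dn-infofr-9} then follow from Lemma~\ref{lem-rsprop}\eqref{lem-rsprop-4}, which supplies $\bar{\Theta}(F) \subseteq \bar{\Theta}(G)$ in the nontrivial case.

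The interpolation axiom~\eqref{dn-infofr-11} is the main obstacle, and it is precisely where condition~\eqref{cn-CF} is used. Suppose $\XF \mmodels_F \YF$ for a finite $\YF \fsubset \FF$. By the soundness step each $G \in \YF$ satisfies $G \subseteq \bar{\Theta}(F)$, so the finite union $K = \bigcup \YF$ lies in $\bar{\Theta}(F)$. Applying \eqref{cn-CF} to $F$ and $K$ yields some $E^{*} \in \FF$ with $K \subseteq \bar{\Theta}(E^{*})$ and $E^{*} \subseteq \bar{\Theta}(F)$. Taking the interpolant $E = E^{*}$ together with $\ZF = \emptyset \in \consc_{E^{*}}$ then gives $\XF \mmodels_F (\{E\} \cup \ZF)$ with witness $F$, and $\ZF \mmodels_E \YF$ with witness $E^{*}$, since each $G \in \YF$ is contained in $K \subseteq \bar{\Theta}(E^{*})$.

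The remaining assertions are quick consequences of the setup. Strength~\eqref{cn-S}: every non-$\{F\}$ member $G$ of $\consc_F$ satisfies $G \subseteq \bar{\Theta}(F)$, so $\{F\} \mmodels_F G$ with witness $F$. For~\eqref{thm-cfa-inf-1}, reflexivity of $\Theta$ gives $F \subseteq \bar{\Theta}(F)$ via Lemma~\ref{lem-rsprop}\eqref{lem-rsprop-2}, which together with the preceding observation yields~\eqref{cn-A}. For~\eqref{thm-cfa-inf-2}, under~\eqref{cn-M} the distinguished $\bT$ satisfies $\bT \subseteq \bar{\Theta}(F)$ for every $F \in \FF$, whence $\emptyset \mmodels_F \bT$, which is exactly~\eqref{cn-T}.
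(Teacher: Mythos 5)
Your proof is correct and follows essentially the same route as the paper's: the local axioms are reduced to transitivity of $\bar{\Theta}$ (Lemma~\ref{lem-rsprop}), the global transfer axioms to the unpacking of $FRG$ as $F = G$ or $F \subseteq \bar{\Theta}(G)$, and interpolation to an application of \eqref{cn-CF} to $\bigcup \YF \fsubset \bar{\Theta}(F)$. The only cosmetic difference is your choice of interpolant $\ZF = \emptyset$ where the paper uses $\ZF = \{E\}$; both witnesses satisfy the required conditions.
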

\begin{proof}
We first show that the conditions in Definition~\ref{dn-infofr} are satisfied.
Conditions~\ref{dn-infofr}\eqref{dn-infofr-1} and \ref{dn-infofr}\eqref{dn-infofr-2} obviously hold. In what follows let $F, G \in \FF$.

\ref{dn-infofr}\eqref{dn-infofr-4}
Let $\HF \in \consc_{F}$ and $\YF \fsubset \FF$ with $\HF \mmodels_{F} \YF$. Then, for all $K \in \YF$, there is some $Z_{K} \in \HF \cup \{F\}$ with $K \subseteq \bar{\Theta}(Z_{K})$. Then $Z_{K} \subseteq \bar{\Theta}(F)$ or $Z_{K} = F$. Due to the transitivity of $\Theta$ it follows that $K \subseteq \bar{\Theta}(Z_{K}) \subseteq \bar{\Theta}(\bar{\Theta}(F)) \subseteq \bar{\Theta}(F)$ or $K \subseteq \bar{\Theta}(F)$. Thus, $K \in \set{G \in \FF}{G \subseteq \bar{\Theta}(F)}$, that is, $\YF \in \consc_{F}$.

\ref{dn-infofr}\eqref{dn-infofr-5}
Let  $K \in \FF$  and $\XF, \YF \in \consc_{F}$ with $\XF \subseteq \YF$ and $\XF \mmodels_{F} K$. Then there is some $Z \in \XF \cup \{F\}$ with $K \subseteq \bar{\Theta}(Z)$. Since $\XF \subseteq \YF$, $Z \in \YF \cup \{F\}$ as well. Hence, $\YF \mmodels_{F} K$.

\ref{dn-infofr}\eqref{dn-infofr-6}
Let  $K \in \FF$ and $\XF, \YF \in \consc_{F}$ such that $\XF \mmodels_{F} \YF$ and $\YF \mmodels_{F} K$. Then there is some $Z \in \YF \cup \{F\}$ with $K \subseteq \bar{\Theta}(Z)$. If $Z = F$, then $Z \in \XF \cup \{F\}$ as well and hence we have that $\XF \mmodels_{F} K$.
In the other case that $Z \in \YF$, then, since $\XF \mmodels_{F} \YF$, there is some $T \in \XF \cup \{F\}$ with $Z \subseteq \bar{\Theta}(T)$. Thus, $K \subseteq \bar{\Theta}(Z) \subseteq \bar{\Theta}(\bar{\Theta}(T)) \subseteq \bar{\Theta}(T)$, which means that $\XF \mmodels_{F} K$.

\ref{dn-infofr}\eqref{dn-infofr-7}
Let $\{F\} \in \consc_{G}$ and $\YF \in \consc_{F}$. Then either $\YF = \{F\}$ or for all $K \in \YF$, $K \subseteq \bar{\Theta}(F)$. In the first case, $\YF \in \consc_{G}$, by assumption, and in the second we have that $F \subseteq \bar{\Theta}(G)$. Therefore, $K \subseteq \bar{\Theta}(\bar{\Theta}(G)) \subseteq \bar{\Theta}(G)$ for all $K \in \YF$, which shows that $\YF \in \consc_{G}$ in this case as well.

\ref{dn-infofr}\eqref{dn-infofr-9}
Let $K \in \FF$, $\XF \in \consc_{F}$, and assume that $\{F\} \in \consc_{G}$ and $\XF \mmodels_{F} K$. Then there is some $Z \in \XF \cup \{F\}$ with $K \subseteq \bar{\Theta}(Z)$. If $Z = F$, then $\{Z\} \in \consc_{G}$, by assumption, and hence $Z \subseteq  \bar{\Theta}(G)$. Thus, we have that $K \subseteq \bar{\Theta}(\bar{\Theta}(G)) \subseteq \bar{\Theta}(G)$, which shows that $\XF \mmodels_{G} K$. 
However, if $Z \in \XF$ then $Z \in \XF \cup \{G\}$, which implies that in this case too $\XF \mmodels_{G} K$.

\ref{dn-infofr}\eqref{dn-infofr-11}
Let $\XF \in \consc_{F}$ and $\YF \fsubset \FF$  with $\XF \mmodels_{F} \YF$.  Moreover, let $K \in \YF$. Then there is some $Z \in \XF \cup \{F\}$ with $K \subseteq \bar{\Theta}(Z)$. If $Z = F$, we have that $K \subseteq \bar{\Theta}(F)$. In the other case we have that $Z \subseteq \bar{\Theta}(F)$ and therefore that $K \subseteq \bar{\Theta}(\bar{\Theta}(F)) \subseteq \bar{\Theta}(F)$. It follows that $\bigcup \YF \fsubset \bar{\Theta}(F)$. Since $F \in \FF$, there is thus some $E \in \FF$ with $\bigcup \YF \subseteq \bar{\Theta}(E)$ and $E \subseteq \bar{\Theta}(F)$. Hence, we have that $\XF \mmodels_{F} \{E\}$ and $\{E\} \mmodels_{E} \YF$.

Condition~\eqref{cn-S} is also obviously fulfilled.

For Statement~\eqref{thm-cfa-inf-1} we have to verify Condition~\eqref{cn-A}. Let to this end $\XF \in \consc_{F}$. Note that by Lemma~\ref{lem-rsprop}\eqref{lem-rsprop-2} for all $Z \in \XF \cup \{F\}$, $Z \subseteq \bar{\Theta}(Z)$. Hence, $\XF \mmodels_{F} \{F\} \cup \XF$.

\eqref{thm-cfa-inf-2} 
Let $\bT \in \FF$ with $\bT \subseteq \bar{\Theta}(F)$, for all $F \in \FF$. Then we have that $\emptyset \mmodels_{F} \bT$, for all $F \in \FF$.
\end{proof}
Note that Condition~\eqref{cn-M} is  satisfied in particular, if $\FF$ contains the empty set.

Conversely, also every information frame generates a CF-approximation space. For an information frame $\bA$ define 
\begin{equation*}
\EEE(\bA) = (U, \Theta, \FF)
\end{equation*}
with
\begin{align*}
&U = \bigcup\set{\con_{i} \times \{i\}}{i \in A}, \\
&(X, i) \Theta (Y, j) \Leftrightarrow Y \vdash_{j} \{i\} \cup X, \\
&\FF = \set{\{(X, i)\}}{(X, i) \in U}.
\end{align*}

\begin{theorem}\label{thm-inf-cfa}
Let $\bA$ be an information frame. Then $\EEE(\bA)$ is a CF-approximation space so that
\begin{enumerate}

\item\label{thm-inf-cfa-1}
 $\EEE(\bA)$ is topological if, and only if, $\bA$ is algebraic.
 
 \item\label{thm-inf-cfa-2}
 If $\bA$ has a truth element, then $\EEE(\bA)$  satisfies Condition~\eqref{cn-M}.
 
 \end{enumerate}
\end{theorem}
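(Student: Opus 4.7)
The plan is to verify first that $\EEE(\bA)$ is indeed a CF-approximation space (so $\Theta$ is transitive and condition \eqref{cn-CF} holds) and then to handle parts~\eqref{thm-inf-cfa-1} and~\eqref{thm-inf-cfa-2} separately. The decisive observations to keep in mind are: (i) $\FF$ is the collection of \emph{singletons} $\{(X,i)\}$ with $(X,i)\in U$, so the quantifier over members of $\FF$ reduces to a quantifier over tokens; and (ii) by definition $(Y,j)\in\bar{\Theta}(\{(X_0,i_0)\})$ iff $X_0 \vdash_{i_0} \{j\}\cup Y$. The main obstacle is the interpolation clause \eqref{cn-CF}: one must convert a finite family of entailments produced by the upper approximation into a \emph{single} intermediate pair $(Z,e)\in U$, and this will require a genuinely global use of Axiom~\ref{dn-infofr}\eqref{dn-infofr-11}.

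For transitivity of $\Theta$, I would take $(X,i)\Theta(Y,j)$ and $(Y,j)\Theta(Z,k)$, unwind to $Y\vdash_j\{i\}\cup X$ and $Z\vdash_k\{j\}\cup Y$, and apply the strong cut rule of Lemma~\ref{lem-strong6} once to each token of $\{i\}\cup X$ to obtain $Z\vdash_k\{i\}\cup X$. For \eqref{cn-CF}, fix $F=\{(X_0,i_0)\}$ and $K=\{(Y_1,j_1),\dots,(Y_n,j_n)\}\fsubset\bar{\Theta}(F)$; by (ii) above, $X_0\vdash_{i_0}\{j_l\}\cup Y_l$ for every $l$, whence $X_0\vdash_{i_0} W$ with $W=\bigcup_l(\{j_l\}\cup Y_l)$. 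Now invoke the global interpolation Axiom~\ref{dn-infofr}\eqref{dn-infofr-11}: it yields $e\in A$ and $Z\in\con_e$ with $X_0\vdash_{i_0}\{e\}\cup Z$ and $Z\vdash_e W$. Setting $G=\{(Z,e)\}\in\FF$ gives $G\subseteq\bar{\Theta}(F)$ from the first clause, and $K\subseteq\bar{\Theta}(G)$ from the second, using that $\{j_l\}\cup Y_l\subseteq W$ implies $Z\vdash_e\{j_l\}\cup Y_l$.

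For part~\eqref{thm-inf-cfa-1}, the equivalence is essentially by unpacking: $\EEE(\bA)$ is topological exactly when $\Theta$ is reflexive, i.e.\ when $(X,i)\Theta(X,i)$, i.e.\ $X\vdash_i\{i\}\cup X$, for every $(X,i)\in U$, which is precisely condition~\eqref{cn-A}. For part~\eqref{thm-inf-cfa-2}, assume $\bt$ is a truth element. Since $\emptyset\in\con_\bt$, the pair $(\emptyset,\bt)$ lies in $U$, so $\bT=\{(\emptyset,\bt)\}\in\FF$. Given any $F=\{(X_0,i_0)\}\in\FF$, condition~\eqref{cn-T} gives $\emptyset\vdash_{i_0}\bt$, and weakening together with the fact that $X_0\in\con_{i_0}$ yields $X_0\vdash_{i_0}\bt$, which is exactly $X_0\vdash_{i_0}\{\bt\}\cup\emptyset$, i.e.\ $(\emptyset,\bt)\in\bar{\Theta}(F)$. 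Thus $\bT\subseteq\bar{\Theta}(F)$ for every $F\in\FF$, verifying~\eqref{cn-M}.
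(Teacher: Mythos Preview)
Your proof is correct and follows essentially the same approach as the paper's own proof: transitivity via Lemma~\ref{lem-strong6}, condition~\eqref{cn-CF} via the global interpolation axiom~\ref{dn-infofr}\eqref{dn-infofr-11} applied to the union $\bigcup_l(\{j_l\}\cup Y_l)$, the topological/algebraic equivalence by direct unpacking of reflexivity, and condition~\eqref{cn-M} via $\bT=\{(\emptyset,\bt)\}$ and weakening. The only difference is that you spell out the transitivity step in slightly more detail than the paper, which simply cites Lemma~\ref{lem-strong6}.
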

\begin{proof}
The transitivity of $\Theta$ follows from Lemma~\ref{lem-strong6}.
For Condition~\eqref{cn-CF} let $F \in \FF$, say $F = \{(X, i)\}$, and $K \fsubset \bar{\Theta}(\{(X,i)\})$. Note that $\bar{\Theta}(\{(X, i)\}) = \set{(Y, j) \in U}{X \vdash_{i} \{j\} \cup Y}$. It follows that $X \vdash_{i} \bigcup\set{\{j\} \cup Y}{(Y, j) \in K}$. By Condition~\ref{dn-infofr}\eqref{dn-infofr-11} there is an $e \in A$ and $Z \in \con_{e}$ with $X \vdash_{i} \{e\} \cup Z$ and $Z \vdash_{e} \bigcup\set{\{j\} \cup Y}{(Y, j) \in K}$. Hence, $(Z, e) \Theta(X, i)$, that is, $\{(Z, e)\} \subseteq \bar{\Theta}(\{(X, i)\})$, and $K \subseteq \bar{\Theta}(\{(Z, e)\})$.

For the remaining statements note first that
\begin{equation*}
\text{$\EEE(\bA)$ is topological} \Leftrightarrow
\text{$\Theta$ is reflexive} 
\Leftrightarrow (\forall (X, i) \in U) X \vdash_{i} \{i\} \cup X 
\Leftrightarrow \text{$\bA$ is algebraic},
\end{equation*}
which proves \eqref{thm-inf-cfa-1}. For \eqref{thm-inf-cfa-2}  let $\bt \in A$ be a truth element of $\bA$. Then  $\emptyset \vdash_{i} \bt$, for all $i \in A$, by Condition~\eqref{cn-T}. With Condition~\ref{dn-infofr}\eqref{dn-infofr-5} it follows that for all $X \in \con_{i}$, $X \vdash_{i} \bt$. Thus $\{(\emptyset, \bt)\} \subseteq \bar{\Theta}(\{(X, i)\})$, for all $\{(X, i)\} \in \FF$.
\end{proof}

\section{CF-approximable relations}\label{sect-CFapp}

Wu and Xu~\cite{wx23} also introduced morphisms between CF-approximation spaces.

\begin{definition}\label{dn-apprel}
Let $\UU$ and $\UU'$ be CF-approximation spaces. A relation $\Delta \subseteq \FFF \times \FFF'$ is a \emph{CF-approximable relation from $\UU$ to $\UU'$}, written $\cfrel{\Delta}{\UU}{\UU'}$  if for all $F, F' \in \FF$ and $G, G' \in \FF'$
\begin{enumerate}

\item\label{dn-apprel-1}
$(\exists \hat{G} \in \FF') F \Delta \hat{G}$,

\item\label{dn-apprel-2}
$F \subseteq \bar{\Theta}(F') \wedge F \Delta G \Rightarrow F' \Delta G$,

\item\label{dn-apprel-3}
$F \Delta G \wedge G' \subseteq \bar{\Theta'}(G) \Rightarrow F \Delta G'$,

\item\label{dn-apprel-4}
$F \Delta G \Rightarrow (\exists \hat{F} \in \FF) (\exists \hat{G} \in \FF') \hat{F} \subseteq \bar{\Theta}(F) \wedge G \subseteq \bar{\Theta'}(\hat{G}) \wedge \hat{F} \Delta \hat{G}$,

\item\label{dn-apprel-5}
$F \Delta G \wedge F \Delta G' \Rightarrow (\exists \hat{G} \in \FF') G \cup G' \subseteq \bar{\Theta'}(\hat{G}) \wedge F \Delta \hat{G}$.

\end{enumerate}
\end{definition}

For a CF-approximation space $\UU$ let $\Id_{\UU} \subseteq \FF \times \FF$ be given by 
\begin{equation*}
F \Id_{\UU} G \Leftrightarrow G \subseteq \bar{\Theta}(F).
\end{equation*}
Moreover, for CF-approximation spaces $\UU^{(\nu)}$ with $\nu = 1,2,3$,  $\cfrel{\Delta}{\UU^{(1)}}{\UU^{(2)}}$ and $\cfrel{\Omega}{\UU^{(2)}}{\UU^{(3)}}$, define the relation  $\Delta \circ \Omega \subseteq \FF^{(1)} \times \FF^{(3)}$ by
\begin{equation*}
F (\Delta \circ \Omega) G \Leftrightarrow (\exists E \in \FF^{(2)}) F \Delta E \wedge  E \Omega G.
\end{equation*}

\begin{lemma}\label{lem-cfrelprop}
For $\nu = 1, 2, 3$, let $\UU^{(\nu)}$ be CF-approximation spaces, $\cfrel{G}{\UU^{(1)}}{\UU^{(2)}}$ and $\cfrel{H}{\UU^{(2)}}{\UU^{(3)}}$. Then the following statements hold:
\begin{enumerate}

\item\label{lem-cfrelprop-1}
 $\cfrel{\Id_{\UU^{(\nu)}}}{\UU^{(\nu)}}{\UU^{(\nu)}}$.

\item\label{lem-cfrelprop-2}
$\cfrel{\Delta \circ \Omega}{\UU^{(1)}}{\UU^{(3)}}$. 

\item\label{lem-cfrelprop-4}
$\Id_{\UU^{(1)}} \circ \Delta = \Delta \circ \Id_{\UU^{(2)}} = \Delta$.

\end{enumerate}
\end{lemma}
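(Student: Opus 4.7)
The plan is to verify, in turn, the five clauses of Definition~\ref{dn-apprel} for $\Id_{\UU^{(\nu)}}$ and for $\Delta \circ \Omega$, and then to check the two identity equations. Throughout, the standing workhorse is Lemma~\ref{lem-rsprop}\eqref{lem-rsprop-4}, which says that whenever $Y \subseteq \bar{\Theta}(X)$ one also has $\bar{\Theta}(Y) \subseteq \bar{\Theta}(X)$; this will take the place of ``transitivity'' at the level of sets.

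For part \eqref{lem-cfrelprop-1}, the only non-trivial clause is \ref{dn-apprel}\eqref{dn-apprel-1}: to produce some $\hat{G} \in \FF$ with $\hat{G} \Id_{\UU} F$, I apply the axiom \eqref{cn-CF} to $F$ and the finite set $K = \emptyset \fsubset \bar{\Theta}(F)$, obtaining $\hat{G} \in \FF$ with $\hat{G} \subseteq \bar{\Theta}(F)$. Clauses \eqref{dn-apprel-2} and \eqref{dn-apprel-3} reduce immediately to Lemma~\ref{lem-rsprop}\eqref{lem-rsprop-4}. Clause \eqref{dn-apprel-4} is obtained by applying \eqref{cn-CF} twice, first with $K = G$ to produce $\hat{G}$, then with $K = \hat{G}$ to produce $\hat{F}$. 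Clause \eqref{dn-apprel-5} follows from a single application of \eqref{cn-CF} with $K = G \cup G'$.

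For part \eqref{lem-cfrelprop-2}, clauses \eqref{dn-apprel-1}, \eqref{dn-apprel-2}, \eqref{dn-apprel-3} are routine: chain the corresponding clauses for $\Delta$ and $\Omega$, the intermediate $E \in \FF^{(2)}$ in the definition of $\Delta \circ \Omega$ being left alone. Clause \eqref{dn-apprel-4} is the step I expect to be the main obstacle. Given $F \Delta E$ and $E \Omega G$, applying \ref{dn-apprel}\eqref{dn-apprel-4} separately yields $\hat{F}, \tilde{E}$ with $\hat{F} \subseteq \bar{\Theta}^{(1)}(F)$, $E \subseteq \bar{\Theta}^{(2)}(\tilde{E})$, $\hat{F} \Delta \tilde{E}$, and also $\tilde{E}', \hat{G}$ with $\tilde{E}' \subseteq \bar{\Theta}^{(2)}(E)$, $G \subseteq \bar{\Theta}^{(3)}(\hat{G})$, $\tilde{E}' \Omega \hat{G}$; the two intermediates $\tilde{E}, \tilde{E}'$ in $\FF^{(2)}$ are a priori unrelated. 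I would reconcile them by chaining: from $\tilde{E}' \subseteq \bar{\Theta}^{(2)}(E) \subseteq \bar{\Theta}^{(2)}(\tilde{E})$ (the second inclusion via Lemma~\ref{lem-rsprop}\eqref{lem-rsprop-4}) and \ref{dn-apprel}\eqref{dn-apprel-2} applied to $\Omega$, obtain $\tilde{E} \Omega \hat{G}$. Then $\hat{F} \Delta \tilde{E}$ and $\tilde{E} \Omega \hat{G}$ give $\hat{F} (\Delta \circ \Omega) \hat{G}$ as required. Clause \eqref{dn-apprel-5} combines the two intermediates from $F \Delta E$ and $F \Delta E'$ using \ref{dn-apprel}\eqref{dn-apprel-5} for $\Delta$ to produce a single $\hat{E}$ with $E \cup E' \subseteq \bar{\Theta}^{(2)}(\hat{E})$; then \ref{dn-apprel}\eqref{dn-apprel-2} for $\Omega$ gives $\hat{E} \Omega G$ and $\hat{E} \Omega G'$, after which \ref{dn-apprel}\eqref{dn-apprel-5} for $\Omega$ yields the desired $\hat{G}$.

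For part \eqref{lem-cfrelprop-4}, the inclusion $\Id_{\UU^{(1)}} \circ \Delta \subseteq \Delta$ is immediate: if $F \Id_{\UU^{(1)}} E$ and $E \Delta G$, then $E \subseteq \bar{\Theta}^{(1)}(F)$, so \ref{dn-apprel}\eqref{dn-apprel-2} gives $F \Delta G$. For the reverse inclusion, given $F \Delta G$, apply \ref{dn-apprel}\eqref{dn-apprel-4} to obtain $\hat{F} \subseteq \bar{\Theta}^{(1)}(F)$ and $\hat{G}$ with $G \subseteq \bar{\Theta}^{(2)}(\hat{G})$ and $\hat{F} \Delta \hat{G}$; then \ref{dn-apprel}\eqref{dn-apprel-3} gives $\hat{F} \Delta G$, and $F \Id_{\UU^{(1)}} \hat{F}$ holds by construction, so $F (\Id_{\UU^{(1)}} \circ \Delta) G$. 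The equality $\Delta \circ \Id_{\UU^{(2)}} = \Delta$ is proved symmetrically, using \ref{dn-apprel}\eqref{dn-apprel-3} for the easy inclusion and, for the reverse, \ref{dn-apprel}\eqref{dn-apprel-4} followed by \ref{dn-apprel}\eqref{dn-apprel-2}.
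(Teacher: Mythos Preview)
Your argument is correct. The paper states this lemma without proof (it is treated as routine, and the result is essentially due to Wu and Xu~\cite{wx23}), so there is nothing to compare against; your verification of the five clauses of Definition~\ref{dn-apprel} for $\Id_{\UU}$ and $\Delta \circ \Omega$, together with the identity laws, is exactly the expected direct check. One small slip: in part~\eqref{lem-cfrelprop-1}, clause~\ref{dn-apprel}\eqref{dn-apprel-1}, you write ``$\hat{G} \Id_{\UU} F$'' where you mean ``$F \Id_{\UU} \hat{G}$''; the conclusion $\hat{G} \subseteq \bar{\Theta}(F)$ you derive is the correct one for the latter, so this is only a typo.
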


In the following let $\CFA$ be the category of CF-approximation spaces and CF-approximable relation, and let $\tCFA$, $\CFAM$ and $\tCFAM$ be respectively: the full subcategory of topological CF-approximation spaces, those that satisfy Condition~\eqref{cn-M}, and those that do both, are topological and satisfy Condition~\eqref{cn-M}. 

By the preceding lemma $\Id_{\UU}$ is the identity morphism on CF-approximation space $\UU$. We have already shown that CF-approximation spaces lead to information frames and vice versa. We will now do the same for CF-approximable relations and approximable mappings.

Let  $\UU$, $\UU'$ be CF-approximation spaces and $\Delta$ a CF-approximable relation from $\UU$ to $\UU'$. 
Set 
\begin{equation*}
\CCC(\Delta) = H_{\Delta} = (H^{\Delta}_{F})_{F\in \FF},
\end{equation*}
where for $F \in \FF$, $\XF \in \consc_{F}$ and $G \in \FF'$, 
\begin{equation*}
\XF H^{\Delta}_{F} G \Leftrightarrow (\exists Z \in \XF \cup \{F\}) Z \Delta G.
\end{equation*}

\begin{proposition}\label{pn-ar-am}
Let  $\UU$, $\UU'$ be CF-approximation spaces and $\cfrel{\Delta}{\UU}{\UU'}$. Then 
\begin{equation*}
\appmap{H_{\Delta}}{\CCC(\UU)}{\CCC(\UU')}.
\end{equation*}
 Moreover, if $\UU$ and $\UU'$ both satisfy Condition~\eqref{cn-M} then $H_{\Delta}$ respects truth elements.
\end{proposition}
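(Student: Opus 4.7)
The plan is to verify, one at a time, the five axioms of Definition~\ref{dn-am} for the family $H_{\Delta} = (H^{\Delta}_{F})_{F \in \FF}$, and then check the truth-element clause. Throughout, I would exploit the simple observation that if $\XF \in \consc_{F}$ and $Z \in \XF \cup \{F\}$, then either $Z = F$ or $Z \subseteq \bar{\Theta}(F)$; in particular, by Condition~\ref{dn-apprel}\eqref{dn-apprel-2} of the CF-approximable relation, whenever $Z \Delta L$ holds for such a $Z$, we obtain $F \Delta L$. This reduces many of the verifications to reasoning with $F$ itself.

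For the routine axioms: Condition~\eqref{dn-am-2} is immediate because enlarging $\XF$ enlarges $\XF \cup \{F\}$. Condition~\eqref{dn-am-4} follows since $FRF'$ amounts to $F = F'$ or $F \subseteq \bar{\Theta}(F')$; in the latter case, if the witness $Z$ is $F$, Axiom~\ref{dn-apprel}\eqref{dn-apprel-2} converts $F \Delta L$ into $F' \Delta L$, and otherwise $Z \in \XF$ works as witness in $\XF \cup \{F'\}$. Condition~\eqref{dn-am-1} follows by unfolding $\mmodels'_{K}$ and using Axiom~\ref{dn-apprel}\eqref{dn-apprel-3}: the witness $Z$ for $\XF H^{\Delta}_{F} K$ (or $\XF H^{\Delta}_{F} E$ for $E$ in the $\YF$-witness) then satisfies $Z \Delta G$ because $G \subseteq \bar{\Theta'}(K)$ or $G \subseteq \bar{\Theta'}(E)$. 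Condition~\eqref{dn-am-3} is analogous, using Axiom~\ref{dn-apprel}\eqref{dn-apprel-2} to pull a witness from $\XF' \cup \{F\}$ back to one in $\XF \cup \{F\}$ via some $Z$ with the appropriate $\bar{\Theta}$ inclusion.

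The main obstacle is the interpolation Condition~\eqref{dn-am-5}. Assume $\XF H^{\Delta}_{F} \YF$. By the reduction above, $F \Delta L$ for every $L \in \YF$. If $\YF = \emptyset$, use Axiom~\ref{dn-apprel}\eqref{dn-apprel-1} to pick any $\hat{G} \in \FF'$ with $F \Delta \hat{G}$; otherwise iterate Axiom~\ref{dn-apprel}\eqref{dn-apprel-5} (together with transitivity of $\bar{\Theta'}$ via Lemma~\ref{lem-rsprop}\eqref{lem-rsprop-4}) to obtain a single $\hat{G} \in \FF'$ with $\bigcup \YF \subseteq \bar{\Theta'}(\hat{G})$ and $F \Delta \hat{G}$. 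Now apply Axiom~\ref{dn-apprel}\eqref{dn-apprel-4} to $F \Delta \hat{G}$ to obtain $\hat{F} \in \FF$ and $\hat{G}' \in \FF'$ with $\hat{F} \subseteq \bar{\Theta}(F)$, $\hat{G} \subseteq \bar{\Theta'}(\hat{G}')$, and $\hat{F} \Delta \hat{G}'$. Setting $\hat{C} = \hat{F}$, $\hat{E} = \hat{G}'$ and $\hat{U} = \hat{V} = \emptyset$, the three clauses of Condition~\eqref{dn-am-5} reduce to: $\hat{C} \subseteq \bar{\Theta}(F)$ (giving $\XF \mmodels_{F} \hat{C}$ via $F \in \XF \cup \{F\}$), $\hat{C} \Delta \hat{E}$ (giving $\emptyset H^{\Delta}_{\hat{C}} \hat{E}$), and $\bigcup \YF \subseteq \bar{\Theta'}(\hat{E})$ (which follows from $\bigcup \YF \subseteq \bar{\Theta'}(\hat{G}) \subseteq \bar{\Theta'}(\hat{G}')$ by Lemma~\ref{lem-rsprop}\eqref{lem-rsprop-4}).

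Finally, for the truth-element clause, assume $\bT, \bT'$ witness Condition~\eqref{cn-M} in $\UU, \UU'$. By Theorem~\ref{thm-cfa-inf}\eqref{thm-cfa-inf-2}, these are truth elements of $\CCC(\UU)$ and $\CCC(\UU')$. Applying Axiom~\ref{dn-apprel}\eqref{dn-apprel-1} to $\bT$ produces some $\hat{G} \in \FF'$ with $\bT \Delta \hat{G}$; Condition~\eqref{cn-M} in $\UU'$ gives $\bT' \subseteq \bar{\Theta'}(\hat{G})$; hence Axiom~\ref{dn-apprel}\eqref{dn-apprel-3} yields $\bT \Delta \bT'$, which is precisely $\emptyset H^{\Delta}_{\bT} \bT'$, as required.
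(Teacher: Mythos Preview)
Your proof is correct. The verifications of Conditions~\ref{dn-am}\eqref{dn-am-1}--\eqref{dn-am-4} and of the truth-element clause are essentially the same as the paper's (the paper writes them out a bit more explicitly, but the arguments are identical).

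For the interpolation Condition~\ref{dn-am}\eqref{dn-am-5} your route is genuinely different and more economical. The paper applies Axiom~\ref{dn-apprel}\eqref{dn-apprel-4} separately to each $Z_{L}\Delta L$ to obtain interpolants $M_{L}$ and $N_{L}$, collects these into non-trivial sets $\UF=\{M_{L}\}$ and $\VF=\{N_{L}\}$, and then uses Property~\eqref{cn-CF} and Axiom~\ref{dn-apprel}\eqref{dn-apprel-5} to find a common $C$ and $Q$ so that $\XF \mmodels_{F} (\{C\}\cup\UF)$, $\UF H^{\Delta}_{C}(\{Q\}\cup\VF)$, and $\VF \mmodels'_{Q}\LF$. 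You instead first push every witness to $F$ via your reduction observation, then iterate Axiom~\ref{dn-apprel}\eqref{dn-apprel-5} once to merge everything into a single $F\Delta\hat G$, interpolate once with Axiom~\ref{dn-apprel}\eqref{dn-apprel-4}, and take $U=V=\emptyset$. Your approach exploits the strongness of $\CCC(\UU)$ (every $\XF$ is essentially dominated by $\{F\}$), which lets you avoid bookkeeping the individual interpolants; the paper's construction is more uniform in spirit but heavier.
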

\begin{proof}
We have to verify Conditions~\ref{dn-am}\eqref{dn-am-1}-\eqref{dn-am-5}. Let  to this end $F, F' \in \FF$, $\XF, \XF' \in \consc_{F}$, $E, K \in \FF'$, and $\YF \in \consc'_{K}$.

\eqref{dn-am-1}
Assume that $\XF H^{\Delta}_{F} (\{K\} \cup \YF)$ and $\YF \mmodels_{K} E$. We have to show that $\XF H^{\Delta}_{F} E$. By our assumption we have that
\begin{equation*}
(\forall G \in \YF \cup \{K\}) (\exists Z_{G} \in \XF \cup \{F\}) Z_{G} \Delta G \quad\text{and}\quad
(\exists T \in \YF \cup \{K\}) E \subseteq \bar{\Theta'}(T). 
\end{equation*} 
It follows that $Z_{T} \Delta T$ and $E \subseteq \bar{\Theta'}(T)$. With Condition~\ref{dn-apprel}\eqref{dn-apprel-3} we thus obtain that $Z_{T} \Delta E$, that is, $\XF H^{\Delta}_{F} E$.

\eqref{dn-am-2}
Suppose that $\XF \subseteq \XF'$ and $\XF H^{\Delta}_{F} E$. We have to show that $\XF' H^{\Delta}_{F} E$. By our assumption we have that there is some $Z \in \XF \cup \{F\}$ with $Z \Delta E$. Then $Z \in \XF' \cup \{F\}$ too and therefore $\XF' H^{\Delta}_{F} E$.

\eqref{dn-am-3}
Assume that $\XF \mmodels_{F} \XF'$ and $\XF' H^{\Delta}_{F} E$. We must show that $\XF H^{\Delta}_{F} E$. By our supposition we have that there is some $Z \in \XF' \cup \{F\}$ with $Z \Delta E$. If $Z = F$ it follows that $\XF H^{\Delta}_{F} E$. Otherwise, if $Z \in \XF'$, there is some $T \in \XF \cup \{F\}$ with $Z \subseteq \bar{\Theta}(T)$. With Condition~\ref{dn-apprel}\eqref{dn-apprel-2} we therefore obtain that $T \Delta E$. Thus, $\XF H^{\Delta}_{F} E$.

\eqref{dn-am-4}
Let $\{F\} \in \consc_{F'}$ and $\XF H^{\Delta}_{F} E$. By \ref{dn-infofr}\eqref{dn-infofr-7} and the first assumption, $\XF \in \consc_{F'}$. Moreover, by the second one, there is some $Z \in \XF \cup \{F\}$ with $Z \Delta E$. If $Z \in \XF$, then we have that  $\XF H^{\Delta}_{F'} E$. In the other case that $Z = F$, we know that $\{F\} \in \consc_{F'}$, and hence that $F \subseteq \bar{\Theta}(F')$. With Property~\ref{dn-apprel}\eqref{dn-apprel-2} it follows that also $F' \Delta E$. Thus, there is some $Z' \in \XF \cup \{F'\}$ with $Z' \Delta E$, that is, $\XF H^{\Delta}_{F'} E$.

\eqref{dn-am-5}
Let $\XF H^{\Delta}_{F} \LF$ with $\LF \fsubset \FF'$. Then there is some $Z_{L} \in \XF \cup \{F\}$ with $Z_{L} \Delta L$, for each $L \in \LF$. With Condition~\ref{dn-apprel}\eqref{dn-apprel-4} we furthermore obtain, that for each $L \in \LF$ there are $M_{L} \in \FF$ and $N_{L} \in \FF'$ so that $M_{L} \subseteq \bar{\Theta}(Z_{L})$, $L \subseteq \bar{\Theta'}(N_{L})$ and $M_{L} \Delta N_{L}$.

Since $M_{L} \subseteq \bar{\Theta}(Z_{L})$ and $Z_{L} \in \XF \cup \{F\}$, it follows that $M_{L} \subseteq \bar{\Theta}(F)$. Set $\UF = \set{M_{L}}{L \in \LF}$. Because of Property~\eqref{cn-CF} there is thus some $C \in \FF$ with $\bigcup \UF \subseteq \bar{\Theta}(C)$ and $C \subseteq \bar{\Theta}(F)$. Thus, $\XF \mmodels_{F} C$ and $\{C\} \mmodels_{C} \UF$. As $M_{L} \subseteq \bar{\Theta}(Z_{L})$, we moreover have that $\XF \mmodels_{F} \UF$. In addition, since $\{C\} \mmodels_{C} \UF$, $\UF \in \consc_{C}$.

Set $\VF = \set{N_{L}}{L \in \LF}$. Because, for $L \in \LF$, $M_{L} \Delta N_{L}$ and $M_{L} \in \UF$, it follows that $\UF H^{\Delta}_{C} N_{L}$. Hence, $\UF H^{\Delta}_{C} \VF$.

As shown above, $M_{L} \subseteq \bar{\Theta}(C)$, for all $L \in \LF$. As $M_{L} \Delta N_{L}$, it follows with Property~\ref{dn-apprel}\eqref{dn-apprel-2} that $C \Delta N_{L}$, for all $L \in \LF$. Note that $\LF$ is finite. Therefore, by Property~\ref{dn-apprel}\eqref{dn-apprel-5}, there is some $Q \in \FF'$ with $\bigcup \VF \subseteq \bar{\Theta'}(Q)$ and $C \Delta Q$. Thus, we have that $\UF H^{\Delta}_{C} Q$. Moreover, $\VF \in \consc'_{Q}$
Finally, as $L \subseteq \bar{\Theta'}(N_{L})$, for all $L \in \LF$, it also follows that $\VF \mmodels_{Q} \LF$.

For the remaining statement assume that $\UU$ and $\UU'$ have Property~\eqref{cn-M}. Then there are $\bT \in \FF$ and $\bT' \in \FF'$ with $\bT \subseteq \bar{\Theta}(F)$ and $\bT' \subseteq \bar{\Theta'}(G)$, for all $F \in \FF$ and $G \in \FF'$. By \ref{dn-apprel}\eqref{dn-apprel-1} there is some $G \in \FF'$ with $\bT \Delta G$. Hence also $\bT \Delta \bT'$, by \ref{dn-apprel}\eqref{dn-apprel-3}.  It follows that $\emptyset H^{\Delta}_{\bT} \bT'$. 
\end{proof}

\begin{lemma}\label{lem-funcC}
$\fun{\CCC}{\CFA}{\sINF}$ is a functor such that 
\begin{enumerate}

\item
$\CCC[\tCFA] \subseteq \asINF$, 

\item
$\CCC[\CFAM] \subseteq \sINF_{\bt}$, and 

\item
$\CCC[\tCFAM] \subseteq \asINF_{\bt}$.

\end{enumerate}
\end{lemma}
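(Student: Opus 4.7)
The plan is to handle the four standard functoriality checks---object assignment, morphism assignment, preservation of identities, preservation of composition---after which the three subcategory inclusions fall out as immediate corollaries. The object part has already been carried out in Theorem~\ref{thm-cfa-inf}: every CF-approximation space is sent to a strong information frame, topological-ness is carried to algebraicity, and Condition~\eqref{cn-M} is carried to having a truth element. Proposition~\ref{pn-ar-am} does the morphism part: every CF-approximable relation is sent to an approximable mapping, and truth-element respect is preserved whenever both spaces satisfy \eqref{cn-M}. Thus only two equalities, $\CCC(\Id_{\UU}) = \Id_{\CCC(\UU)}$ and $\CCC(\Delta \circ \Omega) = \CCC(\Delta) \circ \CCC(\Omega)$, remain to be checked.

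The identity case is a direct unfolding. By definition, $\XF H^{\Id_{\UU}}_{F} G$ holds iff some $Z \in \XF \cup \{F\}$ satisfies $G \subseteq \bar{\Theta}(Z)$, which is exactly $\XF \mmodels_{F} G$, i.e., $\XF \Id_{\CCC(\UU)} G$.

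For composition I plan a two-way inclusion. Fix $\cfrel{\Delta}{\UU}{\UU'}$, $\cfrel{\Omega}{\UU'}{\UU''}$, $F \in \FF$, $\XF \in \consc_{F}$, and $G \in \FF''$. From left to right, if $\XF H^{\Delta \circ \Omega}_{F} G$, there exist $Z \in \XF \cup \{F\}$ and $E \in \FF'$ with $Z \Delta E$ and $E \Omega G$; choosing the intermediate witness $K = E$ and $\YF = \emptyset$ trivially gives $\YF \in \consc'_{K}$, $\XF H^{\Delta}_{F} (\{K\} \cup \YF)$ (via $Z$), and $\YF H^{\Omega}_{K} G$ (via $E$), hence $\XF (\CCC(\Delta) \circ \CCC(\Omega))_{F} G$. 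From right to left, if $\XF (\CCC(\Delta) \circ \CCC(\Omega))_{F} G$ then there are $K \in \FF'$ and $\YF \in \consc'_{K}$ with $\XF H^{\Delta}_{F} (\{K\} \cup \YF)$ and $\YF H^{\Omega}_{K} G$; the latter supplies some $T \in \YF \cup \{K\}$ with $T \Omega G$, and since $\XF H^{\Delta}_{F} T$ (as $T \in \{K\} \cup \YF$), the former supplies some $Z_{T} \in \XF \cup \{F\}$ with $Z_{T} \Delta T$, giving $Z_{T}(\Delta \circ \Omega) G$ and thus $\XF H^{\Delta \circ \Omega}_{F} G$.

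The three subcategory inclusions now follow immediately: Theorem~\ref{thm-cfa-inf}\eqref{thm-cfa-inf-1} yields $\CCC[\tCFA] \subseteq \asINF$, while Theorem~\ref{thm-cfa-inf}\eqref{thm-cfa-inf-2} together with the final clause of Proposition~\ref{pn-ar-am} yields the remaining two. I do not anticipate any real obstacle: all the substantive work (verifying the axioms of Definition~\ref{dn-am} and Condition~\eqref{cn-CF}) is already done; the only step that is not an instant unfolding is the right-to-left direction of the composition argument, and even that reduces in a single move to extracting the witness $T$ together with the corresponding $Z_{T}$.
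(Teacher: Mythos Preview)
Your proof is correct and is exactly the natural verification; the paper itself states the lemma without proof, relying on Theorem~\ref{thm-cfa-inf} and Proposition~\ref{pn-ar-am} for the object and morphism parts and leaving the identity and composition checks implicit. Your direct unfolding of $H^{\Id_{\UU}}$ and your two-way inclusion for composition (in particular the trick of taking $\YF=\emptyset$ for the left-to-right direction, which is legitimate since $\emptyset\in\consc'_{K}$) are precisely the missing details.
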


Next, we consider the converse situation. As will be shown, every approximable mapping  between information frames generates a CF-approximable relation between the derived CF-approximation spaces. Let $\bA$, $\bA'$ be information frames and $\appmap{H = (H_{i})_{i \in A}}{\bA}{\bA'}$ an approximable mapping. Then for  $\{(X, i)\} \in \FF$ and $\{(Y, j)\} \in \FF'$ define
\begin{equation*}
\{(X, i)\} \Delta_{H} \{(Y, j)\} \Leftrightarrow X H_{i} (\{j\} \cup Y).
\end{equation*}

\begin{proposition}\label{pn-am-ar}
Let  $\bA$, $\bA'$ be information frames and $\appmap{H = (H_{i})_{i \in A}}{\bA}{\bA'}$. Then 
\begin{equation*}
\cfrel{\Delta_{H}}{\EEE(\bA)}{\EEE(\bA')}.
\end{equation*} 
Moreover, if $\bA$ and $\bA'$ have truth elements $\bt$ and $\bt'$, respectively, which are respected  by $H$, then $\{(\emptyset, \bt)\} \Delta_{H} \{(\emptyset, \bt')\}$.
\end{proposition}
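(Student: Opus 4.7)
The plan is to unpack the definitions of $\EEE(\bA)$, $\EEE(\bA')$ and $\Delta_{H}$, and then verify the five clauses of Definition~\ref{dn-apprel} together with the truth-element clause. A direct computation gives
\[
\bar\Theta(\{(X,i)\}) = \{\,(Z,k) \in U : X \vdash_{i} \{k\} \cup Z\,\},
\]
and $\{(X,i)\} \Delta_{H} \{(Y,j)\}$ unfolds to $X H_{i} (\{j\} \cup Y)$, with the analogous statements on the primed side. Once these translations are in hand, each CF-approximability clause reduces to a specific axiom or consequence of the approximable mapping $H$.

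Axioms~\ref{dn-apprel}\eqref{dn-apprel-2} and \eqref{dn-apprel-3} will be the routine ones. For \eqref{dn-apprel-2}, the hypothesis $\{(X,i)\} \subseteq \bar\Theta(\{(X',i')\})$ translates to $X' \vdash_{i'} (\{i\} \cup X)$, and applying Lemma~\ref{lem-amstrong3} pointwise to $X H_{i} b$ for each $b \in \{j\} \cup Y$ yields $X' H_{i'} (\{j\} \cup Y)$. For \eqref{dn-apprel-3}, the hypothesis unfolds to $Y \vdash'_{j} (\{j'\} \cup Y')$, and pointwise application of Condition~\ref{dn-am}\eqref{dn-am-1} gives $X H_{i} (\{j'\} \cup Y')$.

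The two interpolation-type axioms \eqref{dn-apprel-4} and \eqref{dn-apprel-5} will be handled by Condition~\ref{dn-am}\eqref{dn-am-5} and its split form in Lemma~\ref{pn-amint}. For \eqref{dn-apprel-4}, applying \ref{dn-am}\eqref{dn-am-5} to $X H_{i} (\{j\} \cup Y)$ produces $c,e,U,V$ with $X \vdash_{i} (\{c\} \cup U)$, $U H_{c} (\{e\} \cup V)$, and $V \vdash'_{e} (\{j\} \cup Y)$, which, via the translation, yields witnesses $\hat F = \{(U,c)\}$ and $\hat G = \{(V,e)\}$ with the three required properties. For \eqref{dn-apprel-5}, the two hypotheses combine to $X H_{i} (\{j,j'\} \cup Y \cup Y')$, and applying the range-side half Lemma~\ref{pn-amint}\eqref{pn-amint-2} produces $\hat j \in A'$ and $\hat Y \in \con'_{\hat j}$ with $X H_{i} (\{\hat j\} \cup \hat Y)$ and $\hat Y \vdash'_{\hat j} (\{j,j'\} \cup Y \cup Y')$; this gives the required $\hat G = \{(\hat Y,\hat j)\}$, since entailment of a set is entailment of each of its elements.

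The step I expect to be the least routine is \eqref{dn-apprel-1}, because the axiomatization of approximable mappings has no explicit totality clause. The plan here is to exploit the convention that $X H_{i} \emptyset$ holds vacuously and then invoke Lemma~\ref{pn-amint}\eqref{pn-amint-2} with $F = \emptyset$: this produces $e \in A'$ and $V \in \con'_{e}$ with $X H_{i} (\{e\} \cup V)$ (the trailing condition $V \vdash'_{e} \emptyset$ being vacuous), yielding the witness $\hat G = \{(V,e)\}$. Finally, the truth-element statement is essentially definitional: if $H$ respects truth elements, then $\emptyset H_{\bt} \bt'$, which is precisely $\{(\emptyset,\bt)\} \Delta_{H} \{(\emptyset,\bt')\}$.
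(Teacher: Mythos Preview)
Your proposal is correct and follows essentially the same route as the paper's proof: the same translations of $\bar\Theta$ and $\Delta_H$, the vacuous $X H_i \emptyset$ plus Lemma~\ref{pn-amint}\eqref{pn-amint-2} for \eqref{dn-apprel-1}, Condition~\ref{dn-am}\eqref{dn-am-1} for \eqref{dn-apprel-3}, Condition~\ref{dn-am}\eqref{dn-am-5} for \eqref{dn-apprel-4}, and Lemma~\ref{pn-amint}\eqref{pn-amint-2} for \eqref{dn-apprel-5}. The only cosmetic difference is that for \eqref{dn-apprel-2} you invoke Lemma~\ref{lem-amstrong3} directly, whereas the paper unpacks it into Conditions~\ref{dn-am}\eqref{dn-am-3} and \eqref{dn-am-4}; these are the same argument.
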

\begin{proof}
\eqref{dn-apprel-1}
Let $\{(X, i)\} \in \FF$ and  $K = \emptyset$. Then $X H_{i} K$ holds. By Lemma~\ref{pn-amint}\eqref{pn-amint-2} there are  hence $e \in A'$ and $V \in \con'_{e}$ such that $X H_{i} (\{e\} \cup V)$ and $V \vdash'_{e} K$. Since $V \in \con'_{e}$, we have that $\{(V, e)\} \in \FF'$. This shows that there is some $\{(V, e)\} \in \FF'$ with $\{(X, i)\} \Delta_{H} \{(V, e)\}$.

\eqref{dn-apprel-2}
Let $\{(X, i)\}, \{(X', i')\} \in \FF$ and $\{(Y, j)\} \in \FF'$ so that $\{(X, i)\} \subseteq \bar{\Theta}(\{(X', i')\})$ and $\{(X, i)\} \Delta_{H} \{(Y, j)\}$. Then $X' \vdash_{i'} \{i\} \cup X$ and $X H_{i} (\{j\} \cup Y)$.  With \ref{dn-am}\eqref{dn-am-3}, \eqref{dn-am-4} it follows that $X' H_{i} (\{j\} \cup Y)$, that is, $\{(X', i')\} \Delta_{H} \{(Y, j)\}$.

\eqref{dn-apprel-3}
Let $\{(X, i)\} \in \FF$ and $\{(Y, j)\}, \{(Y', j')\} \in \FF'$ so that we have $\{(X, i)\} \Delta_{H} \{(Y, j)\}$ and $\{(Y', j')\} \subseteq \bar{\Theta'}(\{(Y, j)\})$. Then $X H_{i} (\{j\} \cup Y)$ and $Y \vdash'_{j} (\{j'\} \cup Y')$. By applying  \ref{dn-am}\eqref{dn-am-1} we obtain that $X H_{i} (\{j'\} \cup Y')$, that is, $\{(X, i)\} \Delta_{H} \{(Y', j')\}$.

\eqref{dn-apprel-4}
Let $\{(X, i)\} \in \FF$ and $\{(Y, j)\} \in \FF'$ with $\{(X, i)\} \Delta_{H} \{(Y, j)\}$. Then $X H_{i} (\{j\} \cup Y)$. By \ref{dn-am}\eqref{dn-am-5} there are $c \in A$, $U \in \con_{c}$, $e \in A'$ and $V \in \con'_{e}$ so that $X \vdash_{i} \{c\} \cup U$, $U H_{c} (\{e\} \cup V)$ and $V \vdash'_{e} \{j\} \cup Y$. Thus, $\{(U, c)\} \subseteq \bar{\Theta}(\{(X, i)\})$, $\{(U, c)\} \Delta_{H} \{(V, e)\}$ and $\{(Y, j)\} \subseteq \bar{\Theta'}(\{(V, e)\})$.

\eqref{dn-apprel-5}
Let $\{(X, i)\} \in \FF$ and for $\nu = 1, 2$ $\{(Y_{\nu}, j_{\nu})\} \in \FF'$ with $\{(X, i)\} \Delta_{H} \{(Y_{\nu}, j_{\nu})\}$. Then $X H_{i} (\{j_{1}, j_{2}\} \cup Y_{1} \cup Y_{2})$. By Lemma~\ref{pn-amint}\eqref{pn-amint-2} there is thus some $e \in A'$ and $V \in \con'_{e}$ with $X H_{i} (\{e\} \cup V)$ and $V \vdash'_{e} \{j_{1}, j_{2}\} \cup Y_{1} \cup Y_{2}$. Hence, $\{(X, i)\} \Delta_{H} \{(V, e)\}$ and $\{(Y_{1}, j_{1}), (Y_{2}, j_{2})\} \subseteq \bar{\Theta'}(\{(V, e)\})$.

The remaining part of the assertion follows immediately.
\end{proof}

Set $\EEE(H) = \Delta_{H}$.

\begin{lemma}\label{lem-funcIC}
$\fun{\EEE}{\INF}{\CFA}$ is a functor such that
\begin{enumerate}

\item
$\EEE[\INF_{\bt}] \subseteq \CFAM$,

\item
$\EEE[\aINF] \subseteq \tCFA$,

\item 
$\EEE[\aINF_{\bt}] \subseteq \tCFAM$.

\end{enumerate}
\end{lemma}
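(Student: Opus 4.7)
The plan is to assemble the functoriality of $\EEE$ from the already-established object and morphism correspondences, then deduce the three subcategory inclusions from earlier results. The only genuinely substantial step is preservation of composition.

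First I would record that $\EEE$ acts correctly on objects by Theorem~\ref{thm-inf-cfa} and on morphisms by Proposition~\ref{pn-am-ar}. Preservation of identities is immediate by unfolding: $\EEE(\Id_{\bA})=\Delta_{(\vdash_i)_{i\in A}}$ relates $\{(X,i)\}$ to $\{(Y,j)\}$ exactly when $X\vdash_i (\{j\}\cup Y)$, which by the definition of $\Theta$ in $\EEE(\bA)$ is precisely $\{(Y,j)\}\subseteq\bar{\Theta}(\{(X,i)\})$, i.e. the identity relation $\Id_{\EEE(\bA)}$.

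The main task is to show $\EEE(G\circ H)=\EEE(G)\circ\EEE(H)$ for $\appmap{G}{\bA^{(1)}}{\bA^{(2)}}$ and $\appmap{H}{\bA^{(2)}}{\bA^{(3)}}$. Unwinding both sides reduces to proving
\[
X (G\circ H)_i (\{j\}\cup Y) \Longleftrightarrow (\exists e,V\in\con^{(2)}_e)\, X G_i (\{e\}\cup V) \wedge V H_e (\{j\}\cup Y).
\]
The $(\Leftarrow)$ direction is immediate from the definition of $G\circ H$. For $(\Rightarrow)$, for each $a\in\{j\}\cup Y$ pick $e_a,V_a\in\con^{(2)}_{e_a}$ witnessing $X G_i(\{e_a\}\cup V_a)$ and $V_a H_{e_a} a$. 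Since $\{j\}\cup Y$ is finite, $W=\bigcup_{a\in\{j\}\cup Y}(\{e_a\}\cup V_a)$ is finite and $X G_i W$. Apply the range-interpolation property Lemma~\ref{pn-amint}\eqref{pn-amint-2} to $G$ to obtain $e\in A^{(2)}$ and $V\in\con^{(2)}_e$ with $X G_i(\{e\}\cup V)$ and $V\vdash^{(2)}_e W$. In particular $V\vdash^{(2)}_e (\{e_a\}\cup V_a)$ for every $a$, and combining this with $V_a H_{e_a} a$ via the strong-cut rule Lemma~\ref{lem-amstrong3} yields $V H_e a$. Hence $V H_e (\{j\}\cup Y)$, as required. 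This is the part I expect to be the main obstacle, and it is exactly what the interpolation axiom for approximable mappings was designed to handle.

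Finally, the three subcategory inclusions follow from facts already proved. If $\bA$ has truth element $\bt$, Theorem~\ref{thm-inf-cfa}\eqref{thm-inf-cfa-2} gives $\EEE(\bA)\in\CFAM$ (the target subcategory is full, so no extra requirement is placed on morphisms); this settles $\EEE[\INF_{\bt}]\subseteq\CFAM$. For $\bA$ algebraic, Theorem~\ref{thm-inf-cfa}\eqref{thm-inf-cfa-1} gives that $\EEE(\bA)$ is topological, so $\EEE[\aINF]\subseteq\tCFA$. The combined inclusion $\EEE[\aINF_{\bt}]\subseteq\tCFAM$ is then the intersection of the previous two.
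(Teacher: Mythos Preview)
Your proof is correct. The paper does not supply a proof for this lemma at all; it is stated immediately after the definition $\EEE(H)=\Delta_H$ and the narrative moves on. Your argument fills in precisely the nontrivial point (preservation of composition via Lemma~\ref{pn-amint}\eqref{pn-amint-2} and Lemma~\ref{lem-amstrong3}) that the paper leaves implicit, and your handling of identities and of the subcategory inclusions via Theorem~\ref{thm-inf-cfa} is accurate, including the observation that $\CFAM$, $\tCFA$, $\tCFAM$ are full so only the object-level conditions need checking.
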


As we will show next, the functors $\fun{\CCC}{\CFA}{\INF}$ and $\fun{\EEE}{\INF}{\CFA}$ establish an equivalence between the categories $\CFA$  and $\INF$. We first construct a natural isomorphism $\fun{\delta}{\Id_{\CFA}}{\EEE \circ \CCC}$. Let to this end $\UU$ be a CF-approximation space. Then 
\begin{equation*}
\EEE(\CCC(\UU)) = (\tilde{U}, \tilde{\Theta}, \tilde{\FF})
\end{equation*}
with
\begin{align*}
&\tilde{U} = \bigcup\set{\consc_{F} \times \{F\}}{F \in \FF}, \\
&(\XF, F) \tilde{\Theta} (\YF, G) \Leftrightarrow (\forall K \in \{F\} \cup \XF) (\exists Z_{K} \in  \{G\} \cup \YF) K \subseteq \bar{\Theta}(Z_{K}), \\
&\tilde{\FF} = \set{\{(\XF, F)\}}{(\XF, F) \in \tilde{U}}.
\end{align*}

Define relations $\Upsilon_{\UU} \subseteq \FF \times \tilde{\FF}$ and $\Gamma_{\UU} \subseteq  \tilde{\FF} \times \FF$ by
\begin{align*}
&F \Upsilon_{\UU} \{(\YF, G)\} \Leftrightarrow (\forall K \in \{G\} \cup \YF) K \subseteq \bar{\Theta}(F), \\
&\{(\XF, F)\} \Gamma_{\UU} G \Leftrightarrow (\exists K \in \{F\} \cup \XF) G \subseteq \bar{\Theta}(K).
\end{align*}

\begin{lemma}\label{lem-C}
\begin{enumerate}

\item\label{lem-C-1}
$\cfrel{\Upsilon_{\UU}}{\FF}{\tilde{\FF}}$.

\item\label{lem-C-2}
$\cfrel{\Gamma_{\UU}}{\tilde{\FF}}{\FF}$.

\item\label{lem-C-3}
$\Upsilon_{\UU} \circ \Gamma_{\UU} = \Id_{\UU}$.

\item\label{lem-C-4}
$\Gamma_{\UU} \circ \Upsilon_{\UU} = \Id_{\EEE(\CCC(\UU))}$.

\end{enumerate}
\end{lemma}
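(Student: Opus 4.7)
The plan is to verify the five axioms of Definition~\ref{dn-apprel} for each of $\Upsilon_\UU$ and $\Gamma_\UU$ (statements~\eqref{lem-C-1} and~\eqref{lem-C-2}) and then to compute the two composites (statements~\eqref{lem-C-3} and~\eqref{lem-C-4}). The only tools needed throughout are the transitivity of $\bar\Theta$ from Lemma~\ref{lem-rsprop}\eqref{lem-rsprop-3} and the axiom~\eqref{cn-CF}.

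For~\eqref{lem-C-1} and~\eqref{lem-C-2}, axioms \ref{dn-apprel}\eqref{dn-apprel-2} and \eqref{dn-apprel-3} follow after unfolding from a single application of transitivity. For the three existence axioms, the idea is to always take the intermediate element of $\tilde\FF$ to be a singleton of the form $\{(\emptyset, H)\}$; this sidesteps the case split in the definition of $\consc_H$. Concretely, \ref{dn-apprel}\eqref{dn-apprel-1} follows from \eqref{cn-CF} applied to the empty finite subset of $\bar\Theta(F)$, producing some $H \in \FF$ with $H \subseteq \bar\Theta(F)$; \ref{dn-apprel}\eqref{dn-apprel-4} is handled by two successive applications of \eqref{cn-CF}, first to produce a finite element of $\FF$ inside $\bar\Theta(F)$ that covers the given data and then to interpolate one more step further inside $\bar\Theta(F)$; and \ref{dn-apprel}\eqref{dn-apprel-5} needs one application of \eqref{cn-CF} to the union of the two given finite families.

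For~\eqref{lem-C-3}, unfolding $F (\Upsilon_\UU \circ \Gamma_\UU) G$ yields the existence of some $K' \in \FF$ with $K' \subseteq \bar\Theta(F)$ and $G \subseteq \bar\Theta(K')$, which collapses by transitivity of $\bar\Theta$ to $G \subseteq \bar\Theta(F)$, that is, $F \Id_\UU G$. Conversely, given $G \subseteq \bar\Theta(F)$, axiom~\eqref{cn-CF} supplies an interpolant $\hat F \in \FF$ with $G \subseteq \bar\Theta(\hat F)$ and $\hat F \subseteq \bar\Theta(F)$, and then $\{(\emptyset, \hat F)\} \in \tilde\FF$ realizes the composition.

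The main obstacle is~\eqref{lem-C-4}. Suppose $\{(\XF, F)\} \, \Id_{\EEE(\CCC(\UU))} \, \{(\YF, G)\}$, which unpacks to $(\YF, G) \tilde\Theta (\XF, F)$: each $K \in \{G\} \cup \YF$ has $K \subseteq \bar\Theta(Z_K)$ for a possibly different witness $Z_K \in \{F\} \cup \XF$. To exhibit the composition I must produce a \emph{single} $E \in \FF$ that covers all the $K$'s and is simultaneously dominated by one common $Z \in \{F\} \cup \XF$. The key observation is that the definition of $\consc_F$ forces every such $Z_K$ either to equal $F$ itself or to lie in $\{H \in \FF : H \subseteq \bar\Theta(F)\}$, so in both cases transitivity yields $K \subseteq \bar\Theta(F)$ uniformly. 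The finite set $G \cup \bigcup \YF$ therefore lies in $\bar\Theta(F)$, and \eqref{cn-CF} delivers $E \in \FF$ with $G \cup \bigcup \YF \subseteq \bar\Theta(E)$ and $E \subseteq \bar\Theta(F)$, witnessed via $Z = F$. The reverse inclusion is pure transitivity: from $E \subseteq \bar\Theta(Z)$ and $K \subseteq \bar\Theta(E)$ for each $K$ one gets $K \subseteq \bar\Theta(Z)$, which is exactly $(\YF, G) \tilde\Theta (\XF, F)$.
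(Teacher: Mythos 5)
Your proposal is correct and follows essentially the same route as the paper: verify the five axioms of Definition~\ref{dn-apprel} for $\Upsilon_{\UU}$ and $\Gamma_{\UU}$ using the transitivity of $\bar{\Theta}$ and Condition~\eqref{cn-CF}, then compute the two composites. Your two local streamlinings are both sound: taking witnesses of the form $\{(\emptyset,H)\}$ (legitimate since $\emptyset\in\consc_{H}$) avoids the paper's case splits on whether $\XF=\{F\}$, and in part~\eqref{lem-C-4} you apply \eqref{cn-CF} once directly in $\UU$ to $G\cup\bigcup\YF\fsubset\bar{\Theta}(F)$, whereas the paper instead interpolates in the derived space $\EEE(\CCC(\UU))$ and then extracts a single witness.
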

\begin{proof}
\eqref{lem-C-1}
We have to verify the conditions in Definition~\ref{dn-apprel}:

\ref{dn-apprel}\eqref{dn-apprel-1}
Let $F \in \FF$ and apply Condition~\eqref{cn-CF} for $K = \emptyset$. Then there is some $G \in \FF$ with $G \subseteq \bar{\Theta}(F)$. Set $\YF = \{G\}$. Then $\YF \in \consc_{G}$ and hence $\{(\YF, G)\} \in \tilde{\FF}$. It follows for $K' \in \YF \cup \{G\}$ that $K' = G \subseteq \bar{\Theta}(F)$. Thus, $F \Upsilon_{\UU} \{(\YF, G)\}$.

\ref{dn-apprel}\eqref{dn-apprel-2}
Let $F, F' \in \FF$ and $\{(\YF, G)\} \in \tilde{\FF}$ with $F \subseteq \bar{\Theta}(F')$ and $F \Upsilon_{\UU} \{(\YF, G)\}$. Then we have for all $K \in \YF \cup \{G\}$ that $K \subseteq \bar{\Theta}(F)$. Since $F \subseteq \bar{\Theta}(F')$, it follows by the transitivity of $\Theta$ that $K \fsubset \bar{\Theta}(F')$. Hence, $F' \Upsilon_{\UU} \{(\YF, G)\}$. 

\ref{dn-apprel}\eqref{dn-apprel-3}
Let $F \in \FF$ and $\{(\YF, G)\}, \{(\YF', G')\} \in \tilde{\FF}$ with 
\begin{equation*}
F \Upsilon_{\UU} \{(\YF, G)\} \quad\text{and}\quad \{(\YF', G')\} \subseteq  \bar{\tilde{\Theta}}(\{(\YF, G)\}).
\end{equation*}
Then we have that $K \subseteq \bar{\Theta}(F)$ for all $K \in \YF \cup \{G\}$ and that, for each $K' \subseteq \YF' \cup \{G\}$. $L \in \YF \cup \{G\}$ with $K' \fsubset \bar{\Theta}(L)$. Hence, $K' \subseteq \bar{\Theta}(L) \subseteq \bar{\Theta}(F)$ in view of transitivity and Condition~\ref{dn-apprel}\eqref{dn-apprel-3}.

\ref{dn-apprel}\eqref{dn-apprel-4}
Let $F \in \FF$ and $\{(\YF, G)\} \in \tilde{\FF}$ with $F \Upsilon_{\UU} \{(\YF, G)\}$. Then we have for all $K \in \YF \cup \{G\}$ that $K \subseteq \bar{\Theta}(F)$. Let $M \in \YF \cup \{G\}$. Then, by Condition~\eqref{cn-CF}, there is some $F_{M} \in \FF$ so that
\begin{align}
&M \subseteq \bar{\Theta}(F_{M}) \quad\text{and} \label{eq-stern} \\
&F_{M} \subseteq \bar{\Theta}(F). \label{eq-2stern} 
\end{align}
As a consequence of \eqref{eq-2stern} we obtain that $\bigcup\set{F_{M}}{M \in \YF \cup \{G\}} \subseteq \bar{\Theta}(F)$. Since $\YF$ is finite, it follows with Property~\eqref{cn-CF} that there is some $F' \in \FF$ so that 
\begin{equation}\label{eq-3stern}
\bigcup\set{F_{M}}{M \in \YF \cup \{G\}} \subseteq \bar{\Theta}(F')
\end{equation}
and $F' \subseteq \bar{\Theta}(F)$.

If $\YF = \{G\}$, set $G' = F_{G}$ and $\YF' = \{G'\}$. Then $\YF' \in \consc_{G'}$ and hence $\{(\YF', G')\} \in \tilde{\FF}$. Because of \eqref{eq-stern} we moreover have that $G \subseteq \bar{\Theta}(G')$. Thus, $\{(\YF, G)\} \subseteq \bar{\tilde{\Theta}}(\{(\YF', G')\})$. Finally,  it follows  by \eqref{eq-3stern} that $G' \subseteq \bar{\Theta}(F')$, that is, $F' \Upsilon_{\UU} \{(\YF', G')\}$.

If, on the other hand, $\YF \neq \{G\}$, we have for all $K \in \YF$ that $K \subseteq \bar{\Theta}(G)$. As by \eqref{eq-stern} $G \subseteq \bar{\Theta}(F_{G})$, it follows that $K \subseteq \bar{\Theta}(F_{G})$. Set $G' = F_{G}$ and $\YF' = \{G'\}$. Then $\YF' \in \consc_{G'}$, that is, $(\YF', G') \in \tilde{\FF}$, and $(\YF, G) \tilde{\Theta} (\YF', G')$. With \eqref{eq-3stern} we moreover have that $G' \subseteq \bar{\Theta}(F')$, that is, $F' \Upsilon_{\UU} \{(\YF', G')\}$.

\ref{dn-apprel}\eqref{dn-apprel-5}
Let $F \in \FF$ and for $\nu = 1, 2$, $\{(\YF_{\nu}, G_{\nu})\} \in \tilde{\FF}$ with $F \Upsilon_{\UU} \{(\YF_{\nu}, G_{\nu})\}$. Then we have for all $K \in \YF_{1} \cup \YF_{2} \cup \{G_{1}, G_{2}\}$ that $K \subseteq \bar{\Theta}(F)$. It follows that $\bigcup (\YF_{1} \cup \YF_{2} \cup \{G_{1}, G_{2}\}) \fsubset \bar{\Theta}(F)$. Thus, there is some $Z \in \FF$ with $\bigcup (\YF_{1} \cup \YF_{2} \cup \{G_{1}, G_{2}\}) \subseteq \bar{\Theta}(Z)$ and 
\begin{equation}\label{eq-circ}
Z \subseteq \bar{\Theta}(F).
\end{equation}
Set $\ZF = \{Z\}$. Then $\ZF \in \consc_{Z}$ and hence $\{(\ZF, Z)\} \in \tilde{\FF}$. Moreover, we have for all $K \in \YF_{1} \cup \YF_{2} \cup \{G_{1}, G_{2}\}$ that $K \subseteq \bar{\Theta}(Z)$. Thus, $\{(\YF_{1}, G_{1})\} \cup \{(\YF_{2}, G_{2})\} \subseteq \bar{\tilde{\Theta}}(\{(\ZF, Z)\})$. Because of \eqref{eq-circ} we moreover have that $F \Upsilon_{\UU} \{(\ZF, Z)\}$.

\eqref{lem-C-2}
Again we have to verify Conditions~\ref{dn-apprel}\eqref{dn-apprel-1}-\eqref{dn-apprel-5}:

\ref{dn-apprel}\eqref{dn-apprel-1}
Let $\{(\XF, F)\} \in \tilde{\FF}$. Dann ist $\XF \in \consc_{F}$, that is, either $\XF = \{F\}$ or $K \subseteq \bar{\Theta}(F)$, for all $K \in \XF$.

If $\XF = \{F\}$, then by Condition~\eqref{cn-CF}, as $\emptyset \fsubset \bar{\Theta}(F)$, there is some $G \in \FF$ with $\emptyset \subseteq \bar{\Theta}(G)$ and $G \subseteq \bar{\Theta}(F)$. Thus, $\{(\XF, F)\} \Gamma_{\UU} G$.

On the other hand, if $\XF \neq \{F\}$, then we have for all $K \in \XF$ that $K \subseteq \bar{\Theta}(F)$. Therefore, $\bigcup \XF \fsubset \bar{\Theta}(F)$.  By Condition~\eqref{cn-CF}, again, there is some $G \in \FF$ with $\bigcup \XF \fsubset \bar{\Theta}(G)$ and $G \subseteq \bar{\Theta}(F)$. It follows that $\{(\XF, F)\} \Gamma_{\UU} G$.

\ref{dn-apprel}\eqref{dn-apprel-2}
Let $\{(\XF, F)\}, \{(\XF', F')\} \in \tilde{\FF}$ and $G \in \tilde{\FF'}$ with 
\begin{equation*}
\{(\XF, F)\} \subseteq \bar{\tilde{\Theta}}(\{(\XF', F')\}) \quad\text{and}\quad \{(\XF, F)\} \Gamma_{\UU} G. 
\end{equation*}
Then we have
\begin{equation*}
(\forall K \in \XF \cup \{F\}) (\exists L \in \XF' \cup \{F'\}) K \subseteq \bar{\Theta}(L) \quad\text{and} \quad
(\exists M \in \XF \cup \{F\}) G \subseteq \bar{\Theta}(M).
\end{equation*}
We have to show that for some $M' \in \XF' \cup \{F'\}$, $G \subseteq \bar{\Theta}(M')$. Let $M \in \XF \cup \{F\}$ with $G \subseteq \bar{\Theta}(M)$. Then there exists $M' \in \XF' \cup \{F'\}$ such that $M \subseteq \bar{\Theta}(M')$. It follows that $G \subseteq \bar{\Theta}(M')$.

\ref{dn-apprel}\eqref{dn-apprel-3}
Let $\{(\XF, F)\} \in \tilde{\FF}$ and $G, G' \in \FF$ so that $\{(\XF, F)\} \Gamma_{\UU} G$ and $G' \subseteq \bar{\Theta}(G)$. Then there is some $K \in \XF \cup \{F\}$ with $G \subseteq \bar{\Theta}(K)$. It follows that also $G' \subseteq \bar{\Theta}(K)$, that is, $\{(\XF, F)\} \Gamma_{\UU} G'$.

\ref{dn-apprel}\eqref{dn-apprel-4}
Let $\{(\XF, F)\} \in \tilde{\FF}$ and $G \in \FF$ such that $\{(\XF, F)\} \Gamma_{\UU} G$. Then there exists $K \in \XF \cup \{F\}$ with $G \subseteq \bar{\Theta}(K)$. Thus, there is some $K' \in \FF$ so that $G \subseteq \bar{\Theta}(K')$ and $K' \subseteq \bar{\Theta}(K)$. Set $F' = K'$ and $\XF' = \{F'\}$. Then $(\XF', F') \tilde{\Theta} (\XF, F)$, that is, $\{(\XF', F')\} \subseteq \bar{\tilde{\Theta}}(\{(\XF, F)\})$. Moreover, as $G \subseteq \bar{\Theta}(K')$, there is some $K'' \in \FF$ with $G \subseteq \bar{\Theta}(K'')$ and $K'' \subseteq \bar{\Theta}(K')$. Set $G' = K''$. Then $G \subseteq \bar{\Theta}(G')$. In addition, $G' \subseteq \bar{\Theta}(F')$, which means that $\{(\XF', F')\} \Gamma_{\UU} G'$.

\ref{dn-apprel}\eqref{dn-apprel-5}
Let $\{(\XF, F)\} \in \tilde{\FF}$ and for $\nu = 1, 2$, $G_{\nu} \in \FF$ with $\{(\XF, F)\} \Gamma_{\UU} G_{\nu}$. Then there exists $K_{\nu} \in \XF \cup \{F\}$ so that $G_{\nu} \subseteq \bar{\Theta}(K_{\nu})$. 

If $\XF = \{F\}$, we have that $K_{1} = K_{2} = F$. Thus $G_{1} \cup G_{2} \subseteq \bar{\Theta}(F)$. Therefore, there is some $\hat{G} \in \FF$ with $G_{1} \cup G_{2} \subseteq \bar{\Theta}(\hat{G})$ and $\hat{G} \subseteq \bar{\Theta}(F)$, where the latter property means that $\{(\XF, F)\} \Gamma_{\UU} \hat{G}$.

If, however,  $\XF \neq \{F\}$, then we have for all $M \in \XF$ that $M \subseteq \bar{\Theta}(F)$. Moreover, there is some $M_{\nu} \in \XF$, for $\nu =1, 2$, with $G_{\nu} \subseteq \bar{\Theta}(M_{\nu}) \subseteq \bar{\Theta}(F)$. Thus, $G_{1} \cup G_{2} \subseteq \bar{\Theta}(F)$. Now, it follows as in the preceding case that there is some $\hat{G} \in \FF$ with $G_{1} \cup G_{2} \subseteq \bar{\Theta}(\hat{G})$ and $\hat{G} \subseteq \bar{\Theta}(F)$. As $F \in \XF \cup \{F\}$, the latter property implies that $\{(\XF, F)\} \Gamma_{\UU} \hat{G}$.

\eqref{lem-C-3}
Let $F, G \in \FF$. Then we have that
\begin{align*}
F (\Upsilon_{\UU} \circ \Gamma_{\UU}) G 
&\Leftrightarrow (\exists \{(\XF, L)\} \in \tilde{\FF}) F \Upsilon_{\UU} \{(\XF, L)\} \wedge \{(\XF, L)\} \Gamma_{\UU} G \\
&\Leftrightarrow (\exists \{(\XF, L)\} \in \tilde{\FF}) (\forall K \in \XF \cup \{L\}) K \subseteq \bar{\Theta}(F) \wedge (\exists M \in \XF \cup \{L\}) G \subseteq \bar{\Theta}(M) \\
&\Rightarrow G \subseteq \bar{\Theta}(F) \\
&\Leftrightarrow F \Id_{\UU} G.
\end{align*}

Conversely, if $G \subseteq \bar{\Theta}(F)$, then there is some $\hat{G} \in \FF$ with $G \subseteq \bar{\Theta}(\hat{G})$ and $\hat{G} \subseteq \bar{\Theta}(F)$. Set $\XF = \{\hat{G}\}$. Then $\XF \in \consc_{\hat{G}}$ and hence $(\XF, \hat{G}) \in \FF$. Moreover, $F \Upsilon_{\UU} \{(\XF, \hat{G})\}$ and $\{(\XF, \hat{G})\} \Gamma_{\UU} G$. Hence we have that $F (\Upsilon_{\UU} \circ \Gamma_{\UU}) G$.

\eqref{lem-C-4}
Let $\{(\XF, F)\}, \{(\YF, G)\} \in \tilde{\FF}$. Then we have that
\begin{align*}
\{(\XF, F)\} (\Gamma_{\UU}& \circ \Upsilon_{\UU}) \{(\YF, G)\} \\
&\Leftrightarrow (\exists L \in \FF) \{(\XF, F)\} \Gamma_{\UU} L \wedge L \Upsilon_{\UU} \{(\YF, G)\} \\
&\Leftrightarrow (\exists L \in \FF)(\exists K \in \XF \cup \{F\}) L \subseteq \bar{\Theta}(K) \wedge (\forall M \in \YF \cup \{G\}) M \subseteq \bar{\Theta}(L) \\
&\Rightarrow (\forall M \in \YF \cup \{G\}) (\exists K \in \XF \cup \{F\}) M \subseteq \bar{\Theta}(K) \\
&\Rightarrow (\YF, G) \tilde{\Theta} (\XF, F) \\
&\Leftrightarrow \{(\YF, G)\} \subseteq \bar{\tilde{\Theta}}(\{(\XF, F)\}) \\
&\Leftrightarrow \{(\XF, F)\} \Id_{\EEE(\CCC(\UU))} (\{(\YF, G)\}).
\end{align*}

Conversely, let $\{(\YF, G)\} \subseteq \bar{\tilde{\Theta}}(\{(\XF, F)\}$. Then there is some $\{(\ZF, Z)\} \in \tilde{\FF}$ with $\{(\YF, G)\}  \subseteq \bar{\tilde{\Theta}}(\{(\ZF, Z)\})$ and $\{(\ZF, Z)\} \subseteq \bar{\tilde{\Theta}}(\{(\XF, F)\})$. It follows that
\begin{align}
&(\forall M \in \YF \cup \{G\}) (\exists L \in \ZF \cup \{Z\}) M \subseteq \bar{\Theta}(L), \label{cn-nstern} \quad\text{and} \\
&(\forall L' \in \ZF \cup \{Z\}) (\exists N_{L'} \in \XF \cup \{F\}) L' \subseteq \bar{\Theta}(N_{L'}). \label{cn-2nstern}
\end{align}
As $\ZF \in \consc_{Z}$ we have that either $\ZF = \{Z\}$, or $\ZF \neq \{Z\}$ and $K \subseteq \bar{\Theta}(Z)$, for all $K \in \ZF$. Therefore, \eqref{cn-nstern} implies 
 that $M \subseteq \bar{\Theta}(Z)$, for all $M \in \YF \cup \{G\}$, that is, we have that $Z \Upsilon_{\UU} \{(\YF, G)\}$. Next, choose $L' = Z$ in \eqref{cn-2nstern}. Then we obtain in particular that $N_{Z} \in \XF \cup \{F\}$ with $Z \subseteq \bar{\Theta}(N_{Z})$, that is, we have $\{(\XF, F)\} \Gamma_{\UU} Z$. This shows that $\{(\XF, F)\} (\Gamma_{\UU} \circ \Upsilon_{\UU}) \{(\YF, G)\}$.
\end{proof}

Set $\delta_{\UU} = \Upsilon_{\UU}$.

\begin{lemma}\label{lem-natrdelt}
$\fun{\delta}{\ID_{\CFA}}{\EEE \circ \CCC}$ is a natural transformation. 
\end{lemma}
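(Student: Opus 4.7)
The plan is to verify the naturality square: for any $\cfrel{\Delta}{\UU}{\UU'}$, I must show
\[
\Upsilon_{\UU} \circ \EEE(\CCC(\Delta)) = \Delta \circ \Upsilon_{\UU'}
\]
as relations from $\FF$ to $\tilde{\FF'}$. Unfolding the definitions of $\CCC$ on morphisms, of $\EEE$ on morphisms, and of composition of CF-approximable relations, the left-hand side holds between $F \in \FF$ and $\{(\YF', G')\} \in \tilde{\FF'}$ exactly when there exists $\{(\XF, F_{0})\} \in \tilde{\FF}$ with $K \subseteq \bar{\Theta}(F)$ for every $K \in \XF \cup \{F_{0}\}$, and such that for each $K' \in \{G'\} \cup \YF'$ there is some $Z_{K'} \in \XF \cup \{F_{0}\}$ with $Z_{K'} \Delta K'$. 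The right-hand side holds exactly when there is some $F_{1} \in \FF'$ with $F \Delta F_{1}$ and $K' \subseteq \bar{\Theta'}(F_{1})$ for all $K' \in \{G'\} \cup \YF'$.

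For the forward implication, suppose the left-hand side data are given. Since each $Z_{K'} \subseteq \bar{\Theta}(F)$, Condition~\ref{dn-apprel}\eqref{dn-apprel-2} applied to $\Delta$ gives $F \Delta K'$ for every $K' \in \{G'\} \cup \YF'$. An induction on the finite set $\{G'\} \cup \YF'$ using Condition~\ref{dn-apprel}\eqref{dn-apprel-5} then produces a single $F_{1} \in \FF'$ with $F \Delta F_{1}$ and $\bigcup(\{G'\} \cup \YF') \subseteq \bar{\Theta'}(F_{1})$, which is the right-hand side witness.

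For the reverse implication, suppose such an $F_{1}$ is given. Applying Condition~\ref{dn-apprel}\eqref{dn-apprel-4} to $F \Delta F_{1}$ produces $\hat{F} \in \FF$ and $\hat{G} \in \FF'$ with $\hat{F} \subseteq \bar{\Theta}(F)$, $F_{1} \subseteq \bar{\Theta'}(\hat{G})$, and $\hat{F} \Delta \hat{G}$. By transitivity of $\Theta'$ (Lemma~\ref{lem-rsprop}\eqref{lem-rsprop-4}), each $K' \in \{G'\} \cup \YF'$ satisfies $K' \subseteq \bar{\Theta'}(\hat{G})$, and Condition~\ref{dn-apprel}\eqref{dn-apprel-3} then yields $\hat{F} \Delta K'$ for every such $K'$. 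Setting $F_{0} = \hat{F}$ and $\XF = \{F_{0}\} \in \consc_{F_{0}}$ exhibits the required witness $\{(\XF, F_{0})\}$ for the left-hand side.

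The only step requiring any real work is the collection argument in the forward direction, where \ref{dn-apprel}\eqref{dn-apprel-5} must be iterated to fuse the individually witnessed $K'$'s into a single target $F_{1}$; the reverse direction is a one-shot application of the interpolation axiom \ref{dn-apprel}\eqref{dn-apprel-4}, and everything else is a direct unfolding of definitions.
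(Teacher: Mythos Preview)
Your proof is correct and follows essentially the same route as the paper's. The only cosmetic difference is that the paper phrases the forward and reverse steps by invoking Conditions~\ref{dn-apprel}\eqref{dn-apprel-2} and \ref{dn-apprel}\eqref{dn-apprel-4} for the composite relation $\tilde{\Delta} = \EEE(\CCC(\Delta))$ (using that $\EEE \circ \CCC$ is a functor), whereas you apply these conditions directly to $\Delta$; the underlying computations are identical.
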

\begin{proof}
Let $\UU, \UU'$ be CF-approximation spaces,  and $\cfrel{\Delta}{\UU}{\UU'}$ a CF-approximable relation from $\UU$ to $\UU'$. Set $\tilde {\Delta} = \EEE(\CCC(\Delta))$. Then, for $\{(\XF, F)\} \in \tilde{\FF}$ and $\{(\YF, G)\} \in \tilde{\FF'}$,
\begin{equation*}
\{(\XF, F)\} \tilde{\Delta} \{(\YF, G)\} \Leftrightarrow (\forall K \in \YF \cup \{G\}) (\exists L \in \XF \cup \{F\}) L \Delta K,
\end{equation*}

We have to show that $\delta_{\UU} \circ \tilde{\Delta} = \Delta \circ \delta_{\UU'}$. Let to this end $F \in \FF$ and $\{(\XF', F')\} \in \tilde{\FF'}$. Then we have
\begin{equation}\label{eq-rh}
F (\delta_{\UU} \circ \tilde{\Delta}) \{(\XF', F')\} \Leftrightarrow (\exists (\ZF, Z) \in \tilde{\FF}) F \Upsilon_{\UU} \{(\ZF, Z)\} \wedge \{(\ZF, Z)\} \tilde{\Delta} \{(\XF', F')\}.
\end{equation}
By definition 
\begin{equation*}
F \Upsilon_{\UU} \{(\ZF, Z)\} 
\Leftrightarrow (\forall K \in \ZF \cup \{Z\}) K \subseteq \bar{\Theta}(F)
\Leftrightarrow (\ZF, Z) \tilde{\Theta} (\{F\}, F) 
\Leftrightarrow \{(\ZF, Z)\} \subseteq \bar{\tilde{\Theta}}(\{(\{F\}, F)\}).
\end{equation*}
Thus, the right-hand side in \eqref{eq-rh} implies that 
\begin{equation*}
(\exists \{(\ZF, Z)\} \in \tilde{\FF}) \{(\ZF, Z)\} \subseteq \bar{\tilde{\Theta}}(\{(\{F\}, F)\}) \wedge \{(\ZF, Z)\} \tilde{\Delta} \{(\XF', F')\},
\end{equation*}
form which we obtain with \ref{dn-apprel}\eqref{dn-apprel-2} that $\{\{F\}, F)\} \tilde{\Delta} \{(\XF', F')\}$, which means that for all $K' \in \XF' \cup \{F'\}$, $F \Delta K'$. By \ref{dn-apprel}\eqref{dn-apprel-5} there is hence $M' \in \tilde{\FF'}$ such that $\bigcup \XF' \cup \{F'\} \subseteq \bar{\Theta'}(M')$ and $F \Delta M'$. The first property implies that for all $K' \in \XF' \cup \{F'\}$, $K' \subseteq \bar{\Theta'}(M')$, that is, $M' \Upsilon_{\UU'} \{(\XF', F')\}$. Thus, we have 
\begin{equation*}
(\exists M' \in \tilde{\FF'}) F \Delta M' \wedge M' \Upsilon_{\UU'} \{(\XF', F')\},
\end{equation*}
which shows that $F (\Delta \circ \delta_{\UU'}) \{(\XF', F')\}$.

Now, conversely, assume that $F (\Delta \circ \delta_{\UU'}) \{(\XF', F')\}$. then there is some $M' \in \tilde{\FF'}$ with $F \Delta M'$ and $M' \Upsilon_{\UU'} \{(\XF', F')\}$, where the latter means that for all $K' \in \XF' \cup \{F'\}$, $K' \subseteq \bar{\Theta'}(M')$. With \ref{dn-apprel}\eqref{dn-apprel-3} we therefore obtain that for all $K' \in \XF' \cup \{G'\}$, $F \Delta K'$. So, $\{(\{F\}, F)\} \tilde{\Delta} \{(\XF', F')\}$. Because of \ref{dn-apprel}\eqref{dn-apprel-4} there is some $\{(\ZF, Z)\} \in \tilde{\FF}$ with 
\begin{equation*}
\{(\ZF, Z)\} \tilde{\Delta} \{(\XF', F')\} \quad\text{and}\quad \{(\ZF, Z)\} \subseteq \bar{\tilde{\Theta}}(\{(\{F\}, F)\}).
\end{equation*}
The latter implies that for all $K \in \ZF \cup \{Z\}$, $K \subseteq \bar{\Theta}(F)$, which means that $F \Upsilon_{\UU} \{(\ZF, Z)\}$. Thus, we have that there is some $\{(\ZF, Z)\} \in \tilde{\FF}$ so that $F \delta_{\UU} \{(\ZF, Z)\}$ and $\{(\ZF, Z)\} \tilde{\Delta} \{(\XF', F')\}$. Hence, $F (\delta_{\UU} \circ \tilde{\Delta}) \{(\XF', F')\}$. 
\end{proof}

Let us summarize what we have just shown.
\begin{proposition} \label{pn-deltaiso}
$\fun{\delta}{\ID_{\CFA}}{\EEE \circ \CCC}$ is a natural isomophism. 
\end{proposition}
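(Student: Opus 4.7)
The plan is to observe that all the essential work has already been done in the preceding lemmas, and the proof of Proposition~\ref{pn-deltaiso} is essentially a bookkeeping step that assembles these pieces. First, Lemma~\ref{lem-natrdelt} establishes that $\delta$ is a natural transformation, so it only remains to verify that each component $\delta_{\UU} = \Upsilon_{\UU}$ is an isomorphism in the category $\CFA$.

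To this end I would invoke Lemma~\ref{lem-C} directly. Parts~\eqref{lem-C-1} and \eqref{lem-C-2} ensure that both $\Upsilon_{\UU}$ and $\Gamma_{\UU}$ are bona fide morphisms in $\CFA$, that is, CF-approximable relations of the appropriate types. Parts~\eqref{lem-C-3} and \eqref{lem-C-4} then state exactly that $\Upsilon_{\UU} \circ \Gamma_{\UU} = \Id_{\UU}$ and $\Gamma_{\UU} \circ \Upsilon_{\UU} = \Id_{\EEE(\CCC(\UU))}$, with composition taken in the sense of Section~\ref{sect-CFapp}. Hence $\Gamma_{\UU}$ is a two-sided inverse of $\delta_{\UU}$ in $\CFA$, so $\delta_{\UU}$ is an isomorphism.

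Combining these two observations, $\delta$ is a natural transformation every component of which is invertible, which is precisely the definition of a natural isomorphism. There is no genuine obstacle in the argument: the hard work was carried out in Lemma~\ref{lem-C} (where the five CF-approximability conditions had to be checked for each of $\Upsilon_{\UU}$ and $\Gamma_{\UU}$, using the interpolation property~\eqref{cn-CF} repeatedly) and in Lemma~\ref{lem-natrdelt} (where naturality with respect to an arbitrary CF-approximable $\Delta$ was verified using conditions~\ref{dn-apprel}\eqref{dn-apprel-2}--\eqref{dn-apprel-5}). The proof of Proposition~\ref{pn-deltaiso} itself is therefore a one-line appeal to these two results.
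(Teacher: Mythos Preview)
Your proposal is correct and matches the paper's approach exactly: the paper also treats Proposition~\ref{pn-deltaiso} as a summary of Lemmas~\ref{lem-C} and \ref{lem-natrdelt}, with the former supplying the two-sided inverse $\Gamma_{\UU}$ of each component $\delta_{\UU}=\Upsilon_{\UU}$ and the latter supplying naturality. There is nothing to add.
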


Finally, we show that there is also a natural isomorphism $\fun{\gamma}{\ID_{\INF}}{\CCC \circ \EEE}$. Let to this end $\bA$ be an information frame. Then 
\begin{equation*}
\CCC(\EEE(\bA)) =  (\tilde{A}, (\tilde{\con}_{F})_{F \in \tilde{A}}, (\Vvdash_{F})_{F \in \tilde{A}})
\end{equation*}
 with
\begin{align*}
&\tilde{A} = \set{\{(X, i)\}}{i \in A \wedge X \in \con_{i}}, \\
&\tilde{\con}_{\{(X,i)\}} = \{\{(X, i)\}\} \cup \powf{\set{\{(Y, j)\} \in \tilde{A}}{X \vdash_{i} \{j\} \cup Y}}, \\
&\XF \Vvdash_{\{(X,i)\}} \{(Y, j)\} \Leftrightarrow (\exists (Z, e) \in \XF \cup \{(X,i)\}) Z \vdash_{e} \{j\} \cup Y.
\end{align*}

For $a, i \in A$, $X \in \con_{i}$, $\{(Y, j)\} \in \tilde{A}$ and $\XF \in \tilde{\con}_{\{(X,i\}}$ define
\begin{align*}
&X Q^{\bA}_{i} \{(Y, j)\} \Leftrightarrow X \vdash_{i} \{j\}  \cup Y, \\
&\XF P^{\bA}_{\{(X, i)\}} a \Leftrightarrow (\exists (Z, c) \in \XF \cup \{(X, i)\}) Z \vdash_{c} a.
\end{align*}
and set $Q_{\bA} = (Q^{\bA}_{i})_{i \in A}$ and $P_{\bA} = (P^{\bA}_{\{(X, a)\}})_{\{(X, i)\} \in \tilde{A}}$.

\begin{lemma}\label{lem-propPQ}
\begin{enumerate}
\item\label{lem-propPQ-1}
 $\appmap{Q_{\bA}}{\bA}{\CCC(\EEE(\bA))}$ such that truth elements are respected whenever $\bA$ has one. 
 
\item\label{lem-propPQ-2}
 $\appmap{P_{\bA}}{\CCC(\EEE(\bA))}{\bA}$ such that truth elements are respected whenever $\bA$ has one. 
 
 \item\label{lem-propPQ-3}
 $Q_{\bA} \circ P_{\bA} = \Id_{\bA}$.
 
 \item\label{lem-propPQ-4}
$P_{\bA} \circ Q_{\bA} = \Id_{\CCC(\EEE(\bA))}$.
\end{enumerate}
\end{lemma}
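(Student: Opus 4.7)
The plan is to follow the pattern of Lemma~\ref{lem-IF}, which established the analogous inverse pair for the $\DDD$-$\FFF$ equivalence, adapted to the CF-approximation space construction. I verify the four parts in order, checking the five conditions of Definition~\ref{dn-am} for parts~\eqref{lem-propPQ-1} and \eqref{lem-propPQ-2}, and unfolding the composition definitions for parts~\eqref{lem-propPQ-3} and \eqref{lem-propPQ-4}.

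For parts~\eqref{lem-propPQ-1} and \eqref{lem-propPQ-2}, the workhorses are Lemma~\ref{lem-strong6} together with the interpolation axiom~\ref{dn-infofr}\eqref{dn-infofr-11}. For $Q_{\bA}$, the relation $X Q^{\bA}_{i} \{(Y, j)\}$ is literally $X \vdash_{i} \{j\} \cup Y$, so Conditions~\eqref{dn-am-1}--\eqref{dn-am-4} translate directly into entailment axioms of $\bA$, and Condition~\eqref{dn-am-5} is exactly the content of global interpolation applied inside $\bA$. For $P_{\bA}$, the structural fact to exploit is that every $(Z, c) \in \XF \cup \{(X, i)\}$ satisfies $X \vdash_{i} \{c\} \cup Z$ (immediate from the definition of $\tilde{\con}_{\{(X, i)\}}$), which via Lemma~\ref{lem-strong6} reduces each axiom check to an assertion in $\bA$. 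Truth-element preservation in both cases follows from Condition~\eqref{cn-T}: Theorems~\ref{thm-inf-cfa}\eqref{thm-inf-cfa-2} and \ref{thm-cfa-inf}\eqref{thm-cfa-inf-2} yield $\{(\emptyset, \bt)\}$ as truth element of $\CCC(\EEE(\bA))$, and both $\emptyset Q^{\bA}_{\bt} \{(\emptyset, \bt)\}$ and $\emptyset P^{\bA}_{\{(\emptyset, \bt)\}} \bt$ reduce to $\emptyset \vdash_{\bt} \bt$.

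For part~\eqref{lem-propPQ-3}, unfolding $X(Q_{\bA} \circ P_{\bA})_{i} a$ yields the data $X \vdash_{i} \{j\} \cup Y$, together with $X \vdash_{i} \{k\} \cup W$ for each $\{(W, k)\} \in \YF$, and some $(Z, c) \in \YF \cup \{(Y, j)\}$ with $Z \vdash_{c} a$. In either case for $(Z, c)$, one application of Lemma~\ref{lem-strong6} delivers $X \vdash_{i} a$. For the converse, given $X \vdash_{i} a$, I apply interpolation to obtain $e \in A$ and $V \in \con_{e}$ with $X \vdash_{i} \{e\} \cup V$ and $V \vdash_{e} a$, then take the witnessing intermediate to be $\{(V, e)\}$ with $\YF = \emptyset$. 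Part~\eqref{lem-propPQ-4} is symmetric: unfolding $\XF(P_{\bA} \circ Q_{\bA})_{\{(X, i)\}} \{(Y, j)\}$ and applying Lemma~\ref{lem-strong6} repeatedly yields $X \vdash_{i} \{j\} \cup Y$, so $(X, i)$ itself witnesses $\XF \Vvdash_{\{(X, i)\}} \{(Y, j)\}$; conversely, one extracts $(Z, c) \in \XF \cup \{(X, i)\}$ entailing $\{j\} \cup Y$ and splits via interpolation to construct the required intermediate $e$, $V$.

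The main obstacle is notational rather than conceptual: the convention $(Z, c) \in \XF \cup \{(X, i)\}$ conflates pairs with singleton sets, and the definition of $\tilde{\con}_{\{(X, i)\}}$ has a split form distinguishing the self-consistency element $\{\{(X, i)\}\}$ from the generic consistency closure, so each case analysis carries a subcase when $\XF = \{\{(X, i)\}\}$. Keeping these cases straight is the only real care required; all the logical content is already isolated by Lemma~\ref{lem-strong6} and the global interpolation axiom.
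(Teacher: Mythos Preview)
Your proposal is correct and follows essentially the same approach as the paper: verify Definition~\ref{dn-am}\eqref{dn-am-1}--\eqref{dn-am-5} directly for $Q_{\bA}$ and $P_{\bA}$, then unfold the compositions and use interpolation for the two identity equations. One small imprecision: your ``structural fact'' that every $(Z,c) \in \XF \cup \{(X,i)\}$ satisfies $X \vdash_{i} \{c\} \cup Z$ fails for the element $(X,i)$ itself unless $\bA$ is algebraic; the paper (and, implicitly, your later remark about case analysis) handles this by treating the witness $(Z,c) = (X,i)$ separately, where no reduction via Lemma~\ref{lem-strong6} is needed since the desired entailment is already at level $i$. Apart from this and the cosmetic choice $\YF = \emptyset$ versus the paper's $\YF = \{\{(V,e)\}\}$ in part~\eqref{lem-propPQ-3}, your argument is the paper's argument.
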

\begin{proof}
\eqref{lem-propPQ-1} We have to verify Conditions~\ref{dn-am}:

\ref{dn-am}\eqref{dn-am-1} Assume that $X Q^{\bA}_{i} (\{\{(Z, k)\}\} \cup \YF)$. Then we have for all $\{(Y, c)\} \in \YF$ that 
\begin{equation}\label{eq-propPQ1}
X Q^{\bA}_{i} \{(Y, c)\}, \quad\text{that is}\quad X \vdash_{i} (\{c\} \cup Y).
\end{equation}
Suppose in addition that $\YF \Vvdash_{\{(Z, k)\}} \{(V, e)\}$. Then there is some $\{(Y', c')\} \in \YF \cup \{(Z, k)\}$ with $Y' \vdash_{c'} \{e\} \cup V$. By \eqref{eq-propPQ1} it follows with \ref{dn-infofr}\eqref{dn-infofr-4} that $\{c'\} \in \con_{i}$ and  $Y \in \con_{i}$. Thus, $Y' \vdash_{i} \{e\} \cup V$. Since by \eqref{eq-propPQ1} also $X \vdash_{i} Y'$, it follows with \ref{dn-infofr}\eqref{dn-infofr-6} that $X \vdash_{i} \{e\} \cup V$, that is, $X Q^{\bA}_{i} \{(V, e)\}$.
 
\ref{dn-am}\eqref{dn-am-2}
Assume that $X Q^{\bA}_{i} \{(Y, b)\}$ and $X \subseteq X'$ with $X' \in \con_{i}$. Then we have that $X \vdash_{i} \{b\} \cup Y$; by weakening thus also $X' \vdash_{i} \{b\} \cup Y$, that is $X' Q^{\bA}_{i} \{(Y, b)\}$.

\ref{dn-am}\eqref{dn-am-3}
Suppose that $X \vdash_{i} X'$ and $X' Q^{\bA}_{i} \{(Y, b)\}$. Then we have that $X' \vdash_{i} \{b\} \cup Y$. It follows that  $X \vdash_{i}  \{b\} \cup Y$, that is, $X Q^{\bA}_{i} \{(Y, b)\}$.

\ref{dn-am}\eqref{dn-am-4}
Suppose that $\{i\} \in \con_{j}$ and $X Q^{\bA}_{i} \{(Y, b)\}$. Thus, $X \vdash_{i} \{b\} \cup Y$ and hence $X \vdash_{j} \{b\} \cup Y$, because of \ref{dn-infofr}\eqref{dn-infofr-9}. So, $X Q^{\bA}_{j }\{(Y, b)\}$.

\ref{dn-am}\eqref{dn-am-5}
Assume that $X Q^{\bA}_{i} \ZF$. Then we have that for all $\{(K, k)\} \in \ZF$ that $X Q^{\bA}_{i} \{(K, k)\}$, that is, $X \vdash_{i} \{k\} \cup K$. It follows that $X \vdash_{i} \bigcup \set{\{k\} \cup K}{\{(K, k)\} \in \ZF}$. By \ref{dn-infofr}\eqref{dn-infofr-11} there are thus $c, e \in A$, $U \in \con_{c}$ and $V \in \con_{e}$ so that $X \vdash_{i} \{c\} \cup U$, $U \vdash_{c} \{e\} \cup V$ and $V \vdash_{e} \bigcup \set{\{k\} \cup K}{\{(K, k)\} \in \ZF}$. Hence, we have that $X \vdash_{i} \{c\} \cup U$, $U Q^{\bA}_{c} \{(V, e)\}$ and $\{(V, e)\} \Vvdash_{\{(V,e)\}} \ZF$. Set $\VF = \{\{(V, e)\}\}$. Then $\VF \in \tilde{\con}_{\{(V, e)\}}$.

\ref{dn-am}\eqref{dn-am-6}
Assume that $\bA$ has a truth element $\bt$, Then $\CCC(\EEE(\bA))$ has truth element $\{(\emptyset, \bt)\}$, by Theorems~\ref{thm-inf-cfa}\eqref{thm-inf-cfa-2} and \ref{thm-cfa-inf}\eqref{thm-cfa-inf-2}. By Condition~\eqref{cn-T} we have that $\emptyset \vdash_{\bt} \{\bt\} \cup \emptyset$, that is $\emptyset Q^{\bA}_{\bt} \{(\emptyset, \bt)\}$.

\eqref{lem-propPQ-2} Again we have to verify Conditions~\ref{dn-am}:

\ref{dn-am}\eqref{dn-am-1}
Assume that $\XF P^{\bA}_{\{(X, a)\}} (\{k\} \cup Z)$ and $Z \vdash_{k} b$. Since $\XF \in \tilde{\con}_{\{(X, a)\}}$ we have that either $\XF = \{\{(X, a)\}\}$, or $\XF \neq \{\{(X, a)\}\}$ and for all $\{(V, c)\} \in \XF$, $X \vdash_{a} \{c\} \cup V$. 

If $\XF =  \{\{(X, a)\}\}$, it follows from the first assumption that $X \vdash_{a} \{k\} \cup Z$.  With \ref{dn-infofr}\eqref{dn-infofr-4} we then obtain in particular that $\{k\} \in \con_{a}$. Hence, it follows with the second assumption that  $Z \vdash_{a} b$. With the cut rule we finally have that $X \vdash_{a} b$, that is, $\XF P^{\bA}_{\{(X, a)\}} b$.

In the other case that $\XF \neq \{\{(X, a)\}\}$, we have that
\begin{equation}\label{eq-propPQ2}
(\forall \{(Y, d)\} \in \XF) X \vdash_{a} \{d\} \cup Y
\end{equation}
As $\XF P^{\bA}_{\{(X, a)\}} (\{k\} \cup Z)$, we moreover have that
\begin{equation}\label{eq-propPQ3}
(\forall e \in \{k\} \cup Z) (\exists \{(Y_{e}, d_{e})\} \in \XF \cup \{(X, a)\}) Y_{e} \vdash_{d_{e}} e,
\end{equation}
Set $I = \set{e \in \{k\} \cup Z}{(Y_{e}, d_{e}) = (X, a)}$ and $I' = (\{k\} \cup Z) \setminus I$. Then we have for all $e \in I'$ that $ X \vdash_{a} \{d_{e}\} \cup Y_{e}$. Because of \eqref{eq-propPQ2}, $\{d_{e}\} \in \con_{a}$. Hence, as a consequence of \eqref{eq-propPQ3},  $Y_{e} \vdash_{a} e$ and therefore $X \vdash_{a} e$. Thus, we have that $X \vdash_{a} I'$. Since for $e \in I$, $X \vdash_{a} e$, that is, $X \vdash_{a} I$. Overall we get that $X \vdash_{a} \{k\} \cup Z$. It follows that $\{k\} \in \con_{a}$. Thus, $Z \vdash_{a} b$ and hence $X \vdash_{a} b$, which shows that $\XF P^{\bA}_{\{(X, a)\}} b$.

\ref{dn-am}\eqref{dn-am-2}
Let $\XF, \XF' \in \tilde{\con}_{\{(X, a)\}}$ so that $\XF \subseteq \XF'$ and $\XF P^{\bA}_{\{(X, a)\}} b$. The latter assumption implies that there is some $\{(Z, k)\} \in \XF \cup \{\{(X, a)\}\}$ with $Z \vdash_{k} b$. As $\XF \subseteq \XF'$, $\{(Z, k)\} \in \XF' \cup \{\{(X, a)\}\}$ as well. So, also $\XF' P^{\bA}_{\{(X, a)\}} b$ holds.

\ref{dn-am}\eqref{dn-am-3}
Assume that $\XF \Vvdash_{\{(X, a\}} \XF'$ and $\XF' P^{\bA}_{\{(X, a)\}} b$. By the first assumption we have that for all $\{(Y, c)\} \in \XF'$ there is some $\{(Z_{\{(Y,c)\}}, k_{\{(Y, c)\}})\} \in \XF \cup \{\{(X, a)\}\}$ with $Z_{\{(Y, c)\}} \vdash_{k_{\{(Y, c)\}}} \{c\} \cup Y$. With the second assumption it follows that there is some $\{(\bar{Y}, \bar{c})\} \in \XF' \cup \{\{(X, a)\}\}$ with $\bar{Y} \vdash_{\bar{c}} b$. As $Z_{\{(\bar{Y}, \bar{c})\}} \vdash_{k_{\{(\bar{Y}, \bar{c})\}}} \{\bar{c}\} \cup \bar{Y}$, we have that $\{\bar{c}\} \in \con_{k_{\{(\bar{Y}, \bar{c})\}}}$. Therefore, $\bar{Y} \vdash_{k_{\{(\bar{Y}, \bar{c})\}}} b$. Since moreover, $Z_{\{(\bar{Y}, \bar{c})\}} \vdash_{k_{\{(\bar{Y}, \bar{c})\}}} \bar{Y}$, we have that $Z_{\{(\bar{Y}, \bar{c})\}} \vdash_{k_{\{(\bar{Y}, \bar{c})\}}} b$, that is, $\XF P^{\bA}_{\{(X, a)\}} b$.

\ref{dn-am}\eqref{dn-am-4}
Let $\{(Y, c)\} \in \tilde{\con}_{\{(X, a)\}}$ and $\XF P^{\bA}_{\{(Y, c)\}} b$.  Then $\tilde{\con}_{\{(Y,c)\}} \subseteq \tilde{\con}_{\{(X, a)\}}$ and therefore $\XF \in \tilde{\con}_{\{(X, a)\}}$. Moreover, there is some $\{(Z, k)\} \in \XF \cup \{\{(Y, c)\}\}$ with $Z \vdash_{k} b$. Without restriction let $(Y, c) \neq (X, a)$. Since $\{(Y, c)\} \in \tilde{\con}_{\{(X, a)\}}$, we have that $X \vdash_{a} \{c\} \cup Y$. If $(Z, k) = (Y, c)$, it follows that $X \vdash_{a} \{k\} \cup Z$. Hence, $\{k\} \in \con_{a}$ and thus $Z \in \con_{a}$ and $Z \vdash_{a} b$. Consequently, $X \vdash_{a} b$.
In case that $(Z, k) \neq (Y, c)$ we have that $\{(Z, k)\} \in \XF$. Thus, in both cases there is some $\{(Z', k')\} \in \XF \cup \{\{(X, a)\}\}$ with $Z' \vdash_{k'} b$. So, we have that $\XF P^{\bA}_{\{(X, a)\}} b$.

\ref{dn-am}\eqref{dn-am-5}
Assume that $\XF P^{\bA}_{\{(X, a)\}} K$. Then, for each $k \in K$, there is some $\{(Z_{k}, c_{k})\} \in \XF \cup \{\{(X, a)\}\}$ such that $Z_{k} \vdash_{c_{k}} k$. If $(Z_{k}, c_{k}) = (X, a)$ then $X \vdash_{a} k$. In case that $(Z_{k}, c_{k}) \neq (X, a)$, then $\{(Z_{k}, c_{k})\} \in \XF$ and thus $X \vdash_{a} \{c_{k}\} \cup Z_{k}$. It follows again that $\{c_{k}\} \in \con_{a}$  and therefore $Z_{k} \vdash_{a} k$. So, also in this case we have that $X \vdash_{a} k$. Hence, $X \vdash_{a} K$. By \ref{dn-am}\eqref{dn-am-5} it now ensures that there are $b, e \in A$, $Y \in \con_{b}$ and $U \in  \con_{e}$ such that $X \vdash_{a} \{b\} \cup Y$, $Y \vdash_{b} \{e\} \cup U$ and $U \vdash_{e} K$. From $X \vdash_{a} \{b\} \cup Y$ we obtain that $\{\{(X, a)\}\} \Vvdash_{\{(X, a)\}} \{(Y, b)\}$. Set $\YF = \{\{(Y, b)\}\}$. Then $\XF \Vvdash_{\{(X, a)\}} (\{\{(Y, b)\}\} \cup \YF)$. Because of $Y \vdash_{b} \{e\} \cup U$, we have that $\YF P^{\bA}_{\{(Y, b)\}} (\{e\} \cup U)$ and, as seen, we have $U \vdash_{e} K$.

\ref{dn-am}\eqref{dn-am-6}
follows as in the proof of Statement~\eqref{lem-propPQ-1}.

\eqref{lem-propPQ-3}
Let $a, b \in A$ and $X \in \con_{a}$. Then we have
\begin{align*}
X (Q_{\bA} \circ &P_{\bA})_{a} b \\
\Leftrightarrow\mbox{} &(\exists \{(Y, c)\} \in \tilde{A}) (\exists \YF \in \tilde{\con}_{\{(Y, c)\}}) X Q^{\bA}_{a} (\{\{(Y, c)\}\} \cup \YF) \wedge \YF P^{\bA}_{\{(Y, c)\}} b \\
\Leftrightarrow\mbox{}  &(\exists \{(Y, c)\} \in \tilde{A}) (\exists \YF \in \tilde{\con}_{\{(Y, c)\}}) X \vdash_{a} \bigcup \set{
\{d\} \cup Z}{\{(Z, d)\} \in (\{\{(Y, c)\}\} \cup \YF)} \wedge \mbox{} \\ 
&(\exists \{(V, e)\} \in \{\{(Y, c)\}\} \cup \YF) V \vdash_{e} b. 
\end{align*}
Since $\{(V, e)\} \in \{\{(Y, c)\}\} \cup \YF$, we  have in particular that $X \vdash_{a} \{e\} \cup V$, which implies that $\{e\} \in \con_{a}$. Thus, $V \vdash_{a} b$. So, the right-hand side of the last equivalence above implies that $X \vdash_{a} b$, that is $X \Id^{\bA}_{a} b$.

If, conversely, $X \Id^{\bA}_{a} b$, then with Lemma~\ref{pn-amint}\eqref{pn-amint-2} we obtain that there are $e \in A$ and $V \in \con_{e}$ so that $X \vdash_{a} \{e\} \cup V$ and $V \vdash_{e} b$. Set $\YF = \{\{(V, e)\}\}$. Then $\YF \in \tilde{\con}_{\{(V, e)\}}$ and $\{(V, e)\} \in \tilde{A}$ so that $\YF P^{\bA}_{\{(V, e)\}} b$ and $X Q^{\bA}_{a} \{(V, e)\}$. The latter implies in particular that $X Q^{\bA}_{a} (\{\{(V, e\}\} \cup \YF)$. This shows that $X (Q_{\bA} \circ P_{\bA})_{a} b$.

\eqref{lem-propPQ-4}
Let $\{(X, a)\}, \{(Y, b)\} \in \tilde{A}$ and $\XF \in \tilde{\con}_{\{(X, a)\}}$. Then
\begin{align*}
\XF (P_{\bA} &\circ Q_{\bA})_{\{(X, a)\}} \{(Y, b)\}  \\
\Leftrightarrow\mbox{} &(\exists e \in A) (\exists Z \in \con_{e}) \XF P^{\bA}_{\{(X, a)\}} (\{e\} \cup Z) \wedge Z Q^{\bA}_{e} \{(Y, b)\}  \\
\Leftrightarrow\mbox{} &(\exists e \in A) (\exists Z \in \con_{e}) (\forall d \in \{e\} \cup Z) (\exists \{(V_{d}, c_{d})\} \in \{\{(X, a)\}\} \cup \XF) V_{d} \vdash_{c_{d}} d \wedge \mbox{} \\
& Z \vdash_{e} \{b\} \cup Y.
\end{align*}
Let $I = \set{d \in \{e\} \cup Z}{(V_{d}, c_{d}) = (X, a)}$ and $I' = (\{e\} \cup Z) \setminus I$. Then we have for all $d \in I$ that $X \vdash_{a} d$. Moreover, we have for all $d \in I'$ that $\{(V_{d}, c_{d})\} \in \XF$, from which we obtain by the definition of $\tilde{\con}_{\{(X, a)\}}$ that $X \vdash_{a} \{c_{d}\} \cup V_{d}$. Therefore, $\{c_{d}\} \in \con_{a}$. Hence, $V_{d} \vdash_{a} d$ and again $X \vdash_{a} d$. So, we have in both cases that $X \vdash_{a} \{e\} \cup Z$, which implies that $\{e\} \in \con_{a}$. Thus, $Z \vdash_{a} \{b\} \cup Y$. Finally, it follows that $X \vdash_{a} \{b\} \cup Y$, that is, we have that $\XF \Vvdash_{\{(X, a\}} \{(Y, b)\}$, which shows that $\XF \Id_{\CCC(\EEE(\bA))} \{(Y, b)\}$. 

Conversely, if $\XF \Vvdash_{\{(X, a\}} \{(Y, b)\}$, then there exist $\{(Z, e)\} \in \{\{(X, a)\}\} \cup \XF$ with $Z \vdash_{e} \{b\} \cup Y$. Hence, by \ref{dn-infofr}\eqref{dn-infofr-11}, there are $k \in A$ and $V \in \con_{k}$ so that $Z \vdash_{e} \{k\} \cup V$ and $V \vdash_{k} \{b\} \cup Y$. It follows that $\XF P^{\bA}_{\{(X, a\}} (\{k\} \cup V)$ and $V Q^{\bA}_{k} \{(Y, b)\}$, that is $\XF (P_{\bA} \circ Q_{\bA})_{\{(X, a)\}} \{(Y, b)\}$.
 \end{proof}

Set $\gamma_{\bA} = Q_{\bA}$.

\begin{lemma}\label{lem-natrgam}
$\fun{\gamma}{\ID_{\INF}}{\CCC \circ \EEE}$ is a natural transformation.
\end{lemma}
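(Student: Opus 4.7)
The plan is to verify the naturality equation
\[
Q_{\bA} \circ \CCC(\EEE(H)) = H \circ Q_{\bA'}
\]
for every approximable mapping $\appmap{H}{\bA}{\bA'}$; the fact that $\gamma_{\bA} = Q_{\bA}$ is itself a morphism was already established in Lemma~\ref{lem-propPQ}. The first step is to unfold $\CCC(\EEE(H)) = H_{\Delta_{H}}$: for $\{(V, c)\} \in \tilde{A}$, $\XF \in \tilde{\con}_{\{(V, c)\}}$, and $\{(Y, j)\} \in \tilde{A'}$,
\[
\XF \, \CCC(\EEE(H))_{\{(V, c)\}} \{(Y, j)\} \Leftrightarrow (\exists \{(Z, d)\} \in \XF \cup \{\{(V, c)\}\}) \, Z H_{d} (\{j\} \cup Y).
\]
Both sides of the naturality equation are then relations of the form $X R_{i} \{(Y, j)\}$, and the claim splits into two inclusions.

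For $H \circ Q_{\bA'} \subseteq Q_{\bA} \circ \CCC(\EEE(H))$, I would start from the witnesses $e \in A'$, $V \in \con'_{e}$ with $X H_{i} (\{e\} \cup V)$ and $V \vdash'_{e} (\{j\} \cup Y)$. Applying Lemma~\ref{pn-amint}\eqref{pn-amint-1} to $X H_{i} (\{e\} \cup V)$ yields $c \in A$ and $U \in \con_{c}$ with $X \vdash_{i} \{c\} \cup U$ and $U H_{c} (\{e\} \cup V)$; combining the latter with $V \vdash'_{e} (\{j\} \cup Y)$ via Axiom~\ref{dn-am}\eqref{dn-am-1}, applied separately to each element of $\{j\} \cup Y$, produces $U H_{c} (\{j\} \cup Y)$. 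The desired witnesses on the right-hand side are then $\{(U, c)\} \in \tilde{A}$ and $\XF = \{\{(U, c)\}\} \in \tilde{\con}_{\{(U, c)\}}$ (lying in the first component of the union defining $\tilde{\con}$): the clause $X Q^{\bA}_{i}(\{\{(U, c)\}\} \cup \XF)$ reduces to $X \vdash_{i} \{c\} \cup U$, and $\XF \, \CCC(\EEE(H))_{\{(U, c)\}} \{(Y, j)\}$ is witnessed by $\{(U, c)\}$ itself.

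For the reverse inclusion, assume $X Q^{\bA}_{i}(\{\{(V, c)\}\} \cup \XF)$ and $\XF \, \CCC(\EEE(H))_{\{(V, c)\}} \{(Y, j)\}$, so some $\{(Z, d)\} \in \XF \cup \{\{(V, c)\}\}$ satisfies $Z H_{d}(\{j\} \cup Y)$. Uniformly in both cases (whether $\{(Z, d)\} = \{(V, c)\}$ or $\{(Z, d)\} \in \XF$), the first hypothesis forces $X \vdash_{i} \{d\} \cup Z$. Applying Lemma~\ref{lem-amstrong3} to each $y \in \{j\} \cup Y$ then gives $X H_{i} (\{j\} \cup Y)$. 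A final application of Lemma~\ref{pn-amint}\eqref{pn-amint-2} produces $e' \in A'$ and $V' \in \con'_{e'}$ with $X H_{i}(\{e'\} \cup V')$ and $V' \vdash'_{e'} (\{j\} \cup Y)$, which is precisely $V' Q^{\bA'}_{e'} \{(Y, j)\}$ and so yields $X (H \circ Q_{\bA'})_{i} \{(Y, j)\}$.

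The main obstacle is clerical rather than conceptual: tokens of $\CCC(\EEE(\bA))$ are themselves singletons $\{(X, i)\}$, so everything happens at two notational levels at once, and one must unfold $\tilde{\con}$ carefully to see that the two a priori distinct cases in the reverse direction collapse into a single line of reasoning. The mathematical substance is the interplay between the domain- and range-halves of interpolation (Lemma~\ref{pn-amint}) and the strong Cut rule of Lemma~\ref{lem-amstrong3}: these are precisely what allows one $H$-step at the level of $\bA, \bA'$ to be factored into the two-stage $\CCC(\EEE(\bA))$-picture and collapsed back again.
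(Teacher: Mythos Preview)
Your proof is correct and follows essentially the same route as the paper: both directions hinge on unfolding the composite $\CCC(\EEE(H))$, reducing to $X H_{i}(\{j\}\cup Y)$ via the strong cut of Lemma~\ref{lem-amstrong3} (the paper spells this out using Axioms~\ref{dn-am}\eqref{dn-am-3} and \eqref{dn-am-4} directly), and then invoking the two halves of Lemma~\ref{pn-amint} to manufacture the required intermediate witnesses. The only cosmetic difference is the order in which you apply interpolation and Axiom~\ref{dn-am}\eqref{dn-am-1} in the $H\circ Q_{\bA'}\subseteq Q_{\bA}\circ\CCC(\EEE(H))$ direction.
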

\begin{proof}
Let $\bA$, $\bA'$ be information frames and $\appmap{H}{\bA}{\bA'}$ an approximable mapping.  Set $\tilde{H} = \CCC(\EEE(H))$. Then for $\{(X, a)\} \in \tilde{A}$, $\XF \in \tilde{\con}_{\{(X, a)\}}$ and $\{(Y, b)\} \in \tilde{A'}$,
\begin{equation*}
\XF \tilde{H}_{\{(X, a)\}} \{(Y, b)\} \Leftrightarrow (\exists \{(Z, d)\} \in \{\{(X, a)\}\} \cup \XF) Z H_{d} (\{b\} \cup Y).
\end{equation*}
We have to show that $\gamma_{\bA} \circ \tilde{H} = H \circ \gamma_{\bA'}$, that is, $Q_{\bA} \circ \tilde{H} = H \circ Q_{\bA'}$. Let to this end $a \in A$, $X \in \con_{a}$ and $\{(Y, b)\} \in \tilde{A'}$. Then
\begin{equation}\label{eq-PQgam}
\begin{aligned}
X (Q&_{\bA} \circ \tilde{H})_{a} \{(Y, b)\} \\
\Leftrightarrow\mbox{} &(\exists \{(V, k)\} \in \tilde{A}) (\exists \VF \in \tilde{\con}_{\{(V, k)\}}) X Q^{\bA}_{a} (\{\{(V, k)\}\} \cup \VF) \wedge \VF \tilde{H}_{\{(V, k)\}} \{(Y, b)\} \\
\Leftrightarrow\mbox{}  &(\exists \{(V, k)\} \in \tilde{A}) (\exists \VF \in \tilde{\con}_{\{(V, k)\}})  X \vdash_{a} \bigcup\set{\{c\} \cup E}{\{(E, c)\} \in \{\{(V, k)\}\} \cup \VF} \wedge \mbox{} \\
&(\exists \{(Z, d)\} \in \{\{(V, k)\}\} \cup \VF) Z H_{d} (\{b\} \cup Y) .
\end{aligned}
\end{equation}
It follows in particular that $X \vdash_{a} \{d\} \cup Z$, from which we obtain that $\{d\} \in \con_{a}$. So, $Z H_{a} (\{b\} \cup Y)$. Moreover, $X \vdash_{a} Z$ and therefore $X H_{a} (\{b\} \cup Y)$, by \ref{dn-am}\eqref{dn-am-3}. By  Lemma~\ref{pn-amint}\eqref{pn-amint-2} there are $e \in A'$ and $W \in \con'_{e}$ with $X H_{a} (\{e\} \cup W)$ and $W \vdash'_{e} \{b\} \cup Y$, which means we have that $X H_{a} (\{e\} \cup W)$ and $W Q^{\bA'}_{e} \{(Y, b)\}$, That is, $X (H \circ Q_{\bA'})_{a} \{(Y, b)\}$.

Conversely, if $X (H \circ Q_{\bA'})_{a} \{(Y, b)\}$, then we have for some $e \in A'$ and $W \in \con'_{e}$ that $X H_{a} (\{e\} \cup W)$ and $W Q^{\bA'}_{e} \{(Y, b)\}$, and thus that $X H_{a} (\{e\} \cup W)$ and $W \vdash'_{e} \{b\} \cup Y$, With \ref{dn-am}\eqref{dn-am-1} it follows that $ X H_{a} (\{b\} \cup Y)$. By applying Lemma~\ref{pn-amint}\eqref{pn-amint-2} we obtain that there is some $d\in A$ and $Z \in \con_{d}$ with $X \vdash_{a} \{d\} \cup Z$ and $Z H_{d} (\{b\} \cup Y)$. Set $\VF = \{\{(Z, d)\}\}$. Then  $\VF \in \tilde{\con}_{\{(Z, d)\}}$ and the condition in the last  right-hand side of \eqref{eq-PQgam} is satisfied. So, we have that $X (Q_{\bA} \circ \tilde{H})_{a} \{(Y, b)\}$.
\end{proof}

In summary we obtained the following result.

\begin{proposition}\label{pn-gamiso}
$\fun{\gamma}{\ID_{\INF}}{\CCC \circ \EEE}$ is a natural isomorphism.
\end{proposition}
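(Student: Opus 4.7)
The plan is to assemble this from the two preceding results: Lemma~\ref{lem-natrgam} already establishes that $\gamma$ is a natural transformation, so all that is left is to show that each component $\gamma_{\bA} = Q_{\bA}$ is an isomorphism in the category $\INF$.

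First I would recall that a morphism in $\INF$ is an approximable mapping and that, by Lemma~\ref{lem-amprop}\eqref{lem-amprop-1} and \eqref{lem-amprop-4}, $\Id_{\bA} = (\vdash_{i})_{i \in A}$ is the identity morphism on $\bA$. Thus, to say that $\gamma_{\bA}$ is an isomorphism is to produce an approximable mapping $\appmap{K}{\CCC(\EEE(\bA))}{\bA}$ satisfying $\gamma_{\bA} \circ K = \Id_{\bA}$ and $K \circ \gamma_{\bA} = \Id_{\CCC(\EEE(\bA))}$.

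The candidate inverse is readily at hand: take $K = P_{\bA}$. By Lemma~\ref{lem-propPQ}\eqref{lem-propPQ-2}, $\appmap{P_{\bA}}{\CCC(\EEE(\bA))}{\bA}$ is an approximable mapping, and by Lemma~\ref{lem-propPQ}\eqref{lem-propPQ-3} and \eqref{lem-propPQ-4} we have $Q_{\bA} \circ P_{\bA} = \Id_{\bA}$ and $P_{\bA} \circ Q_{\bA} = \Id_{\CCC(\EEE(\bA))}$. Together with Lemma~\ref{lem-propPQ}\eqref{lem-propPQ-1}, which gives that $\gamma_{\bA} = Q_{\bA}$ is itself an approximable mapping, this shows that $\gamma_{\bA}$ is an isomorphism in $\INF$ with inverse $P_{\bA}$.

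Combining with the naturality already proved in Lemma~\ref{lem-natrgam}, we conclude that $\gamma$ is a natural transformation whose every component is an isomorphism, i.e.\ a natural isomorphism. Since all the work has been done in the preceding lemmas, there is essentially no obstacle here; the proof is just the observation that a componentwise invertible natural transformation is a natural isomorphism. Hence the proof reduces to a single line: \emph{by Lemmas~\ref{lem-propPQ} and \ref{lem-natrgam}, $\gamma$ is a natural transformation whose components are isomorphisms with inverses $P_{\bA}$, and therefore is a natural isomorphism.}
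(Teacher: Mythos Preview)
Your proposal is correct and matches the paper's approach exactly: the paper states Proposition~\ref{pn-gamiso} as a summary of Lemmas~\ref{lem-propPQ} and~\ref{lem-natrgam} with no further argument, and you have simply spelled out why those two lemmas combine to give a natural isomorphism.
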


As consequence of Propositions~\ref{pn-deltaiso} and \ref{pn-gamiso} we now obtain the second central result of this paper.

\begin{theorem}\label{thm-cfainf}
The category $\CFA$ of CF-approximation spaces and CF-approximable relations is equivalent to the category $\INF$ of information frames and approximable mappings.
\end{theorem}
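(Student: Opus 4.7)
The plan is to assemble the equivalence from the machinery already built in Section~\ref{sect-CFapp}, so the proof will be short and essentially a bookkeeping argument. By Lemma~\ref{lem-funcC} we have a functor $\fun{\CCC}{\CFA}{\INF}$ (landing, in fact, in $\sINF$), and by Lemma~\ref{lem-funcIC} we have a functor $\fun{\EEE}{\INF}{\CFA}$. These two functors will play the roles of the equivalence pair.

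Next I would invoke the two natural isomorphisms already established: Proposition~\ref{pn-deltaiso} gives a natural isomorphism $\fun{\delta}{\ID_{\CFA}}{\EEE \circ \CCC}$, whose components $\delta_{\UU} = \Upsilon_{\UU}$ are shown invertible in Lemma~\ref{lem-C} with inverse $\Gamma_{\UU}$ (parts \eqref{lem-C-3} and \eqref{lem-C-4}). Proposition~\ref{pn-gamiso} gives a natural isomorphism $\fun{\gamma}{\ID_{\INF}}{\CCC \circ \EEE}$, whose components $\gamma_{\bA} = Q_{\bA}$ are invertible by Lemma~\ref{lem-propPQ} with inverse $P_{\bA}$ (parts \eqref{lem-propPQ-3} and \eqref{lem-propPQ-4}). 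Applying the standard characterization of an equivalence of categories — namely, that a pair of functors $F : \mathbf{A} \to \mathbf{B}$ and $G : \mathbf{B} \to \mathbf{A}$ constitutes an equivalence precisely when there exist natural isomorphisms $\ID_{\mathbf{A}} \cong GF$ and $\ID_{\mathbf{B}} \cong FG$ — the conclusion follows immediately.

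In practice, then, the proof is just the single sentence: by Propositions~\ref{pn-deltaiso} and~\ref{pn-gamiso}, the functors $\CCC$ and $\EEE$ together with the natural isomorphisms $\delta$ and $\gamma$ exhibit an equivalence between $\CFA$ and $\INF$. There is no real obstacle at this stage, since every verification has already been done upstream: functoriality in Lemmas~\ref{lem-funcC} and \ref{lem-funcIC}, the morphism conditions for $\Upsilon_{\UU}$, $\Gamma_{\UU}$, $Q_{\bA}$, $P_{\bA}$ in Lemmas~\ref{lem-C}\eqref{lem-C-1}\eqref{lem-C-2} and \ref{lem-propPQ}\eqref{lem-propPQ-1}\eqref{lem-propPQ-2}, the mutual inverse relations in Lemmas~\ref{lem-C}\eqref{lem-C-3}\eqref{lem-C-4} and \ref{lem-propPQ}\eqref{lem-propPQ-3}\eqref{lem-propPQ-4}, and naturality in Lemmas~\ref{lem-natrdelt} and \ref{lem-natrgam}.

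As a forward-looking remark, I would also note that analogously to Corollary~\ref{cor-eqIFDOM}, one may restrict $\CCC$ and $\EEE$ to the subcategories $\tCFA$, $\CFAM$, $\tCFAM$ on one side and $\asINF$, $\sINF_{\bt}$, $\asINF_{\bt}$ on the other (using the inclusions in Lemmas~\ref{lem-funcC} and \ref{lem-funcIC}, together with the truth-element-preservation established in Lemmas~\ref{lem-C} and \ref{lem-propPQ}), obtaining corresponding subcategory equivalences which, combined with Theorem~\ref{thm-eqIFDOM} and Corollary~\ref{cor-eqIFDOM}, recover Wu and Xu's representation theorem.
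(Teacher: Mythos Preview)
Your proposal is correct and follows essentially the same approach as the paper: the theorem is stated immediately after Propositions~\ref{pn-deltaiso} and~\ref{pn-gamiso} as their direct consequence, exactly as you argue. Your write-up is in fact more explicit than the paper's, which simply records the theorem as following from those two propositions without further comment.
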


Because of Lemmas~\ref{lem-funcC} and \ref{lem-funcIC} we obtain further equivalence results between important subcategories of $\CFA$ und $\INF$, respectively.

\begin{corollary}\label{cor-subcfainf}
The following categories are equivalent as well:
\begin{enumerate}
\item
The category $\CFA$ of CF-approximation spaces and CF-approximable relations and the full subcategory $\sINF$ of strong information frames.

\item
The full subcategory $\tCFA$ of topological CF-approximation spaces and the full subcategories $\aINF$ and $\asINF$ of algebraic information frames and algebraic strong information frames,  respectively.

\item The full subcategory $\CFAM$ of CF approximation spaces with property~\eqref{cn-M} and the subcategories $\INF_{\bt}$ and $\sINF_{\bt}$ of information frames and strong information frames, which respectively have truth elements and approximable mappings that respect truth elements.

\item
The full subcategory $\tCFAM$ of topological CF-approximation spaces with Property~\eqref{cn-M} and the subcategories $\aINF_{\bt}$ and $\asINF_{\bt}$ of algebraic information frames and algebraic strong information frames, which respectively have truth elements and approximable maps respecting truth elements.

\end{enumerate}
\end{corollary}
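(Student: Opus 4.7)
My plan is to derive this corollary directly from Theorem~\ref{thm-cfainf} by verifying that the functors $\CCC, \EEE$ and the natural isomorphisms $\gamma, \delta$ constructed in the section restrict correctly to each listed pair of subcategories. The work is largely bookkeeping on top of the main equivalence.

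First I would record the object-level inclusions supplied by Lemmas~\ref{lem-funcC} and \ref{lem-funcIC}: the functor $\CCC$ sends $\CFA, \tCFA, \CFAM, \tCFAM$ into $\sINF, \asINF, \sINF_{\bt}, \asINF_{\bt}$ respectively, while $\EEE$ sends $\INF_{\bt}, \aINF, \aINF_{\bt}$ into $\CFAM, \tCFA, \tCFAM$ respectively. In each of the four items of the corollary, the image of $\CCC$ always lands in a strong subcategory; via the full inclusion $\sINF \hookrightarrow \INF$ (and the algebraic/pointed variants) the equivalence can then be stated with either the strong or the non-strong version on the $\INF$-side, which is exactly the pattern of the four statements.

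Second, I would verify that morphisms transport correctly through the truth-element subcategories. Proposition~\ref{pn-ar-am} shows that if $\cfrel{\Delta}{\UU}{\UU'}$ and both $\UU, \UU'$ satisfy Condition~\eqref{cn-M}, then $\CCC(\Delta) = H_{\Delta}$ respects truth elements; conversely, Proposition~\ref{pn-am-ar} gives the analogous clause for $\EEE$ applied to truth-respecting approximable mappings. Third, I would check that the natural isomorphisms restrict. By Lemma~\ref{lem-propPQ}(\ref{lem-propPQ-1}),(\ref{lem-propPQ-2}), $\gamma_{\bA} = Q_{\bA}$ and its inverse $P_{\bA}$ respect truth elements whenever $\bA$ has one, so $\gamma$ restricts to a natural isomorphism on each of $\INF_{\bt}, \sINF_{\bt}, \aINF_{\bt}, \asINF_{\bt}$. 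On the $\CFA$-side, the morphism structure of $\CFA, \tCFA, \CFAM, \tCFAM$ imposes no truth-element constraint, so $\delta_{\UU} = \Upsilon_{\UU}$ automatically restricts without further verification.

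Combining these observations, each of the four items follows by the same argument as Theorem~\ref{thm-cfainf}, with $\CCC, \EEE, \gamma, \delta$ restricted to the subcategories identified in Lemmas~\ref{lem-funcC} and \ref{lem-funcIC}. The only obstacle is keeping straight the asymmetry between the two sides: on the $\CFA$-side the subcategories $\tCFA, \CFAM, \tCFAM$ are full, whereas $\INF_{\bt}, \sINF_{\bt}, \aINF_{\bt}, \asINF_{\bt}$ also restrict morphisms to those respecting truth elements. This asymmetry is already absorbed by the truth-element-preservation clauses in Proposition~\ref{pn-ar-am}, Proposition~\ref{pn-am-ar}, and Lemma~\ref{lem-propPQ}(\ref{lem-propPQ-1}), so no essentially new argument is needed.
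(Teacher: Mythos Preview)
Your proposal is correct and follows the same approach as the paper, which simply cites Lemmas~\ref{lem-funcC} and \ref{lem-funcIC} immediately before stating the corollary and gives no further proof. You have merely spelled out the bookkeeping (restriction of $\CCC$, $\EEE$, $\gamma$, $\delta$ and the truth-element-respecting clauses from Propositions~\ref{pn-ar-am}, \ref{pn-am-ar} and Lemma~\ref{lem-propPQ}) that the paper leaves implicit.
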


With Theorems~\ref{thm-eqIFDOM} and \ref{thm-cfainf} also Wu and Xu's central result follows.

\begin{corollary}\label{cor-wx}
The categories $\CFA$ of CA-approximation spaces and CF-approximable relations and the category $\DOM$ of domains and Scott continuous functions are equivalent.
\end{corollary}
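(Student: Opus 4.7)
The plan is to observe that equivalence of categories is transitive and then to make this explicit by composing the two equivalences already established. From Theorem~\ref{thm-eqIFDOM} we have functors $\fun{\DDD}{\INF}{\DOM}$ and $\fun{\FFF}{\DOM}{\INF}$ together with natural isomorphisms $\fun{\eta}{\ID_{\INF}}{\FFF \circ \DDD}$ and $\fun{\tau}{\ID_{\DOM}}{\DDD \circ \FFF}$, and from Theorem~\ref{thm-cfainf} we have functors $\fun{\CCC}{\CFA}{\INF}$ and $\fun{\EEE}{\INF}{\CFA}$ together with natural isomorphisms $\fun{\delta}{\ID_{\CFA}}{\EEE \circ \CCC}$ and $\fun{\gamma}{\ID_{\INF}}{\CCC \circ \EEE}$. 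The composite functors $\DDD \circ \CCC : \CFA \to \DOM$ and $\FFF \circ \EEE : \DOM \to \CFA$ are the obvious candidates for the desired equivalence.

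First I would verify that $\DDD \circ \CCC$ and $\FFF \circ \EEE$ are indeed functors; this is automatic since composition of functors is a functor. Next, I would construct a natural isomorphism $\ID_{\DOM} \to (\DDD \circ \CCC) \circ (\FFF \circ \EEE)$ by pasting the known natural isomorphisms. Concretely, for a domain $\DD$ one has
\[
(\DDD \circ \CCC) \circ (\FFF \circ \EEE)(\DD) = \DDD(\CCC(\EEE(\FFF(\DD)))).
\]
The natural isomorphism $\gamma^{-1}_{\FFF(\DD)} : \CCC(\EEE(\FFF(\DD))) \to \FFF(\DD)$ in $\INF$, whiskered by $\DDD$, gives an isomorphism to $\DDD(\FFF(\DD))$, and then $\tau^{-1}_{\DD}$ supplies an isomorphism back to $\DD$. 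Similarly, for a CF-approximation space $\UU$ I would paste $\eta^{-1}_{\EEE(\UU)}$ (whiskered by $\EEE$) with $\delta^{-1}_{\UU}$ to obtain a natural isomorphism $(\FFF \circ \EEE) \circ (\DDD \circ \CCC) \to \ID_{\CFA}$.

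The only mildly non-routine issue is checking that these pasted families are genuinely natural, but this follows immediately from the naturality of the four given isomorphisms and the interchange law for horizontal and vertical composition of natural transformations; there is no real obstacle here. In short, the corollary is a formal consequence of the two preceding equivalence theorems, and the proof reduces to citing them and invoking transitivity of equivalence of categories.
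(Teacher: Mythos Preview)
Your approach is correct and is exactly what the paper does: it simply cites Theorems~\ref{thm-eqIFDOM} and~\ref{thm-cfainf} and invokes transitivity of equivalence of categories, without spelling out the composite functors or pasted natural isomorphisms. One small notational slip: the composite going $\DOM \to \CFA$ should be written $\EEE \circ \FFF$ rather than $\FFF \circ \EEE$ under the paper's (standard) convention, though your expanded expression $\DDD(\CCC(\EEE(\FFF(\DD))))$ is correct, so the intent is clear.
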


But with Corollaries~\ref{cor-eqIFDOM} and \ref{cor-subcfainf} we obtain more.

\begin{corollary}\label{cor-refwx}
The following categories are equivalent:
\begin{enumerate}

\item
The full subcategories $\tCFA$ and $\aDOM$ of topological CF-approximation spaces and algebraic domains, respectively.

\item
The full subcategories $\CFAM$ and $\DOM_{\bot}$ of CF-approximation spaces with Property~\eqref{cn-M} and pointed domains, respectively.

\item
The full subcategories $\tCFAM$ and $\aDOM_{\bot}$ of topological CF-approximation spaces with Property~\eqref{cn-M} and pointed algebraic domains, respectively.

\end{enumerate}
\end{corollary}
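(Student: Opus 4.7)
The plan is to derive each of the three equivalences by composing the equivalence between $\INF$ and $\DOM$ (Theorem~\ref{thm-eqIFDOM}) with the equivalence between $\CFA$ and $\INF$ (Theorem~\ref{thm-cfainf}), restricted in each case to the appropriate full subcategories. Concretely, I would combine the functors $\fun{\DDD}{\INF}{\DOM}$ and $\fun{\EEE}{\INF}{\CFA}$ together with their quasi-inverses $\FFF$ and $\CCC$, and show that these restrict well to the subcategories $\tCFA$, $\CFAM$, $\tCFAM$ on one side and $\aDOM$, $\DOM_{\bot}$, $\aDOM_{\bot}$ on the other. The natural isomorphisms $\eta$, $\tau$, $\delta$, $\gamma$ constructed earlier automatically restrict to any full subcategory closed under the functors, so no new naturality arguments are required.

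For part (1), I would argue that the composite $\fun{\DDD \circ \CCC}{\CFA}{\DOM}$ sends $\tCFA$ into $\aDOM$ and that its quasi-inverse $\fun{\EEE \circ \FFF}{\DOM}{\CFA}$ sends $\aDOM$ into $\tCFA$. The first containment follows from $\CCC[\tCFA] \subseteq \asINF \subseteq \aINF$ (Lemma~\ref{lem-funcC}) together with $\DDD[\aINF] \subseteq \aDOM$ (Lemma~\ref{lem-funcl}); the second follows from $\FFF[\aDOM] \subseteq \asINF \subseteq \aINF$ (Lemma~\ref{lem-funci}) and $\EEE[\aINF] \subseteq \tCFA$ (Lemma~\ref{lem-funcIC}). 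The restricted functors together with the restrictions of $\eta$, $\tau$, $\delta$, $\gamma$ then witness the equivalence $\tCFA \simeq \aDOM$.

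Parts (2) and (3) proceed identically but through the corresponding truth-element/pointedness refinements. For (2), the same chain of lemmas gives $\CCC[\CFAM] \subseteq \sINF_{\bt} \subseteq \INF_{\bt}$, $\DDD[\INF_{\bt}] \subseteq \DOM_{\bot}$, $\FFF[\DOM_{\bot}] \subseteq \sINF_{\bt} \subseteq \INF_{\bt}$, and $\EEE[\INF_{\bt}] \subseteq \CFAM$. For (3), combining both refinements yields $\CCC[\tCFAM] \subseteq \asINF_{\bt} \subseteq \aINF_{\bt}$, $\DDD[\aINF_{\bt}] \subseteq \aDOM_{\bot}$, $\FFF[\aDOM_{\bot}] \subseteq \asINF_{\bt} \subseteq \aINF_{\bt}$, and $\EEE[\aINF_{\bt}] \subseteq \tCFAM$.

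There is no real obstacle here: all the hard work has already been done in Corollaries~\ref{cor-eqIFDOM} and \ref{cor-subcfainf}, and the only thing one must verify is that the classifications of objects provided by Lemmas~\ref{lem-funcl}, \ref{lem-funci}, \ref{lem-funcC}, and \ref{lem-funcIC} line up pairwise under composition. The closest thing to a subtle point is checking that when passing through the intermediate category $\INF$ one can equally well use the strong variants $\asINF$, $\sINF_{\bt}$, $\asINF_{\bt}$ as intermediates (since $\FFF$ and $\CCC$ land in the strong subcategories), but this is precisely what Corollaries~\ref{cor-eqIFDOM} and \ref{cor-subcfainf} already record, so the proof reduces to invoking them.
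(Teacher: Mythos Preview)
Your proposal is correct and follows exactly the route the paper takes: the corollary is stated as an immediate consequence of Corollaries~\ref{cor-eqIFDOM} and \ref{cor-subcfainf}, and your argument simply makes explicit how the functor inclusions from Lemmas~\ref{lem-funcl}, \ref{lem-funci}, \ref{lem-funcC}, and \ref{lem-funcIC} compose to give the required restrictions. There is nothing to add; the paper itself offers no further detail beyond citing those two corollaries.
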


\section*{Acknowledgments}
The author is grateful to the anonymous referees for a careful reading of the manuscript and their constructive comments.

\bibliographystyle{amsplain}

\end{document}